\documentclass[lineno]{biometrika}

\usepackage{amsmath}


\usepackage{newtxtext}
\usepackage[subscriptcorrection]{newtxmath}
\usepackage[dvipsnames]{xcolor}

\graphicspath{{./art/}}

\usepackage[plain,noend, linesnumbered]{algorithm2e}

\makeatletter
\renewcommand{\algocf@captiontext}[2]{#1\algocf@typo. \AlCapFnt{}#2} 
\def\@algocf@capt@plain{top}
\renewcommand{\algocf@makecaption}[2]{%
  \addtolength{\hsize}{\algomargin}%
  \sbox\@tempboxa{\algocf@captiontext{#1}{#2}}%
  \ifdim\wd\@tempboxa >\hsize
    \hskip .5\algomargin%
    \parbox[t]{\hsize}{\algocf@captiontext{#1}{#2}}
  \else%
    \global\@minipagefalse%
    \hbox to\hsize{\box\@tempboxa}
  \fi%
  \addtolength{\hsize}{-\algomargin}%
}
\makeatother



\usepackage{hyperref}

\makeatletter
\def\lstAZ{A, B, C, D, E, F, G, H, I, J, K, L, M, N, O, P, Q, R, S, T, U, V, W, X, Y, Z}
\def\lstaz{a, b, c, d, e, f, g, h, i, j, k, l, m, n, o, p, q, r, s, t, u, v, w, x, y, z}
\def\lstAZBB{B, C, D, E, F, G, H, I, J, K, L, M, N, O, Q, R, T, U, V, W, X, Y, Z}
\newcommand{\MkScr}[1]{\expandafter\def\csname s#1\endcsname{\mathscr{#1}}}
\newcommand{\MkUp}[1]{\expandafter\def\csname u#1\endcsname{\mathrm{#1}}}
\newcommand{\MkFrak}[1]{\expandafter\def\csname f#1\endcsname{\mathfrak{#1}}}
\newcommand{\MkCal}[1]{\expandafter\def\csname c#1\endcsname{\mathcal{#1}}}
\newcommand{\MkBB}[1]{\expandafter\def\csname #1#1\endcsname{\mathbb{#1}}}

\@for\i:=\lstAZ\do{%
	\expandafter\MkScr \i  %
	\expandafter\MkFrak \i  %
	\expandafter\MkUp \i %
	\expandafter\MkCal \i  %
		  }    
\@for\i:=\lstaz\do{%
	\expandafter\MkUp \i   }    
	
\@for\i:=\lstAZBB\do{%
	\expandafter\MkBB \i     }
\makeatother
\newcommand{\PP}{\mathbb{P}}
\newcommand{\dd}{\mathrm{d}}
\newcommand{\ind}{\mathbf{1}}

\graphicspath{{Fig/}}

\begin{document}

\jname{Biometrika}
\jyear{2025}
\jvol{112}
\jnum{1}
\cyear{2025}
\accessdate{Advance Access publication on 14 February 2025}


\markboth{S. Grazzi \& G. Zanella}{Parallel computations for  Metropolis Markov chains with Picard maps}

\title{Parallel computations for  Metropolis Markov chains with Picard maps}

\author{S. Grazzi}
\affil{Department of Decision Sciences and BIDSA, Bocconi University\\ Via Roentgen 1, 20136, Milan, Italy
\email{sebastiano.grazzi@unibocconi.it}}

\author{G. Zanella}
\affil{Department of Decision Sciences and BIDSA, Bocconi University\\ Via Roentgen 1, 20136, Milan, Italy
\email{giacomo.zanella@unibocconi.it}}

\maketitle
\thispagestyle{empty} 

\begin{abstract}
We develop parallel algorithms for simulating zeroth-order (aka gradient-free) Metropolis Markov chains based on the Picard map.  For Random Walk Metropolis Markov chains targeting log-concave distributions $\pi$ on $\mathbb{R}^d$, our algorithm  generates samples close to $\pi$ in $\cO(\sqrt{d})$ parallel iterations with $\cO(\sqrt{d})$ processors, therefore speeding up the convergence of the corresponding sequential implementation by a factor $\sqrt{d}$. Furthermore, a modification of our algorithm generates samples from an approximate measure $ \pi_r$ in $\cO(1)$ parallel iterations and $\cO(d)$ processors. We empirically assess the performance of the proposed algorithms in high-dimensional regression problems, an epidemic model where the gradient is  unavailable and a real-word application in precision medicine. Our algorithms are straightforward to implement and may constitute a useful tool for practitioners seeking to  sample from a prescribed distribution $\pi$ using only point-wise evaluations of $\log\pi$ and parallel computing.
\end{abstract}

\begin{keywords}
parallel computing, Metropolis-Hastings, sampling, log-concavity, gradient-free, zeroth-order
\end{keywords}

\section{Introduction}
\subsection{Zeroth-order Markov chain Monte Carlo}
Markov chain Monte Carlo (MCMC) methods are the computational workhorse of many scientific areas. These methods consist of simulating a Markov chain $(X_0,X_1,\dots)$ whose limiting measure coincides with a probability distribution of interest $\pi$, such as a posterior distribution in Bayesian statistics. Then, ergodic averages $\frac{1}{N} \sum_{i=1}^N g(X_i)$ can be used to estimate expectations $ \int g(x)\pi(\dd x)$ for any integrable function $g$, see e.g.\ \cite{roberts2004general}.

In this paper we focus on \emph{zeroth-order} (aka gradient-free or derivative-free) MCMC methods, which only require point-wise evaluations of $\log\pi$ up to an additive unknown constant, as opposed to first-order methods, which require computing the gradient of $\log \pi$. 
Zeroth-order methods are commonly used when gradient information is unavailable, for either practical, computational, or mathematical reasons \citep[cf. ][Chapter 1]{conn2009introduction}.
This may occur for example in Bayesian applications with black-box likelihood evaluations   (e.g. in form of proprietary code, code written by domain experts, or models requiring complex numerical solvers) or when the gradient is not well-defined such as in models with censored data (e.g. the 
SIR model considered in Section~\ref{sec: numerical illustrations}), in Pseudo-Marginal MCMC \citep{508a6095-c33f-3d5d-9a9a-f004c0c0a59a} and in Approximate Bayesian Computation (ABC) MCMC setting \citep{marjoram2003markov}. 
See e.g.\ \citet[][]{conn2009introduction, carrillo2022consensus, pavliotis2022derivative, bourabee2025nounderrunsamplerlocallyadaptivegradientfree, grumitt2024flow} and other references on zeroth-order sampling and optimization for more discussion about applications of gradient-free methods.

For gradient-free sampling algorithms targeting log-concave distributions, the best complexity result w.r.t.\ the number of dimensions $d$ of the target $\pi$ is $\cO(d)$, and it is achieved by classical Random Walk Metropolis  \citep{dwivedi2018log, andrieu2024explicit}. Here and in what follows, the $\cO$ notation ignores constants and logarithmic terms with respect to $d$ and \emph{complexity} refers to the number of point-wise evaluations of $\log\pi$ required by the algorithm in order to obtain one sample from $\pi$ up to a fixed error. This complexity matches the one of gradient-free convex optimization \citep{nesterov2017random}.

\subsection{Parallel sampling}
In this paper, we are interested in parallelization strategies for MCMC algorithms. This is particularly interesting given the increasing availability of parallel computing architectures -- such as clusters of CPUs and Graphics Processing Units (GPUs)  --  which play an increasing central role in computation-based sciences. 
In the MCMC context, the simplest approach to parallelize computations is to run multiple independent chains. 
This is straightforward to implement in parallel but does not reduce the convergence period (also called burn-in or warm-up) of each chain \citep[][]{rosenthal2000parallel}. 
Alternatively, one can consider algorithms which utilize $K > 1$ parallel processors where, at each iteration, each processor evaluates independently and simultaneously $\log \pi$ for different parameter values and use the resulting information to speedup convergence to stationarity. 
Examples include \emph{pre-fetching} methods \citep{brockwell2006parallel}, which compute $\log \pi$ at each potential state of the Markov chain for $j \ge 1$ steps ahead, and \emph{Multiple-try} \citep{bookFrenkel, multipletry} which simulates $K$ proposal states at each iteration and computes $\log \pi$ in each proposed state before deciding which state to move to. Unfortunately, at least for zeroth-order methods and log-concave targets, 
these methods achieve only a $\cO(\log(K))$ speedup factor, as recently proven in \citep{pozza2024fundamental}.

An alternative strategy to parallelize the simulation of Markov chains is provided by the \emph{Picard recursion} detailed in Section~\ref{sec: Picard map} below, where the simulation of a Markov chain is reformulated as a \emph{fixed-point problem} over trajectories. Methods based on the Picard recursion have recently witnessed a growing interest, particularly for first-order (i.e.\ gradient-based) Markov chains for both MCMC methods \citep{yu2024parallelized,anari2024fast} and generative models \citep{shih2023parallel, zhou2024parallel, NEURIPS2024_f162fa05, gupta2024faster}. However, existing methods and analyses focus on first-order unadjusted settings where the function  $f$  defined in \eqref{eq: 1} below is smooth with respect to its first argument, which is not the case for Metropolis-Hastings algorithms -- see \eqref{eq:zero_f}. Our work avoids the classical smoothness conditions on $f$ and instead focuses on zeroth-order Metropolis-type Markov chains, where $f$ is piecewise constant with respect to its first argument. This setting leads to a substantially different analysis and convergence results for the induced Picard map, as highlighted in  Section~\ref{sec: technical comparison picard maps} below.
\subsection{Structure of the paper and main results}

Section~\ref{sec: methodology} introduces the Picard map and our proposed \emph{Online Picard algorithm}. Section~\ref{sec: Complexity of Picard algorithms} presents its theoretical analysis. In particular, Corollary~\ref{corollary: complexity rwm}  shows that, given a log-concave target density $\pi$ on $\RR^d$ and $K$ processors, with $K$ up to  $\mathcal{O}(\sqrt{d})$, the Online Picard algorithm applied to a Random Walk Metropolis Markov chain obtains a sample from $\pi$ in $\cO(d/K)$ iterations, thus accelerating zeroth-order sequential algorithms by a factor of $K$ and achieving optimal speedup (i.e.\ linear in $K$).
To the best of our knowledge, this is the first parallel zeroth-order MCMC scheme with provably linear speedup in the canonical log-concave set-up.
Section~\ref{sec: Metropolis within gibbs} extends the convergence results to Picard algorithms applied to Metropolis within Gibbs Markov chains, which we empirically observe to perform better. 
Section~\ref{sec: approximate online-picard} proposes an \emph{Approximate Online Picard algorithm} that scales well also when $K\gg \sqrt{d}$, at the price of introducing a bias in the invariant distribution, whose size is assessed through numerical simulations. 
For both the approximate and exact schemes, the empirical performance observed in Section~\ref{sec: numerical illustrations} closely matches the theory developed and shows substantial speedups (up to over 100 times) relative to the corresponding sequential algorithms. All proofs are provided in the appendix.

These results, combined with the simplicity of the proposed algorithms, offer practitioners promising directions to parallelize computations for Bayesian problems with black-box, expensive models and no access to gradient information.

\section{Sampling algorithms based on Picard maps}\label{sec: methodology}
\subsection{Picard map for Markov chain simulation}\label{sec: Picard map}
Consider a discrete-time Markov chain $(X_i)_{i\in \NN}$ on $\cX \subseteq  \RR^d$ defined as 
\begin{align}
    \label{eq: 1}
X_{i+1} 
&= X_{i} + f(X_{i}, W_{i})\,, 
&i=0,1,\dots
\end{align}
for some initial position $X_0\in\cX$, measurable function $f\colon \cX\times\mathcal{W} \to \RR^d$ and random innovations $W_0,W_1,\dots \stackrel{\text{i.i.d.}}\sim \nu$ on $\mathcal{W}$.
The canonical way to simulate the first $K$ steps of the chain, $X=(X_0,\dots,X_K)$, is to go through the recursion in \eqref{eq: 1} for $i=0,1,\dots,K-1$.
 This procedure requires $K$ sequential calls to the function $f$. 
Alternatively, one can define  $X$  through the set of equations
\begin{align*}
X_{i+1} &= X_0 + \sum_{\ell=0}^{i}  f(X_\ell, W_\ell)\,,
&i=0,1,\dots\,.
\end{align*}
Following this representation, one can define the Picard map
(named after the celebrated \emph{Picard–Lindel\"of theorem}), $\Phi\colon \cX^{K+1}\times\mathcal{W}^K\mapsto\cX^{K+1}$, which takes as input a trajectory $X\in\cX^{K+1}$ and the innovations $W=(W_0,\dots,W_{K-1})\in\mathcal{W}^K$, and returns a new trajectory $X'=(X'_0,\dots,X'_K) =  \Phi(X,W)$ defined as
\begin{align}
    \label{eq: picard recursion 0}
    X_{i}' 
    &= \Phi_i(X,W) =
    \begin{cases}
        X_0 & i = 0 \\
        X_0 + \sum_{\ell=0}^{i-1} f(X_\ell, W_\ell) & 0<i\leq K\,.
    \end{cases}
\end{align}
Given $W \in \cW^K$, the map $\Phi$ is deterministic and its fixed point, i.e.\ the trajectory $X \in\cX^{K+1}$ satisfying $X=\Phi(X,W)$, is unique and coincides with the solution to \eqref{eq: 1}. 

This motivates computing $X$ as the limit of the recursion 
$X^{(j)}=\Phi(X^{(j-1)},W)$ for $j=1,2,\dots$, with  initialization $X^{(0)} \in \cX^{K+1}$. See Figure~\ref{fig:picard-illustration} for an illustration. A canonical choice for initialization is the constant trajectory $X^{(0)}=(X_0,\dots,X_0)\in\cX^{K+1}$.
The computational advantage of the Picard recursion over the classical one defined in \eqref{eq: 1} is that the $K$ calls to the function $f$ in \eqref{eq: picard recursion 0} can be executed in parallel. 
This can be useful when computing $f$ is expensive.

The Picard recursion is guaranteed to converge to its fixed point in at most $K$ iterations because, by construction, $X^{(j)}$ coincides with the solution of \eqref{eq: 1} in the first $j$ coordinates. 
However, since the sequential recursion in \eqref{eq: 1} also requires $K$ iterations to compute $X$, relevant applications of the Picard recursion are those where $X^{(j)}$ gets close (or exactly equal) to its fixed point for $j \ll K$ iterations, thus exploiting parallel computing to shorten the overall runtime.

\begin{figure}
    \centering
    \includegraphics[width=0.99\linewidth]{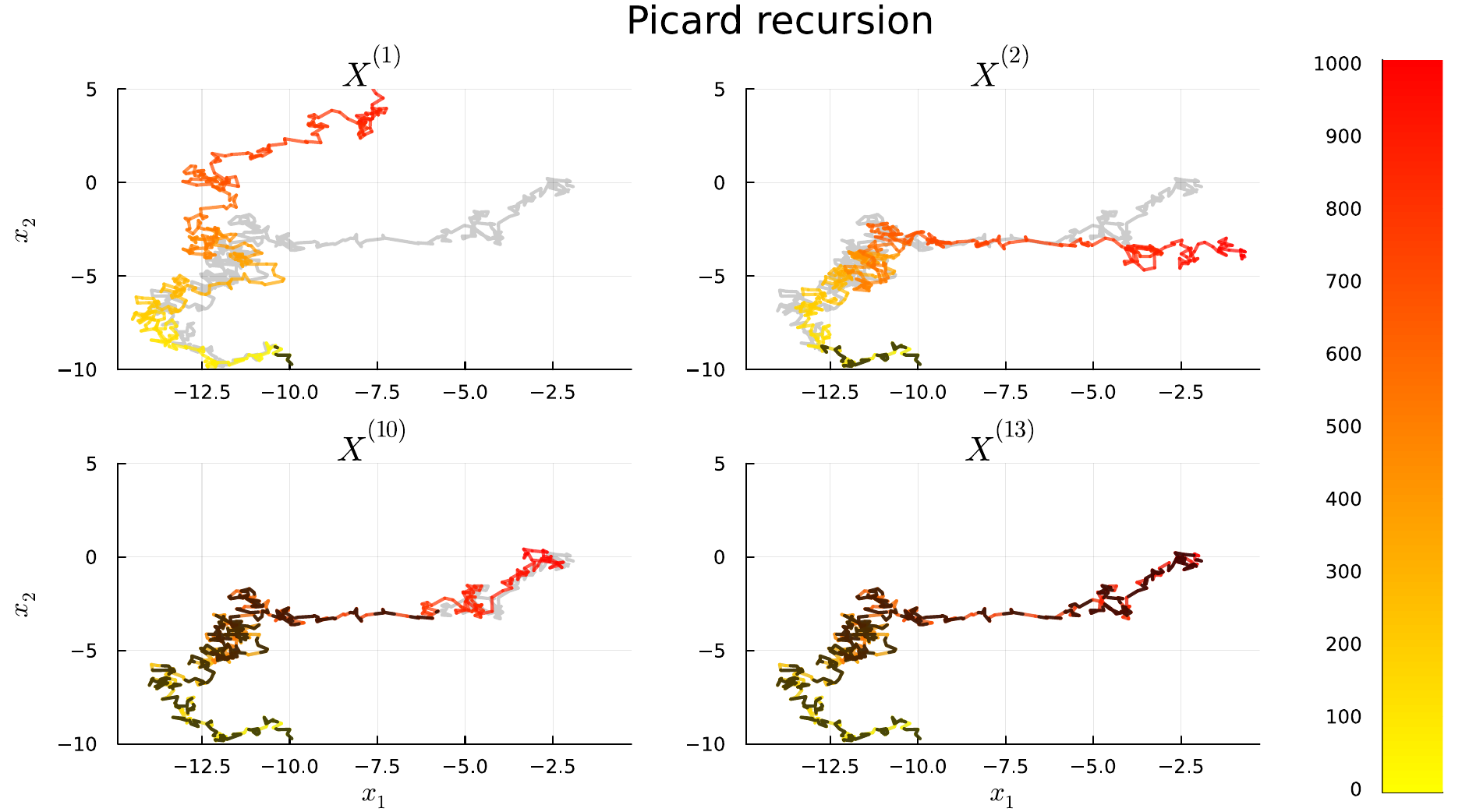}
    \caption{Traces on the $(x_1, x_2)$-plane of the Picard recursion $X^{(j)}$  for $K = 1000$ and $j = 1,2,10,13$ (top left - bottom right). 
    The time-index of each sequence is shown with a yellow-red
    gradient color. The underlying Markov chain is a $d = 100$ dimensional Random Walk Metropolis with stepsize $\xi = 2/\sqrt{d}$ targeting a standard Gaussian distribution. The gray line is the fixed point (i.e. the output of the sequential algorithm). The dashed line corresponds to the part of the trajectory that has converged to its fixed point.
    }
    \label{fig:picard-illustration}
\end{figure}

\subsection{Picard algorithms with piecewise constant map $\Phi$}\label{sec: classical Picard algorithm}

The Picard map applied to standard zeroth-order Markov chains (such as those described in Section \ref{subsec: mh kernels}) differs from those of classical applications because, for every noise variable $W\in \cW^{K}$, \emph{the Picard map $X \mapsto \Phi(X, W)$ is piecewise constant.}
Other relevant examples satisfying this condition include Markov chains defined on discrete spaces. 
A piecewise constant map $\Phi$ has many  implications. For example, it implies that the fixed point of $\Phi$ can be reached \emph{exactly} after $j < K$ parallel iterations. When the fixed point is reached, the output of the Picard recursion matches that of the sequential counterpart, with no additional bias induced by the Picard recursion. 
This is in contrast to classical applications of Picard recursions, where stopping the Picard recursion before $K$ parallel iterations introduces bias and requires tuning of an additional error parameter. 
In what follows, we exploit this property to develop parallel algorithms that allocate computational resources more efficiently. 

As above, assume we seek to simulate $N$ steps of the Markov chain defined in \eqref{eq: 1}, $(X_0,\dots,X_N)$, using $K\le N$ independent processors. In the classical Picard algorithm, one updates the first $K$ steps, $(X_1,\dots,X_K)$, with the Picard recursion in \eqref{eq: picard recursion 0} until a stopping criterion is met, then proceeds with the next $K$ ones, $(X_{K+1},\dots,X_{2K})$, and so on until reaching $X_N$. See Algorithm~\ref{alg: PA} for the associated pseudocode in the case of piecewise constant maps.

Alternatively, one can monitor the coordinates that have already reached their fixed point while the algorithm runs, and avoid wasting computational power in updating them further, directly allocating available processors to subsequent coordinates.
This leads to the following \emph{Online Picard algorithm}:
given a sequence of noise $W=(W_0,W_1,\dots)\in \cW^{\infty}$ and initialization $X^{(0)} \in \cX^{\infty}$, define the $j$-th parallel recursion of the Online Picard algorithm mapping $X^{(j)}\in\cX^{\infty}$ to $X^{(j+1)}\in\cX^{\infty}$
as
\begin{align}
\label{eq: online picard 1}
X_{i}^{(j+1)} &= 
\begin{cases}
X_{i}^{(j)} &  i \le L^{(j)},\\
 \Phi_{i - L^{(j)}}\left(X_{L^{(j)}:U^{(j)}}^{(j)}, W_{L^{(j)}:U^{(j)}-1}\right) & L^{(j)}< i \le U^{(j)}\,,\\
 X_{U^{(j)}}^{(j+1)} &  i > U^{(j)},
\end{cases}
\end{align}
where
$L^{(0)}= 0$, $U^{(j)} = L^{(j)} + K$ and 
\begin{align}
L^{(j+1)}&= \sup\{i \le U^{(j)} \colon f(X^{(j)}_{\ell},W_{\ell}) =  f(X^{(j+1)}_{\ell},W_{\ell}) \text{ for }  0\leq \ell< i\}\nonumber\\
&= \sup\{i \le U^{(j)} \colon f(X_{\ell},W_{\ell}) =  f(X^{(j)}_{\ell},W_{\ell}) \text{ for }  0\leq \ell< i\}\nonumber\\
&= \sup\{i \le U^{(j)} \colon X_\ell =  X^{(j)}_\ell \text{ for }  0 \le \ell\le i\},
\qquad\hbox{for }j\geq 0\,,
\label{eq: online picard 2}
\end{align}
where the equivalence of the formulations above follows from \eqref{eq: picard recursion 0}.

The Online Picard algorithm applies the Picard map to the indices $i\in\{L^{(j)}, L^{(j)}+1,\dots,U^{(j)}\}$, where $L^{(j)}$ is an increasing sequence which monitors the position of the last index $i$ such that $X^{(j)}_{0:i}$ has reached its fixed point. 
The algorithm is then stopped as soon as $L^{(j)} \ge N$, see also the pseudocode in Algorithm~\ref{alg: OPA}. While \eqref{eq: online picard 1} is formally defined as a map on $\cX^{\infty}$, the algorithm in practice only operates on $K$ coordinates of the Markov chain 
as detailed in Appendix~\ref{app: pseudo-code} in the supplementary material. The Online Picard algorithm is always more efficient than the classical Picard algorithm (Algorithm~\ref{alg: PA}), since it avoids the vacuous updates of $X^{(j)}_{i}$ for $i\leq L^{(j)}$, see Figure~\ref{fig: illustration picard vs online picard} for an illustration.

\begin{algorithm}[!h]
\caption{Picard algorithm.}  \label{alg: PA}
\KwIn{$N, K \in \NN$, $X^{(0)}_{0:K} \in \cX^{K+1}$.}
 Initialize $j = 0$, $\ell = 0$ and $W_0, W_1,\dots, W_{N-1} \overset{\text{i.i.d.}}{\sim}\nu$\;
 \While{$\ell K < N$}{
 $X^{(j+1)}_{\ell K : (\ell + 1) K} = \Phi( X^{(j)}_{\ell K : (\ell + 1) K}, W_{\ell K : [(\ell + 1) K -1]})$;  \hfill{This step uses parallel processors}\;
$X_i^{(j+1)} = X^{(j)}_{i}$, for $i < \ell K$\;
 \If{$X^{(j+1)}_{\ell K : (\ell + 1) K} = X^{(j)}_{\ell K : (\ell + 1) K}$}{
 $\ell = \ell + 1$\;
 }
$j = j +1$\;
 }
\KwOut{$X^{(j)}_{0:N}$.}
\end{algorithm}

\begin{algorithm}[!h]
\caption{Online Picard algorithm}  \label{alg: OPA}
\KwIn{$N, K \in \NN$, $X^{(0)}_{0:K} \in \cX^{K+1}$}
 Initialize $j = 0$, $L^{(0)} = 0$ and $W_0, W_1,\dots, W_{N-1} \overset{\text{i.i.d.}}{\sim}\nu$\;
\While{$L^{(j)} < N$}{
$U^{(j)} = L^{(j)} + K$\;
$X^{(j+1)}_{L^{(j)}:U^{(j)}} = \Phi( X^{(j)}_{L^{(j)}:U^{(j)}}, W_{L^{(j)}:U^{(j)}-1})$; \hfill{This step uses parallel processors}\;
$L^{(j+1)}= \sup\{i \le U^{(j)} \colon f(X^{(j+1)}_{\ell},W_{\ell}) =  f(X^{(j)}_{\ell},W_{\ell}) \text{ for }  0\leq \ell< i\}$\;
$X_i^{(j+1)} = X^{(j+1)}_{U^{(j)}}$, for $i > U^{(j)}$, $X_i^{(j+1)} = X^{(j)}_{i}$, for $i \le L^{(j)}$\;
$j = j + 1$\;
}
\KwOut{$X^{(j)}_{0:N}$.}
\end{algorithm}

\begin{figure}
    \centering
    \includegraphics[width=0.4\linewidth]{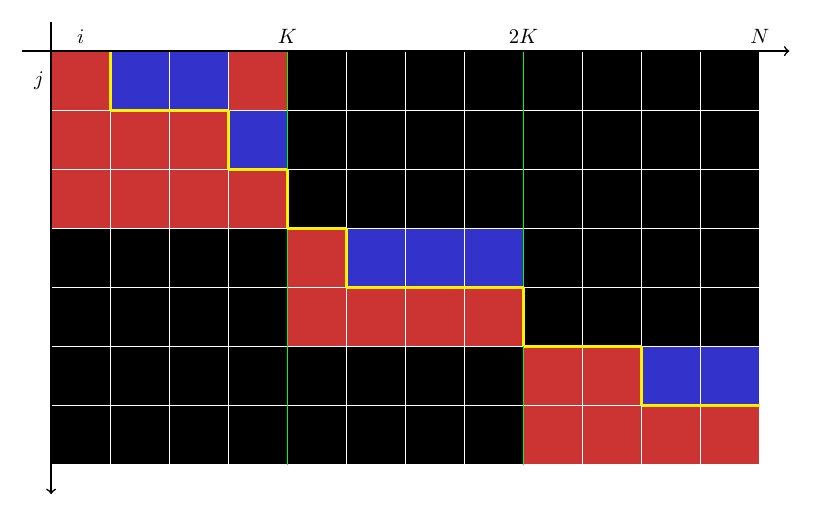}
     \includegraphics[width=0.4\linewidth]{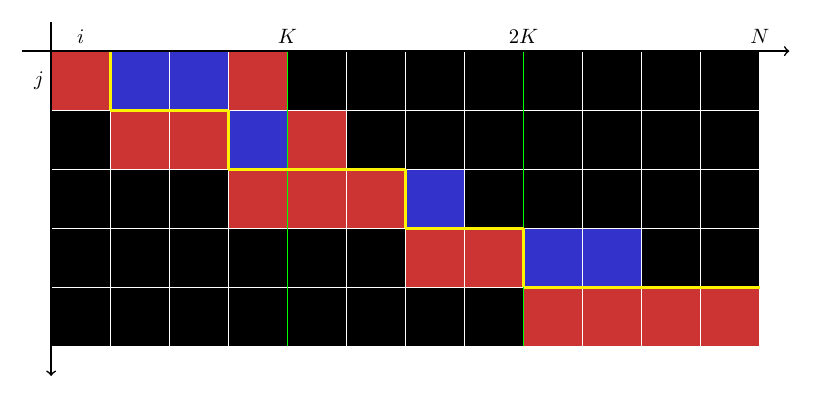}
    \caption{Illustration of the classical Picard algorithm (Algorithm~\ref{alg: PA}) applied sequentially  to every block of length $K$ (left grid) vs. the Online Picard algorithm (Algorithm~\ref{alg: OPA}, right grid). The color of the $(j,i)$ entry of each grid represents the state of the $i$th step of the Markov chain at the $j$th Picard recursion: red for $f(X^{(j)}_i, W_i) = f(X^{(j-1)}_i, W_i)$ (correct guess), blue for $f(X^{(j)}_i, W_i) \ne f(X^{(j-1)}_i, W_i)$. Black for the increments for which  no processor has been allocated for computing the function $f$. Here, $K = 4$, $N = 3K$. Yellow boundary line in correspondence of $L^{(j)} = \sup\{i \le U^{(j)}\colon X_\ell^{(j)} = X_\ell^{(j-1)}; \ell \le i\}$.}
    \label{fig: illustration picard vs online picard}
\end{figure}

In the next two sections, we study the distribution of the random variables $(L^{(j)})_{j \in \NN}$ for the Online Picard algorithm applied to popular zeroth-order Markov chains under log-concavity, which allows to deduce complexity statements about the algorithm in that setting.

\section{Complexity results under log-concavity}\label{sec: Complexity of Picard algorithms}

\subsection{Assumption on the target}\label{subsec: smoothness}
We study convergence of the Picard map $\Phi$ under the following smoothness assumption on $\pi$. Note that the analysis of the convergence of $\Phi$ does not require strong convexity of $V$. 

\begin{assumption}[Smooth and Hessian Lipschitz target]
\label{ass: 1} 
The target distribution $\pi$ has a Lebesgue density in $\cX=\RR^d$ proportional to $\exp(-V)$ where $V$ is a twice-differentiable, $L$-smooth and $(\gamma L^{3/2}d^{1/2})$-strongly Hessian-Lipschitz for some $\gamma\ge 0$, meaning that the Hessian of $V$ satisfies
\begin{align*}
 -L \mathrm{I}_d &\preccurlyeq \nabla^2 V(x) \preccurlyeq  L \mathrm{I}_d\, , &x \in \cX    \\
 -M \|y-x\| &\preccurlyeq \nabla^2 V(y) - \nabla^2 V(x) \preccurlyeq  M \|y-x\|\,, &x,\, y \in \cX
\end{align*}
for some matrix $M$ with $\mathrm{Tr}(M) \le \gamma L^{3/2}d^{1/2}$, where $A \preccurlyeq B$ if $A-B$ is positive semi-definite for two symmetric matrices $A,B$.
\end{assumption}
The strongly Hessian-Lipschitz condition implies the more classical Hessian-Lipschitz condition, $\|\nabla^2 V(y) - \nabla^2 V(x)\| \le m^\star \|y-x\|$, where $m^\star = \lambda_{max}(M)$, while it is weaker than assuming a bounded Frobenius norm of the third derivative, as previously used for analysis of MCMC algorithms \citep[cf. ][]{chen2023does}. This assumption holds, for example, for standard Bayesian logistic regression models with Gaussian priors, see Section~\ref{sec:logistic_regression} in the supplement for details. In general, we acknowledge that Assumption~\ref{ass: 1} can be restrictive 
and, while needed for the theoretical analysis, we do not expect it to be necessary for the proposed Picard algorithm to perform well in practice. 
For example, the numerical simulations in Section~\ref{sec: numerical illustrations} suggest that our theoretical results can be predictive of practical performances beyond the assumptions required to derive them (see e.g.\ the examples on Poisson regression models therein).

\subsection{Metropolis-Hastings kernels}\label{subsec: mh kernels}
In this work, we focus on two popular zeroth-order Metropolis-Hasting Markov chains: \emph{Random Walk Metropolis (RWM)} \citep{metropolis1953equation} and \emph{Metropolis within Gibbs (MwG)} \citep{geman1984stochastic}. 
Both MCMC have piecewise constant Picard maps and can be simulated with the Online Picard algorithm.
We start with the RWM case and defer the MwG case to Section~\ref{sec: Metropolis within gibbs}.

\begin{assumption}[RWM]
\label{ass: 1.2} 
$X_{0}, X_1,\dots$ is a RWM Markov chain on  $\cX = \RR^d$ with stepsize $\xi = h/(\sqrt{L d})$ for some $h > 0$, i.e.\ it follows \eqref{eq: 1} with $W=(Z,U)$, $U \sim \mathrm{Unif}([0,1])$ uniformly distributed, $Z \sim \cN(0,  h^2I_d/(Ld))$ independently of $U$,  
\begin{align}\label{eq:zero_f}
f(x, W)&= Z B(x, U, Z)
\quad\hbox{with }B(x, u, z) = \ind\left(\pi(x + z)/\pi(x) \ge u\right)
\end{align}
and $\ind(\cdot)$ being the indicator function. 
\end{assumption}

By \eqref{eq: online picard 2} and \eqref{eq:zero_f} combined we obtain
\begin{align*}
L^{(j+1)} &= \sup\{i \le U^{(j)} \colon B(X_\ell, U_\ell, Z_\ell)
=  
B(X^{(j)}_\ell, U_\ell, Z_\ell) \text{ for }  0 \le \ell< i\}\,, &j \ge 0\,,
\end{align*}
which means that, in the RWM case, the Picard convergence is entirely dependent on the probability of `correctly guessing' the $\{0,1\}$-valued random variables
$B(X_i, U_i, Z_i)$.

\subsection{Main results}\label{sec: main results}
Let $(X_i)_{i=0,1,\dots}$ and $(X^{(j)}_i)_{i,j=0,1,2,\dots}$ be as in \eqref{eq: 1} and \eqref{eq: online picard 1}, respectively.

\begin{theorem}[Probability of an incorrect guess]\label{thm: main theorem 1}
Under  Assumptions~\ref{ass: 1}-\ref{ass: 1.2}, for all $x_0 \in \cX$, $w_0 \in \cW$, we have
\begin{align}
    \label{eq: bounding probability of incorrect guess}
     \PP(f(X^{(j)}_i,  W_i) \ne f(X_i, W_i)\mid X_0 = x_0, W_0 = w_0) &\le c_0\frac{i}{d}+\delta(d)+2^{-j}
\end{align}
for all $0 \le i \le \min(d,K)$ and $j\in\{0,1,\dots\}$, with $c_0 = 15 h^4 (\sqrt{\frac{2}{\pi}} + \frac{h \gamma}{2})^2 $,
$\delta(d) = \frac{5}{3}\exp(-3d/2)$. The probability in \eqref{eq: bounding probability of incorrect guess} is with respect to the randomness of $(W_1,W_2,\dots)$. 

\end{theorem}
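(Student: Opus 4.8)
The plan is to reduce the disagreement probability to a difference of Metropolis acceptance probabilities, bound that difference by a first-order Taylor expansion of $\log\pi$, and then close a self-consistent recursion in the Picard index $j$ whose square-root structure simultaneously produces the stationary profile $c_0 i/d$ and the geometric transient $2^{-j}$. Writing $p_{i,j}$ for the probability in \eqref{eq: bounding probability of incorrect guess}, I would first condition on $\cF_i^-=\sigma(W_0,\dots,W_{i-1})$ (which contains the conditioning event $\{X_0=x_0,W_0=w_0\}$), under which $X_i$ and $X^{(j)}_i$ are fixed and $W_i=(Z_i,U_i)$ is fresh. Since $f(x,W_i)=Z_iB(x,U_i,Z_i)$ and $Z_i\neq 0$ almost surely, a disagreement occurs exactly when the two acceptance indicators differ; integrating out $U_i$ gives, with $\alpha(x,z)=\min(1,\pi(x+z)/\pi(x))$ and $\Delta_{i,j}=X_i-X^{(j)}_i$,
\[
\PP\big(f(X^{(j)}_i,W_i)\neq f(X_i,W_i)\mid \cF_i^-\big)=\mathbb{E}_{Z_i}\big|\alpha(X_i,Z_i)-\alpha(X^{(j)}_i,Z_i)\big|.
\]

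For the per-step bound I use that $t\mapsto\min(1,e^t)$ is $1$-Lipschitz, so with $\psi_z(x)=V(x)-V(x+z)$ one has $|\alpha(X_i,z)-\alpha(X^{(j)}_i,z)|\le|\psi_z(X_i)-\psi_z(X^{(j)}_i)|$, and the latter equals $-\int_0^1\!\int_0^1\Delta_{i,j}^\top\nabla^2V(X^{(j)}_i+s\Delta_{i,j}+tz)\,z\,\dd t\,\dd s$. Splitting the Hessian as $\nabla^2V(X^{(j)}_i)$ plus a Hessian-Lipschitz remainder, the leading term is a centred Gaussian in $z$ whose mean absolute value is $\sqrt{2/\pi}\,\tfrac{h}{\sqrt{Ld}}\|\nabla^2V(X^{(j)}_i)\Delta_{i,j}\|\le\sqrt{2/\pi}\,\tfrac{h\sqrt L}{\sqrt d}\|\Delta_{i,j}\|$, while the remainder is controlled through $\mathbb{E}[z^\top Mz]=\tfrac{h^2}{Ld}\mathrm{Tr}(M)\le\tfrac{h^2\gamma\sqrt L}{\sqrt d}$, which is exactly where the strong Hessian-Lipschitz constant $\gamma$ enters and contributes the additive factor $\tfrac{h\gamma}{2}$. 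This produces $p_{i,j}\lesssim\tfrac{h\sqrt L}{\sqrt d}\big(\sqrt{2/\pi}+\tfrac{h\gamma}{2}\big)\mathbb{E}\|\Delta_{i,j}\|$, up to a concentration correction handled next.

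The second ingredient controls $\mathbb{E}\|\Delta_{i,j}\|\le(\mathbb{E}\|\Delta_{i,j}\|^2)^{1/2}$ through $\Delta_{i,j}=\sum_{\ell<i}Z_\ell\big(B(X_\ell,U_\ell,Z_\ell)-B(X^{(j-1)}_\ell,U_\ell,Z_\ell)\big)$. Rather than expand the square directly, since the summands are correlated across $\ell$, I would use the martingale decomposition relative to $(\cF_\ell^-)_\ell$: the martingale part has orthogonal increments, so its second moment is exactly $\sum_{\ell<i}\mathbb{E}[\|Z_\ell\|^2|D^{(j-1)}_\ell|]\approx\tfrac{h^2}{L}\sum_{\ell<i}p_{\ell,j-1}$, whereas the predictable drift $\sum_\ell\mathbb{E}[Z_\ell D^{(j-1)}_\ell\mid\cF_\ell^-]$ is of strictly lower order. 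Confining the analysis to the high-probability event $\{\|Z_\ell\|^2\lesssim h^2/L\}$ via a $\chi^2_d$ tail bound supplies the additive term $\delta(d)=\tfrac53 e^{-3d/2}$, and overall gives $\mathbb{E}\|\Delta_{i,j}\|^2\lesssim\tfrac{h^2}{L}\sum_{\ell<i}p_{\ell,j-1}$.

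Combining the two ingredients yields the square-root recursion $p_{i,j}\lesssim\tfrac{h^2}{\sqrt d}\big(\sqrt{2/\pi}+\tfrac{h\gamma}{2}\big)\big(\sum_{\ell<i}p_{\ell,j-1}\big)^{1/2}+\delta(d)$, which I close by induction on $j$. The underlying fixed-point map has a stationary profile proportional to $i/d$, and matching it against $c_0 i/d$ fixes $c_0=\Theta\big(h^4(\sqrt{2/\pi}+\tfrac{h\gamma}{2})^2\big)$; the stated constant $15$ leaves ample room for the slack accrued in the Cauchy–Schwarz, drift, and concentration steps. The transient $2^{-j}$ is the crucial subtlety: it does \emph{not} arise from splitting $\sqrt{a+b}\le\sqrt a+\sqrt b$, which would only give $2^{-j/2}$, but from tracking the \emph{multiplicative} deviation from the stationary profile — because $\sqrt{1+\epsilon}\le 1+\tfrac12\epsilon$, each Picard iteration halves this deviation, producing the factor $2^{-j}$. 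The main obstacles I anticipate are (i) bounding the predictable drift so that the correlated cross-terms in $\mathbb{E}\|\Delta_{i,j}\|^2$ are provably lower order, and (ii) the constant bookkeeping turning the heuristic $\Theta(\cdot)$ into the explicit $c_0$ and $\delta(d)$, including a monotonicity argument ($p_{\ell,j}$ nondecreasing in $\ell$) to rewrite $\sum_{\ell<i}p_{\ell,j-1}$ as a clean multiple of the inductive bound; a final, lighter step reconciles the online recursion of \eqref{eq: online picard 1} with this analysis, since freezing converged coordinates can only set more increments exactly equal, so the online disagreement probability is dominated by that of the plain Picard recursion analysed above.
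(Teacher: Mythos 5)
Your overall architecture matches the paper's: a per-step Lipschitz bound reducing the disagreement probability to $\mathbb{E}\|X_i-X^{(j)}_i\|$ (your Taylor expansion, the $\sqrt{2/\pi}$ folded-normal mean for the gradient term and the trace bound $\mathrm{Tr}(M)\le\gamma L^{3/2}d^{1/2}$ for the Hessian remainder are exactly Lemma~\ref{lemma: 1}); a second-moment bound on $\|X_i-X^{(j)}_i\|$ in terms of $\sum_{\ell<i}p_{\ell,j-1}$ (Lemma~\ref{lemma: 2}); and the square-root recursion closed at its fixed point (Lemmas~\ref{lemma: recursion for max probability} and \ref{lemma:recursion}). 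Your explanation of the $2^{-j}$ is also the right mechanism: the paper gets it from Banach's fixed point theorem applied to $g(a)=b\sqrt{a+\epsilon}$, whose derivative at the fixed point $a^*\approx b^2$ is $\le 1/2$ — precisely your ``each iteration halves the multiplicative deviation'' linearization. The $\max_{\ell\le i}$ bookkeeping you anticipate is how the paper turns $\sum_{\ell<i}p_{\ell,j-1}$ into $i\,A^{(j-1)}$.

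The genuine gap is in your second ingredient. You propose a Doob decomposition of $\Delta_{i,j}=\sum_{\ell<i}Z_\ell D_\ell$ and assert the predictable drift is ``of strictly lower order.'' The increments are indeed not conditionally centered — $\mathbb{E}[Z_\ell D_\ell\mid\mathcal{F}_\ell^-]=\mathbb{E}[Z\alpha(X_\ell,Z)]-\mathbb{E}[Z\alpha(X^{(j-1)}_\ell,Z)]\ne 0$, since the accepted RWM increment drifts toward higher density — but the claim that this drift is negligible is not justified and is delicate over the full range $i\le\min(d,K)$: a per-step drift bound of order $\frac{h^2}{d}\|X_\ell-X^{(j-1)}_\ell\|$ (itself requiring a Cauchy--Schwarz sharper than the naive $\mathbb{E}[\|Z\|\,|D_\ell|]$) accumulates over $i$ steps to a squared contribution that is a constant multiple of, not smaller than, the quadratic-variation term when $i\asymp d$, which would degrade the constants you need to close the recursion. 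The paper sidesteps this entirely: it keeps all cross-terms, writes $\|\sum_\ell \Delta B_\ell Z_\ell\|^2=\frac{h^2}{L}\Delta B^\top W\Delta B$ with $W$ the (normalized) Wishart/Gram matrix of the $i-1$ proposals, and bounds the quadratic form by $\lambda_{\max}(W)\|\Delta B\|^2$ on the event $\{\lambda_{\max}(W)\le 15\}$, which holds up to probability $e^{-3d/2}$ since $\lambda_{\max}(W)$ concentrates at $(1+\sqrt{i/d})^2\le 4$ for $i\le d$ (Lemma~\ref{lemma:wish}). This single step is the actual source of both the constant $15$ in $c_0$ and of $\delta(d)=\frac{5}{3}e^{-3d/2}$ — not, as you suggest, a $\chi^2_d$ bound on individual $\|Z_\ell\|^2$. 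To repair your proof you would either need to carry out the drift estimate with explicit constants and absorb it into $c_0$, or replace the martingale step by the operator-norm argument.
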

Theorem~\ref{thm: main theorem 1} essentially states that, after a logarithmic number of Picard recursions, the probability that the Picard map incorrectly predicts the $i$th step is $\cO(i/d)$. 
\begin{remark}
By Theorem~\ref{thm: main theorem 1}, the probability of correctly guessing individual increments can be controlled up to $i=\cO(d)$.
Assuming \eqref{eq: bounding probability of incorrect guess} to be tight, this also suggests that increasing the number of parallel processors beyond $K = \cO(d)$ provides little gains in terms of speeding up convergence of the Picard recursion, as numerically observed in Section~\ref{sec: numerical illustrations}.
\end{remark}

\begin{remark}
The upper bound of Theorem~\ref{thm: main theorem 1} becomes vacuous for $i > d$.
This is consistent with the fact that the mixing time of RWM under Assumptions~\ref{ass: 1} is $\cO(d)$. Indeed, if $X_0$ and $X_i$ are approximately independent, then the prediction of $B(X_i, U_i, Z_i)$ with $B(X_0, U_i, Z_i)$ is uninformative. 
\end{remark}

By the union bound, Theorem~\ref{thm: main theorem 1}  implies that, with high probability, the Online Picard algorithm makes $\cO(\sqrt{d})$ consecutive correct guesses after $\cO(\log(d))$ calls of the Picard map when using $K = \cO(\sqrt{d})$ processors, as shown below.

\begin{corollary}
\label{corol: picard map}
Under  Assumptions~\ref{ass: 1}-\ref{ass: 1.2}, for all $x_0\in \cX, \, w_0\in \cW$, $\epsilon > 0$, $K \le \sqrt{\epsilon^2 d/(2 c_0)}$, $c_0$ as in Theorem~\ref{thm: main theorem 1}, $j = \lceil \log(d)\rceil$, $h \ge 1$,  we have that
\begin{align*}
    \PP (L^{(j)} \ge K \mid X_0 = x_0, W_0 = w_0) \ge 1 - \epsilon\, . 
\end{align*}    
\end{corollary}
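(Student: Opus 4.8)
The plan is to bound the failure probability $\PP(L^{(j)} < K \mid X_0 = x_0, W_0 = w_0)$ and show it does not exceed $\epsilon$; since the claim is vacuous when $\epsilon \ge 1$, I may assume $\epsilon < 1$ throughout. First I would translate the event through the third characterization in \eqref{eq: online picard 2}, which gives that $\{L^{(j)} \ge K\}$ holds exactly when the first $K$ increments have reached their fixed point at recursion level $j-1$, i.e.\ when $f(X_\ell, W_\ell) = f(X^{(j-1)}_\ell, W_\ell)$ for every $0 \le \ell < K$. Passing to the complement and applying a union bound over $\ell$ yields
\[
\PP\!\left(L^{(j)} < K \mid X_0 = x_0, W_0 = w_0\right) \le \sum_{\ell=0}^{K-1} \PP\!\left(f(X^{(j-1)}_\ell, W_\ell) \ne f(X_\ell, W_\ell) \mid X_0 = x_0, W_0 = w_0\right).
\]

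Next I would invoke Theorem~\ref{thm: main theorem 1} termwise. The constraint $K \le \sqrt{\epsilon^2 d/(2c_0)}$ together with $\epsilon < 1$ and $c_0 \ge 1$ (which follows from $h \ge 1$) gives $K \le \sqrt{d} \le d$, so every index in the sum satisfies $\ell \le K-1 \le \min(d,K)$ and the theorem applies at recursion level $j-1$. This produces a three-term bound
\[
\PP\!\left(L^{(j)} < K \mid \cdots\right) \le \sum_{\ell=0}^{K-1}\left(c_0\,\frac{\ell}{d} + \delta(d) + 2^{-(j-1)}\right) \le c_0\,\frac{K^2}{2d} + K\,\delta(d) + K\,2^{-(j-1)},
\]
using $\sum_{\ell=0}^{K-1}\ell \le K^2/2$. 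I would then control the three contributions separately using the parameter choices. The arithmetic-series term is the binding one and is handled directly by the bound on $K$: $c_0 K^2/(2d) \le c_0(\epsilon^2 d/(2c_0))/(2d) = \epsilon^2/4 \le \epsilon/4$. The contraction term uses $j = \lceil \log d \rceil$, so $2^{-(j-1)}$ is of order $1/d$ (up to a constant depending on the base of the logarithm), and combined with $K \le \sqrt{d}$ this forces $K\,2^{-(j-1)} = \cO(\epsilon/\sqrt{d})$, which is at most $\epsilon/4$ once the constant is absorbed via $c_0 \ge 1$. Finally $\delta(d) = \tfrac{5}{3}e^{-3d/2}$ decays exponentially, so $K\delta(d) \le \sqrt{d}\,\delta(d) \le \epsilon/4$. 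Summing the three pieces gives the claimed $\le \epsilon$.

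The main obstacle here is bookkeeping rather than anything conceptual: I must get the index shift between $L^{(j)}$ and the recursion level in Theorem~\ref{thm: main theorem 1} exactly right (the event $\{L^{(j)} \ge K\}$ concerns correct guesses at level $j-1$, hence the factor $2^{-(j-1)}$), verify that $K-1 \le \min(d,K)$ so the theorem is legitimately applicable across the whole sum, and check that the prefactors from $c_0$, the assumption $h \ge 1$, and the base of $\log$ combine so each of the three terms sits comfortably below $\epsilon/3$. Because both $\delta(d)$ and $2^{-j}$ decay strictly faster in $d$ than the $c_0 K^2/d$ term, the genuinely binding constraint is the first term — which is precisely what dictates the scaling $K = \cO(\sqrt{d})$ in the statement.
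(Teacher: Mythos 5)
Your proof is correct and follows essentially the same route as the paper's: a union bound over the first $K$ increments, a termwise application of Theorem~\ref{thm: main theorem 1}, and then arithmetic exploiting $K=\cO(\sqrt{d})$ (the paper applies the theorem at level $j$ rather than $j-1$ and checks the residual terms against $\epsilon/2$ instead of $\epsilon/4$, but these are cosmetic differences). The only caveat is that your intermediate claim that each of the three terms sits below $\epsilon/4$ is not literally true for small $d$ (e.g.\ $K\,2^{-(j-1)}$ can reach roughly $0.46\,\epsilon$ at $d=1$), yet the three contributions still sum to less than $\epsilon$, so the conclusion stands.
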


Notice the validity of the bound in Corollary~\ref{corol: picard map} for any starting value $X_0$ and  innovation $W_0$.
This allows to prove the next main theorem, where we bound with high probability the number of parallel iterations required by the Online Picard algorithm to simulate a given number of steps of RWM with $K$ up to $\cO(\sqrt{d})$ parallel processors.
\begin{theorem}
\label{thm: complexity rwm} 
Suppose Assumptions~\ref{ass: 1}-\ref{ass: 1.2} hold with $h \ge 1$. Then,  for all $\epsilon > 0$, $N\in\NN$, $K \le \sqrt{d/(8 c_0)}$, $c_0$ as in Theorem~\ref{thm: main theorem 1}, we have that $\PP(L^{(j)} > N) \ge 1 - \epsilon$ with 
$$j = \left\lceil \log(d)
    \max\{4N/K,2\log(1/\epsilon)\}
    \right\rceil
    =\mathcal{O}\left(\frac{N}{K}\right)
    \,.$$
\end{theorem}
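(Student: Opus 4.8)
The plan is to reduce the statement to a concentration bound for a sum of conditionally-lower-bounded Bernoulli trials obtained by grouping the Picard iterations into batches of length $T=\lceil\log d\rceil$. First I would record the elementary identity $L^{(j+1)}=L^{(j)}+G^{(j)}$, where $G^{(j)}$ is the per-iteration progress from \eqref{eq:def_G_j}; since $L^{(0)}=0$ this makes $(L^{(j)})$ a non-decreasing process and turns the target event $\{L^{(j)}>n\}$ into the event of accumulating total progress exceeding $n$. The core quantitative input is Corollary~\ref{corol: picard map} with $\epsilon=1/2$, which is exactly the value matching the hypothesis $K\le\sqrt{d/(8c_0)}$ against the Corollary's constraint $K\le\sqrt{\epsilon^2 d/(2c_0)}$: running $T=\lceil\log d\rceil$ consecutive Picard iterations advances $L$ by at least $K$ with probability at least $1/2$. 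I would emphasize that the batch length $T=\log d$ is forced by the geometric term $2^{-j}$ in Theorem~\ref{thm: main theorem 1}: a single iteration does not make reliable progress, and one needs $\log d$ iterations to drive $2^{-j}$ down to $\cO(1/d)$ so that a union bound over the $K\le\cO(\sqrt d)$ coordinates of a window still leaves failure probability below $1/2$.

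Next I would chain the batches using the Markov structure. By Proposition~\ref{prop: MC picard}, $(\bar X^{(j)},\bar W^{(j)})$ is a time-homogeneous Markov chain, so the progress made over a batch depends only on the state at the batch's start. Crucially, the bound in Corollary~\ref{corol: picard map} holds uniformly over the starting configuration, so conditioning on the state at the start of each batch, the indicator that the batch advances $L$ by at least $K$ is bounded below by $1/2$ regardless of the past. Writing $m$ for the number of batches, the number $S$ of successful batches therefore stochastically dominates a $\mathrm{Bin}(m,1/2)$ random variable, and on the event that at least $\lceil n/K\rceil+1$ batches succeed the total progress satisfies $L\ge K(\lceil n/K\rceil+1)>n$.

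It then remains to choose $m$ so that $\mathrm{Bin}(m,1/2)\ge\lceil n/K\rceil+1$ with probability at least $1-\delta$, which is a standard Chernoff/Hoeffding estimate. The two terms inside the maximum correspond to the two requirements: $m\ge 4n/K$ makes the expected number of successes $m/2$ comfortably exceed the target $\cO(n/K)$ (supplying the drift), while $m\ge 2\log(1/\delta)$ controls the lower-tail deviation below $\delta$. Taking $m=\max\{4n/K,2\log(1/\delta)\}$ and recalling each batch has length $T=\lceil\log d\rceil$ yields the claimed $j=\lceil\log(d)\max\{4n/K,2\log(1/\delta)\}\rceil=\cO(n/K)$, the logarithmic and $\delta$-dependent factors being absorbed in the $\cO$-notation.

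The main obstacle I anticipate is the rigorous transfer of Corollary~\ref{corol: picard map} from the specific constant initialization at position $0$ to a generic mid-process Markov state at the start of an arbitrary batch. This requires checking that the window innovations carried over through \eqref{eq: mc online picard 2} (some reused, some freshly drawn) do not break the uniform guessing bound of Theorem~\ref{thm: main theorem 1}, and that conditioning on the full state $(\bar X^{(j)},\bar W^{(j)})$, rather than on a single pair $(x_0,w_0)$, still yields the conditional lower bound $1/2$ on each batch's success probability. Once this conditional domination by independent $\mathrm{Bernoulli}(1/2)$ trials is established, the concentration step is routine.
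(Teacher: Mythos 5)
Your proposal is correct and follows essentially the same route as the paper: batching the Picard iterations into blocks of length $\lceil\log d\rceil$, invoking Corollary~\ref{corol: picard map} with $\epsilon=1/2$ uniformly over the batch's starting state via the Markov structure of Proposition~\ref{prop: MC picard}, establishing stochastic domination of the accumulated progress by a sum of i.i.d.\ two-point variables (the paper's Proposition~\ref{prop: Gi} and Lemmas~\ref{lemma: G1}--\ref{lemma: stoch ordering for sum of rvs}), and concluding with Hoeffding's inequality. The ``main obstacle'' you flag is exactly the content of the paper's proof of Proposition~\ref{prop: Gi}, which resolves it by conditioning on $(X_0, W_{0:\tilde L_i}, \tilde L_i)$ and using time-homogeneity together with the uniformity of Corollary~\ref{corol: picard map} in $(x_0,w_0)$.
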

By Theorem~\ref{thm: complexity rwm}, the Online Picard algorithm achieves the optimal speedup relative to the sequential algorithm for $K$ up to $\cO(\sqrt{d})$, since it simulates $N$ steps of the sequential algorithm with $\cO(N/K)$ parallel iterations and $K$ parallel processors. Notice that the convergence of the Picard map does not deteriorate with the condition number $L/m$ of the target.

Combining Theorem~\ref{thm: complexity rwm} with available explicit bounds on the mixing times of RWM \citep{andrieu2024explicit} gives the following complexity result. 
\begin{corollary}
[Parallel round complexity of Online Picard algorithm]
    \label{corollary: complexity rwm} 
    Suppose Assumptions~\ref{ass: 1}-\ref{ass: 1.2} hold
    with $m I_d \preccurlyeq \nabla^2 V$, $h = 1$ and 
    $X_0 \sim \cN(x^\star, L^{-1}I_d)$, where $x^\star$ is the minimizer of $V$. 
    Then, for any $\epsilon > 0$, the Online Picard algorithm with $K \le \lfloor \sqrt{d/(8c_0)}\rfloor$, $c_0$ as in Theorem~\ref{thm: main theorem 1}, outputs a random variable $X$, 
    with $\|\cL(X)-\pi\|_{\mathrm{TV}} \le \epsilon$ after 
    $$J= \cO\left(\frac{L}{m}\frac{d}{K}\,\mathrm{polylog}(\epsilon^{-1})\right)
    $$
    parallel iterations, where $\cL(X)$ denotes the law of $X$ and $\|P - Q\|_{\mathrm{TV}} = \sup_{A}|P(A) - Q(A)|$ for two probability distributions $P$ and $Q$.
\end{corollary}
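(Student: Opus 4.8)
The plan is to decompose the error into a \emph{mixing} error of the underlying sequential RWM chain and a \emph{Picard} error coming from not yet having reached the fixed point, and to glue the two together through the exact-fixed-point property of Assumption~\ref{ass: picard recursion constant}. Concretely, let $(X_i)_{i\ge 0}$ be the sequential RWM chain of Assumption~\ref{ass: 1.2} started at $X_0\sim\cN(x^\star,L^{-1}I_d)$ and driven by the innovations $W$, and run the Online Picard algorithm on the same $X_0$ (constant-trajectory initialization) and the same $W$ to simulate $N=n$ steps, for a value of $n$ fixed below; its output $X$ is the $n$-th coordinate of the converged trajectory. I would prove $\|\cL(X)-\pi\|_{\mathrm{TV}}\le\epsilon$ by writing $\|\cL(X)-\pi\|_{\mathrm{TV}}\le\|\cL(X)-\cL(X_n)\|_{\mathrm{TV}}+\|\cL(X_n)-\pi\|_{\mathrm{TV}}$ and controlling each summand by $\epsilon/2$.

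First I would fix the number of sequential steps $n$ via the mixing time of RWM. Under Assumption~\ref{ass: 1} together with strong convexity $mI_d\preccurlyeq\nabla^2 V$, step size $\xi=1/\sqrt{Ld}$ (i.e.\ $h=1$), and the feasible start $\cN(x^\star,L^{-1}I_d)$, the explicit bounds of \cite{andrieu2024explicit} supply a value $n=\cO\!\big(\tfrac{L}{m}\,d\,\mathrm{polylog}(\epsilon^{-1})\big)$ for which $\|\cL(X_n)-\pi\|_{\mathrm{TV}}\le\epsilon/2$. The care needed here is to match the step-size normalization and to check that the prescribed initialization is exactly a feasible start in the sense required by the cited result, so that the start contributes at most polylogarithmically and the bound remains linear in both $d$ and the condition number $L/m$. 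This controls the second summand.

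Next I would control the first summand using Theorem~\ref{thm: complexity rwm}. With $K=\lfloor\sqrt{d/(8c_0)}\rfloor\le\sqrt{d/(8c_0)}$ and $h=1\ge1$, the hypotheses of Theorem~\ref{thm: complexity rwm} hold, so taking $\delta=\epsilon/2$ yields $\PP(L^{(J)}>n)\ge1-\epsilon/2$ after $J=\lceil\log(d)\max\{4n/K,2\log(2/\epsilon)\}\rceil$ parallel iterations. On the event $\{L^{(J)}>n\}$, Assumption~\ref{ass: picard recursion constant} guarantees that the first $n$ coordinates of the Picard trajectory have reached their fixed point, so the output coincides exactly with the sequential step $X_n$ produced by the shared innovations $W$; that is, $X=X_n$ on this event. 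Hence the two laws differ only on $\{L^{(J)}\le n\}$, giving $\|\cL(X)-\cL(X_n)\|_{\mathrm{TV}}\le\PP(L^{(J)}\le n)\le\epsilon/2$, and the triangle inequality closes the bound. Finally, substituting $n=\cO(\tfrac{L}{m}d\,\mathrm{polylog}(\epsilon^{-1}))$ and $K=\Theta(\sqrt d)$ gives $4n/K=\cO(\tfrac{L}{m}\sqrt d\,\mathrm{polylog}(\epsilon^{-1}))$, which dominates $2\log(2/\epsilon)$; absorbing the outer $\log(d)$ factor and the polylog terms into the paper's $\cO$ convention yields $J=\cO(\tfrac{L}{m}\sqrt d\,\mathrm{polylog}(\epsilon^{-1}))$.

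The main obstacle I anticipate is not the gluing step, which is a short coupling-plus-triangle-inequality argument, but the faithful importation of the external mixing-time bound: one must reconcile the step-size and warm-start conventions of \cite{andrieu2024explicit} with Assumption~\ref{ass: 1.2}, confirm that the feasible start $\cN(x^\star,L^{-1}I_d)$ injects only a polylogarithmic (rather than dimension-dependent) warmness, and verify that the resulting $n$ is genuinely $\cO(\tfrac{L}{m}d)$ up to logs, since this is exactly what makes the final parallel count scale as $\sqrt d$ rather than $d$. A secondary point of care is the bookkeeping of the several logarithmic factors---$\log(d)$ from Theorem~\ref{thm: complexity rwm}, and $\log(1/\epsilon)$ from both the mixing bound and the choice $\delta=\epsilon/2$---to ensure they all collapse into $\mathrm{polylog}(\epsilon^{-1})$ under the stated $\cO$ notation.
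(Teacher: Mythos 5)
Your proposal is correct and follows essentially the same route as the paper: fix $n_{mix}=\cO(\frac{L}{m}d\,\mathrm{polylog}(\epsilon^{-1}))$ from the explicit RWM mixing bounds of Andrieu et al.\ under the feasible start, invoke Theorem~\ref{thm: complexity rwm} with $\delta=\epsilon/2$ to ensure $L^{(J)}\ge n_{mix}$ with high probability, and combine via the coupling event $\{X=X_{n_{mix}}\}$ and the triangle inequality. The only cosmetic difference is that the paper states the mixing bound in $\chi^2$-divergence and converts to total variation via $\|\cdot\|_{\mathrm{TV}}\le\tfrac12\sqrt{\chi^2}$, whereas you cite a TV bound directly.
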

Corollary \ref{corollary: complexity rwm} implies that the Online Picard algorithm with $K = \cO(\sqrt{d})$ achieves a speedup of $\cO(\sqrt{d})$ compared to the sequential algorithm.

The convergence bounds in Theorem~\ref{thm: complexity rwm} apply to any initialization $X_0$. With the next proposition, we show that the convergence is faster when the chain is in the tails of the distribution. 
\begin{proposition}     
\label{prop: transient phase} 
Suppose Assumptions~\ref{ass: 1}-\ref{ass: 1.2} hold, $m I_d \preccurlyeq \nabla^2 V$ and $X^{(0)}_\ell= x_0$ for all $\ell\geq 0$. 
Then, for all $j \ge 0$ and $K > 0$, we have
\begin{align*}
\lim_{\|x_0\|\to\infty}\PP\left(L^{(j)}=jK\right) = 1.
    \end{align*}
\end{proposition}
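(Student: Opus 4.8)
The plan is to reduce the event $\{L^{(j)}=jK\}$ to a finite collection of correct one-step ``guesses'', and then to show that each such guess succeeds with probability tending to one as the chain is pushed into the tails, where the Metropolis acceptance rule becomes essentially deterministic. Since $U^{(j)}=L^{(j)}+K$ and $L^{(j+1)}\le U^{(j)}$, the sequence $L^{(j)}$ increases by at most $K$ per iteration, so $\{L^{(j)}=jK\}$ forces $L^{(r)}=rK$ for every $r\le j$. I would prove by induction on $r$ that, on $\{L^{(r)}=rK\}$, the window $X^{(r)}_{rK:(r+1)K}$ to which the Picard map is applied at iteration $r$ is the \emph{constant} trajectory equal to $X_{rK}$: by the third case of \eqref{eq: online picard 1} the coordinates beyond $U^{(r-1)}=rK$ are all set to the single value $X^{(r)}_{rK}$, which equals $X_{rK}$ on the good event. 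Feeding this into \eqref{eq: online picard 2} and using \eqref{eq:zero_f}, the increment by $K$ at step $r$ is equivalent (a.s., since $Z_i\neq 0$) to $B(X_{rK},U_i,Z_i)=B(X_i,U_i,Z_i)$ for all $rK\le i<(r+1)K$. Unrolling the induction yields the exact identity
\[
\{L^{(j)}=jK\}=\bigcap_{r=0}^{j-1}\ \bigcap_{i=rK}^{(r+1)K-1}\{B(X_{rK},U_i,Z_i)=B(X_i,U_i,Z_i)\}\qquad(\text{a.s.}),
\]
so by the union bound it suffices to show $\EE[h(X_{rK},X_i)]\to 0$ for each of the finitely many pairs, where $h(x,x')=\PP_{Z,U}(B(x,U,Z)\neq B(x',U,Z))$.

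The key estimate is that $h(x,x')\to 0$ whenever $\|x-x'\|$ stays bounded while $\|\nabla V(x)\|\to\infty$. Writing $\Delta(x,z)=V(x+z)-V(x)$ and $g(a)=\min(1,e^{-a})$, one has $h(x,x')=\EE_Z\,|g(\Delta(x,Z))-g(\Delta(x',Z))|$. A mismatch can occur only when the step is ``uphill'' at one of the two points, since otherwise $g\equiv 1$; I would therefore split according to the sign pattern of $\Delta$ at $x$ and $x'$. When the step is uphill at both points, the contribution is bounded by the expected acceptance probability of an uphill move, $\EE_Z[e^{-\nabla V(x)\cdot Z}\ind(\nabla V(x)\cdot Z\ge 0)]\le(\sqrt{2\pi}\,\xi\,\|\nabla V(x)\|)^{-1}$, using $\Delta\ge\nabla V(x)\cdot Z$ together with $\nabla V(x)\cdot Z\sim\cN(0,\xi^2\|\nabla V(x)\|^2)$. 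When the signs straddle zero, $L$-smoothness gives $|\Delta(x,Z)-\Delta(x',Z)|\le L\|Z\|\,\|x-x'\|$, so this region forces $|\nabla V(x)\cdot Z|\lesssim L\|Z\|^2+L\|x-x'\|\,\|Z\|$, whose probability is controlled by the bounded density (anti-concentration) of the Gaussian $\nabla V(x)\cdot Z$, giving a bound of order $L(\|x-x'\|\sqrt d+\xi d)/\|\nabla V(x)\|$. Both contributions vanish as $\|\nabla V(x)\|\to\infty$, so $h(x,x')\le\psi(\|x-x'\|,\|\nabla V(x)\|)$ for a function $\psi(\cdot,G)\to 0$ as $G\to\infty$, increasing in its first and decreasing in its second argument.

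It remains to transfer this to the chain. Let $S=\sum_{i<jK}\|Z_i\|$, whose law does not depend on $x_0$; since each accepted increment moves the chain by $Z_i$, one has $\|X_i-X_{rK}\|\le S$, and the assumption $\ell\,\mathrm{I}_d\preccurlyeq\nabla^2 V$ (with $\ell>0$) gives $\|\nabla V(X_{rK})\|\ge\ell\,\|X_{rK}-x^\star\|\ge\ell(\|x_0-x^\star\|-S)$, where $x^\star$ is the minimizer of $V$. Splitting on $\{S\le\tfrac12\|x_0-x^\star\|\}$ and its complement and using the monotonicity of $\psi$, I obtain $\EE[h(X_{rK},X_i)]\le\EE[\psi(S,\tfrac{\ell}{2}\|x_0-x^\star\|)]+\PP(S>\tfrac12\|x_0-x^\star\|)$. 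As $\|x_0\|\to\infty$ the first term tends to $0$ by dominated convergence (for fixed $S$, $\psi(S,G)\to 0$ as $G\to\infty$ and $\psi\le 1$), while the second tends to $0$ because $S$ is a.s.\ finite with a fixed distribution. Summing over the $jK$ pairs concludes.

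I expect the main obstacle to be the single-step estimate on $h$, specifically controlling the ``straddle'' region via Gaussian anti-concentration together with the $Z$-dependent width $L\|Z\|\,\|x-x'\|$ of the critical band. The generic contraction bound underlying \eqref{eq: contraction rwm} depends only on $\|x-x'\|$ and hence does not decay in the tails; the decay here comes entirely from the growth of $\|\nabla V(x)\|$, so the estimate must be re-derived with explicit dependence on the gradient magnitude.
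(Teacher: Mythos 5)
Your proposal is correct and follows the same high-level strategy as the paper: reduce $\{L^{(j)}=jK\}$ via a union bound to per-step events $\{B(X_{rK},U_i,Z_i)=B(X_i,U_i,Z_i)\}$, observe that the chain's displacement over finitely many steps has a law (and second moment) not depending on $x_0$ while $\|\nabla V\|\to\infty$ by strong convexity, and conclude that each acceptance decision becomes essentially deterministic and hence agrees at the two points. Where you genuinely diverge is in the key single-step lemma. The paper writes the mismatch probability as $\EE[|e^{-\max(A_n,0)}-e^{-\max(B_n,0)}|]$ with $A_n=V(x_n+Z)-V(x_n)$, $B_n=V(Y_n+Z)-V(Y_n)$, and bounds it by $e^{-c}+\PP(|A_n|\le c,\,|B_n|\le c)+\PP(|A_n-B_n|>2c)$ for a free threshold $c$, needing only that $\EE[|A_n-B_n|]\le 2L(C+d)$ is bounded and that $\PP(|A_n|\le c)\to 0$; this is a soft argument with no explicit rate. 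You instead decompose by the sign pattern of the two log-ratios and control the ``straddle'' band by Gaussian anti-concentration of $\langle\nabla V(x),Z\rangle$, which yields an explicit rate $\psi(\|x-x'\|,\|\nabla V(x)\|)$ decaying like $1/\|\nabla V(x)\|$ — more work, but quantitatively stronger. Two small points to tighten: in your ``uphill at both points'' case, $\Delta(x,Z)>0$ does not imply $\langle\nabla V(x),Z\rangle\ge 0$; by $L$-smoothness the residual event forces $|\langle\nabla V(x),Z\rangle|\le L\|Z\|^2/2$ and must be folded into the same anti-concentration band as the straddle case. On the other hand, your explicit unrolling of $\{L^{(j)}=jK\}$ into blocks (using that the window fed to the Picard map at round $r$ is the constant trajectory at $X_{rK}$ on the good event) is more careful than the paper, which proves only $j=1$ and asserts the general case ``by recursion.''
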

Proposition \ref{prop: transient phase} implies that, as the initial point goes further into the tails, the convergence of the Picard map becomes faster (up to converging in one iteration). This is partly related to classical scaling limits obtained in the transient phase for RWM \citep{christensen2005scaling}, which show that RWM in the tails has a deterministic behavior. In our context, Proposition \ref{prop: transient phase} suggests that the Picard map converges faster in the transient phase, which we do observe in the numerical simulations of Section~\ref{sec: tails experiment}.

\subsection{Contraction of the Picard map: comparison between first- and zeroth-order case}
\label{sec: technical comparison picard maps}
In this section we compare the contraction of the Picard maps applied to RWM to the one obtained in the Unadjusted Langevin algorithm (ULA) case \citep{anari2024fast}. 
The comparison illustrates how the first-order unadjusted case, where the map $x \mapsto f(x, W)$ is smooth, differs from the zeroth-order case of RWM, where $x \mapsto f(x, W)$ is piecewise constant.

Denote by  $\Phi^{\mathrm{ULA}} \colon \cX^{K+1} \times \cW^{K} \mapsto \cX^{K+1}$ the Picard map applied to the ULA Markov chain, defined as in \eqref{eq: 1} 
    with $$
f^{\mathrm{ULA}}(X, W) = -\frac{\xi^2}{2}\nabla V(X) + \xi W, \quad  W \sim \cN(0, I_d)
$$ 
for a tuning parameter $\xi>0$. 
Under Assumption~\ref{ass: 1}, for all $x, y \in \cX^{K+1},$ with $x_0 = y_0$,  all $w \in \cW^{K}$, and for all $i \le K$, we have
    \begin{align*}
          \|\Phi_i^{\mathrm{ULA}}(x, w)- \Phi_i^{\mathrm{ULA}}(y, w) \| &= \frac{\xi^2}{2}\|\sum_{\ell = 1}^{i-1}( \nabla V(x_\ell) - \nabla V(y_\ell))\|
          \le \frac{\xi^2 L}{2}\sum_{\ell = 1}^{i-1} \| x_\ell - y_\ell \|,
    \end{align*}
    while Lemmas~\ref{lemma: 1}-\ref{lemma: 2} in the supplementary material imply that, under Assumptions~\ref{ass: 1}-\ref{ass: 1.2}, 
    \begin{align}
    \EE[\|\Phi^{\mathrm{RWM}}_i(x,  W) - \Phi^{\mathrm{RWM}}_i( y, W)\|] 
    &\le \sqrt{\EE[\|\Phi^{\mathrm{RWM}}_i(x,  W) - \Phi^{\mathrm{RWM}}_i( y, W)\|^2]} \nonumber \\
    &\le  \sqrt{15 \xi^2 d\sum_{\ell = 1}^{i-1} \PP(f(x_\ell, W_\ell) \ne f(  y_\ell,  W_\ell))+ \delta(d)}\nonumber\\
    &\le  \sqrt{ 15 \xi^3 L d c_1 \sum_{\ell = 1}^{i-1} \|x_\ell - y_\ell\| + \frac{1}{\xi L c_1} \delta(d)} \label{eq: contraction rwm}
    \end{align}
    where $W \sim \nu^{\otimes K}$, with marginals as in \eqref{eq:zero_f}, $c_1 = (\sqrt{\frac{2}{\pi}} + \frac{\gamma \xi \sqrt{L d}}{2})$ and $\delta(d)$ as in Theorem~\ref{thm: main theorem 1}.
Note that the contraction of the Picard map occurs pointwise (for each innovation $w$) for ULA and in expectation for RWM. More crucially, the RWM contraction has a slower rate, mainly due to the square root function in \eqref{eq: contraction rwm}.

\section{Extension to Metropolis within Gibbs}\label{sec: Metropolis within gibbs}
\subsection{Metropolis within Gibbs kernels}

We now extend our convergence results to Metropolis within Gibbs (MwG) Markov chains. 
Our main motivation for considering MwG is that it displays better empirical performance in our numerical simulations (see Appendix \ref{sup: simulations mwg} in the supplementary material).

For a given orthonormal basis $\{o_0,o_1,\dots,o_{d-1}\}$ spanning $\RR^d$ (e.g. the standard basis), the (deterministic-scan) \emph{Metropolis within Gibbs (MwG)} chain on $\RR^d$ \citep{geman1984stochastic} performs iteratively a one-dimensional Metropolis-Hastings update in the subspace spanned by $o_i$, for $i = 0,\dots,d-1$. 
The corresponding chain $(X_i)_{i=0,1,\dots}$ can be treated as a time-inhomogeneous Markov chain, where every block of $d$ coordinates of the Markov chain is iteratively defined as 
\begin{align*}
X_{i+1} 
&= X_{i} + f_i(X_{i}, W_{i}), 
&\text{ for }i=0,1,\dots,d-1,
\end{align*}
with innovations $W_i = (U_i, \,Z_i)$, where $Z_0,\dots,Z_{d-1} \stackrel{\text{i.i.d.}}\sim \nu$ are one-dimensional symmetric random variables,
$U_0,\dots,U_{d-1} \stackrel{\text{i.i.d.}}\sim \mathrm{Unif}([0,1])$, independently of $Z_0,\dots, Z_{d-1}$ and
\begin{align}\label{eq: MwG}
    f_i(x, (U, Z)) &= B(x, U, o_i Z)o_i Z\, , &\text{ for }i=0,1,\dots,d-1,
\end{align}
with $B$ as in \eqref{eq:zero_f}.

\subsection{Results for Metropolis within Gibbs}
If we denote by $\mathrm{Haar}(d)$ the Haar measure on the orthogonal group in  dimension $d$, our theoretical results hold with the following assumption.
\begin{assumption}
    \label{ass: MwG}
    $X_{0},X_1,\dots$ is a MwG chain on $\cX = \RR^d$ with $(o_0,o_1,\dots,o_{d-1}) \sim \mathrm{Haar}(d)$ and $Z = h/\sqrt{L d} (2P - 1)S$, $P \sim \mathrm{Ber}(1/2)$ and $S \sim \chi(d)$, where $\mathrm{Ber}(p)$ and $\chi(d)$ are respectively the Bernoulli distribution with parameter $p$ and the Chi distribution with $d$ degrees of freedom.
\end{assumption}

\begin{theorem}
    \label{thm: gibbs}
    Under Assumptions \ref{ass: 1}  and \ref{ass: MwG},
    \begin{itemize}
        \item  the statement of Theorem~\ref{thm: main theorem 1} holds with $c_0$ and $\delta(d)$ replaced by $c_0 = 2 h^4 (\sqrt{\frac{2}{\pi}} + \frac{h \gamma}{2})^2 $, $\delta(d) = 11\exp(-d/10)$;
    \item the statement of Theorem~\ref{thm: complexity rwm} holds with $h > 5$, $c_0$ as above;
    \item the statement of Proposition~\ref{prop: approximate picard} below holds with $c_0$ as above and $d \ge -10\log(\epsilon r/33)$,  where $r\ge0$ is the tolerance error of the Approximate Picard defined in Section~\ref{sec: approximate online-picard}.
    \end{itemize}
\end{theorem}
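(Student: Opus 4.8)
The three claims are the Metropolis-within-Gibbs (MwG) counterparts of the Random Walk Metropolis (RWM) results of Section~\ref{sec: main results}, and the plan is to show that the same overall analysis applies once a single distributional identity is exploited, after which only the constants need to be retraced. The key observation is that one MwG proposal increment agrees in law with the RWM increment: a standard Gaussian vector factorises into an independent uniform direction on the sphere and a $\chi(d)$-distributed magnitude, so the MwG step $o_i Z = (h/\sqrt{Ld})\, o_i (2P-1) S$ --- with $o_i$ marginally Haar-uniform on the sphere, $(2P-1)\in\{-1,+1\}$ and $S\sim\chi(d)$ --- satisfies $o_i Z \stackrel{d}{=} \cN(0, h^2 I_d/(Ld))$, which is exactly the RWM proposal of \eqref{eq:zero_f}. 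Conditionally on a realised increment, the MwG and RWM acceptance rules therefore coincide, so the single-step machinery transfers and the task reduces to re-deriving the ``incorrect-guess'' bounds (the MwG analogues of Lemmas~\ref{lemma: 1}--\ref{lemma: 2}) and tracking the modified constants.

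For the first claim I would follow the proof of Theorem~\ref{thm: main theorem 1} step by step. That proof splits the incorrect-guess probability into the Picard-contraction term $2^{-j}$, which is a property of the map alone and carries over verbatim, plus an accumulation term governed by a contraction inequality of the form \eqref{eq: contraction rwm}. Two features of MwG change the constants here. First, within a Gibbs sweep the increments $o_0 Z_0,\dots,o_{d-1}Z_{d-1}$ point along \emph{orthonormal} directions, so the squared displacement $\|\Phi_i^{\mathrm{MwG}}(x,W)-\Phi_i^{\mathrm{MwG}}(y,W)\|^2$ splits by Pythagoras into a sum of one-dimensional contributions with no cross terms; this deterministic decoupling is what sharpens the leading constant to $c_0 = 2h^4(\sqrt{2/\pi}+h\gamma/2)^2$. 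Second, because an MwG move is one-dimensional, the relevant fluctuations of $V$ are those along the single direction $o_i$, so I would bound the single-step difference $|\alpha(x,o_iZ)-\alpha(y,o_iZ)|$, with $\alpha(x,\zeta)=\min(1,\exp(V(x)-V(x+\zeta)))$, in terms of the projected curvature $o_i^{\top}\nabla^2 V\, o_i$ and the Hessian-Lipschitz remainder $o_i^{\top} M o_i$, and then concentrate these projected quantities under the Haar measure. It is this Haar concentration of directional quadratic forms that produces the explicit residual $\delta(d)=11\exp(-d/10)$, whose slower decay (compared with the dimension-free Gaussian tail used for RWM) reflects the weaker concentration of a single spherical projection.

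The remaining two claims then follow mechanically from the refined Theorem~\ref{thm: main theorem 1}. The proof of Theorem~\ref{thm: complexity rwm} uses only the incorrect-guess bound, a union bound over the at most $K$ active coordinates, and the geometric-in-$j$ decay; substituting the MwG constants and strengthening the step-size condition to $h>5$ --- so that $c_0\,i/d+\delta(d)$ remains below the threshold invoked in that union bound --- reproduces the statement unchanged. Likewise, Proposition~\ref{prop: approximate picard} is driven by the same bound, and the hypothesis $d\ge-10\log(\epsilon r/33)$ is precisely the requirement that $\delta(d)=11\exp(-d/10)$ fall below the tolerance entering its proof, so its conclusion carries over with no further work.

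I expect the main obstacle to be the single-step bound under the MwG proposal, specifically the concentration of the projected curvature quantities $o_i^{\top}\nabla^2 V\, o_i$ and $o_i^{\top} M o_i$ under the Haar measure: unlike RWM, where the full increment interacts with the whole Hessian and concentration follows from standard Gaussian-chaos tails, here one must control quadratic forms of a single uniformly random direction, which concentrate only at a spherical rate and force the explicit $\delta(d)=11\exp(-d/10)$. A secondary difficulty is bookkeeping the within-sweep correlations induced by the orthonormality of $o_0,\dots,o_{d-1}$: while this orthonormality helps the contraction estimate, it must be handled carefully to ensure the accumulation over $0\le i\le\min(d,K)$ steps remains $\cO(i/d)$.
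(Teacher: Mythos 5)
Your opening observation is exactly the paper's Proposition~\ref{prop: MwG kernel}: $o_i Z \stackrel{d}{=} \cN(0, h^2 I_d/(Ld))$, so the single-step ``incorrect guess'' bound (the MwG analogue of Lemma~\ref{lemma: 1}) transfers \emph{verbatim}, with identical constants and no new argument. You also correctly identify that the orthonormality of $o_0,\dots,o_{d-1}$ kills the cross terms in $\|\Phi_i(x,W)-\Phi_i(y,W)\|^2$ by Pythagoras, which is what replaces the Wishart quadratic form of the RWM case and sharpens the leading constant to $2$. The reduction of the second and third bullets to constant-tracking is also as in the paper.

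The genuine gap is in your account of where $\delta(d)=11\exp(-d/10)$ comes from. You propose to obtain it from Haar concentration of the projected curvature quantities $o_i^{\top}\nabla^2 V\, o_i$ and $o_i^{\top} M o_i$ inside a re-derived single-step bound. This contradicts your own first paragraph: since the increment laws coincide, the single-step bound needs no re-derivation and produces no new error term. The actual source of $\delta(d)$ is the accumulation step (the MwG analogue of Lemma~\ref{lemma: 2}). After Pythagoras one has
\begin{equation*}
\|\Phi_i(x,W)-\Phi_i(y,W)\|^2=\frac{h^2}{L}\sum_{\ell=1}^{i-1}\Delta B_\ell^2\,\tilde Z_\ell,
\qquad \tilde Z_\ell \sim \chi^2(d)/d,
\end{equation*}
and the difficulty is that $\tilde Z_\ell$ (the squared proposal magnitude) is \emph{not independent} of the acceptance indicator $\Delta B_\ell$, so one cannot simply take $\EE[\tilde Z_\ell]=1$ out of the expectation. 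The paper handles this by truncating: $\EE[\Delta B_\ell^2 \tilde Z_\ell]\le 2\,\EE[\Delta B_\ell^2] + \EE[\tilde Z_\ell \ind(\tilde Z_\ell>2)]$, and the $\chi^2$ tail bound gives $\EE[\tilde Z_\ell\ind(\tilde Z_\ell>2)]\le 22\exp(-d/10)$, whence the factor $2$ in $c_0$ and the residual $11\exp(-d/10)$. This is the exact analogue of the maximum-eigenvalue truncation of the Wishart matrix in the RWM case, not a property of projected Hessians under the Haar measure. Your proposed substitute would not produce the stated constants and, more importantly, does not address the dependence between the acceptance decision and the proposal magnitude, which is the step that actually requires work.
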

Theorem~\ref{thm: gibbs} shows that the analysis of the Picard map derived in Section~\ref{sec: main results} can be extended to MwG. However, for MwG, it is not possible to derive the parallel round complexity (as in Corollary~\ref{corollary: complexity rwm}) since,  to the best of our knowledge, tight complexity bounds for MwG under a feasible start remain unknown, with the only available result known to us being a bound on the spectral gap  provided in  \citet[Corollary 7.4]{ascolani2024entropy}.

The next proposition shows that the Online Picard algorithm, when applied to MwG for isotropic Gaussian targets, achieves optimal speedup equal to $K$ with $K$ parallel processors, since each application of the Picard map converges instantaneously to its fixed point. This might be related to the greater speedups of MwG relative to RWM for well conditioned targets empirically observed in Section~\ref{sec: numerical illustrations}.
\begin{proposition}[MwG with isotropic Gaussian targets] \label{prop: instant convergence ORWM isotropic Gaussians} 
    Suppose Assumption~\ref{ass: MwG} holds  and  $V(x) = \|x - \mu\|^2/(2\sigma^2)$, $\mu \in \RR^d, \sigma^2 > 0$. For every $K \le d$ and $X_0 \in   \cX$, we have $L^{(j)} = jK$.
\end{proposition}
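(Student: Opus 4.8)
The plan is to exploit the fact that, for an isotropic Gaussian target, each Metropolis-within-Gibbs update decouples along the fixed orthonormal basis $\{o_0,\dots,o_{d-1}\}$. First I would record the key algebraic identity: evaluating the acceptance indicator from \eqref{eq: MwG} for a step along $o_i$ gives $\pi(x+o_iZ)/\pi(x)=\exp\big(-(2Z\langle o_i,x-\mu\rangle+Z^2)/(2\sigma^2)\big)$, so that $B(x,U,o_iZ)$ depends on the current state $x$ only through the scalar coordinate $\langle o_i,x-\mu\rangle$. Consequently the increment $f_i(x,W_i)$ at a step updating direction $o_i$ is determined solely by $\langle o_i,x-\mu\rangle$, $U_i$ and $Z_i$, and the chain splits into $d$ independent one-dimensional Metropolis chains, one per coordinate $\langle o_i,\cdot-\mu\rangle$. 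Note that this makes $L^{(j)}=jK$ hold deterministically, for every realization of the innovations.

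The second ingredient is the hypothesis $K\le d$. Within any block of $K$ consecutive time indices $\{m,m+1,\dots,m+K-1\}$ the updated directions $o_{m\bmod d},\dots,o_{(m+K-1)\bmod d}$ are pairwise distinct, so each direction is touched at most once. Since every increment is orthogonal to the directions it does not update, the coordinate $\langle o_{\ell\bmod d},X_\ell-\mu\rangle$ is left unchanged by all steps of the block strictly preceding the unique step that updates $o_{\ell\bmod d}$; equivalently, $\langle o_{\ell\bmod d},X_\ell-\mu\rangle=\langle o_{\ell\bmod d},X_{m}-\mu\rangle$ for every $\ell$ in the block. I will call this the block identity.

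With these facts I would establish $L^{(j)}=jK$ by induction on $j$, the base case $L^{(0)}=0$ being trivial. Assume $L^{(j)}=jK$. By \eqref{eq: online picard 1} the coordinates $0\le\ell\le jK$ are frozen with $X^{(j)}_\ell=X_\ell$, while every coordinate $m$ in the current block $jK<m\le U^{(j)}=(j+1)K$ carries the constant-extension value $X^{(j)}_m=X^{(j)}_{jK}=X_{jK}$ inherited from iteration $j-1$ (where $U^{(j-1)}=jK$); together with $X^{(j)}_{jK}=X_{jK}$ this gives $X^{(j)}_m=X_{jK}$ for all $jK\le m<(j+1)K$. Hence every acceptance guess made while applying the Picard map to this block evaluates $B$ at the single state $X_{jK}$, and by the block identity $\langle o_{m\bmod d},X_{jK}-\mu\rangle=\langle o_{m\bmod d},X_m-\mu\rangle$, so each guess agrees with the true chain: $f_m(X^{(j)}_m,W_m)=f_m(X_m,W_m)$. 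Summing the matching increments from the base point $X_{jK}$ as in \eqref{eq: picard recursion 0} yields $X^{(j+1)}_\ell=X_\ell$ for every $\ell\le(j+1)K$, whence $L^{(j+1)}=(j+1)K$, closing the induction.

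The delicate point — and the place where $K\le d$ is truly used — is the index bookkeeping in the last step: one must be sure that the only way a guessed increment could be wrong is through a previous update of the same direction $o_{m\bmod d}$, and that any such update occurs at a time index already inside the converged region. Since consecutive indices repeat a direction only after $d$ steps while each block has length $K\le d$, the most recent same-direction update before step $m\le(j+1)K-1$ occurs at $m-d\le jK=L^{(j)}$, which is exactly what makes the component-matching hold regardless of the (constant) initialization of the block. I expect this bookkeeping, rather than any analytic estimate, to be the main obstacle; everything else reduces to the elementary Gaussian computation of the acceptance ratio and the orthogonality of the basis.
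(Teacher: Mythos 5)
Your proposal is correct and takes essentially the same route as the paper: the key step in both is that for an isotropic Gaussian the acceptance indicator $B(x,u,o_iz)$ depends on $x$ only through $\langle o_i, x-\mu\rangle$, so orthogonality of the basis and the fact that no direction repeats within a block of length $K\le d$ force every Picard guess made from the block's base point to agree with the true chain. The only difference is presentational: you write out the induction over blocks and the $\bmod\, d$ bookkeeping explicitly, whereas the paper proves the first block ($L^{(1)}=K$) and dispatches the rest with ``follows by recursion.''
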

Note that the result in Proposition~\ref{prop: instant convergence ORWM isotropic Gaussians} is deterministic, i.e. it holds for every sequence $w \in \cW^{\infty}$.

\section{Approximate Picard} \label{sec: approximate online-picard}
By Theorem~\ref{thm: main theorem 1} the probability of correctly guessing each individual increment for the Picard map remains small up to $i = \cO(d)$ steps. However, when accumulating all these small probabilities across $i$, the first incorrect guess happens much earlier, at $i = \cO(\sqrt{d})$, see Corollary~\ref{corol: picard map}. 
This creates a bottleneck and does not allow us to leverage effectively more than $K = \cO(\sqrt{d})$ parallel processors, see Theorem~\ref{thm: complexity rwm}. 

This motivates us to propose an Approximate Online Picard algorithm, which can leverage effectively up to $K = \cO(d)$ processors, but whose output does not match anymore the output of the recursion in \eqref{eq: 1} and for which the limiting distribution of the resulting Markov chain has a small bias relative to $\pi$.

By \eqref{eq: online picard 2}, each iteration of the Online Picard algorithm advances up to the position of the first error of the Picard recursion. Here we relax this condition by tolerating a fixed (and small) percentage $r \in [0,1]$ of mistakes.
We thus define the \emph{Approximate Online Picard algorithm} exactly as in \eqref{eq: online picard 1} and \eqref{eq: online picard 2}, but replacing $(L^{(j)},U^{(j)})_{j=0}^\infty$ with $(L_r^{(j)},U_r^{(j)})_{j=0}^\infty$ defined as $L^{(0)}_r=0$, $U_r^{(j)} = L_r^{(j)} + K$ and 
\begin{align*}
L_r^{(j+1)} &= \sup\{L_r^{(j)} < i \le U_r^{(j)} \colon  \cA_\ell^{(j)}  \le r \text{ for all }  L_r^{(j)}< \ell \le i\}
&j\geq 0\,,
\end{align*}
where 
$$
\cA_\ell^{(j)} = \frac{|\{s\in\{L_r^{(j)},L_r^{(j)}+1,\dots,\ell-1\} \colon f(X^{(j+1)}_{s},  W_{s}) \neq f(X^{(j)}_{s},  W_{s})\}|}{\ell-L_r^{(j)}} 
$$ 
is the proportion of mistakes made by the $j$-th Picard recursion in $\{L_r^{(j)}+1,\dots,\ell\}$, see also the pseudocode in Appendix~\ref{app: pseudo-code}.
Here $r$ acts as a tolerance parameter which controls the approximation error, and the case $r=0$ coincides with the (exact) Online Picard algorithm. 
By construction, $L_r^{(j)}$ is non-decreasing in\ $r$, and thus increasing $r$ reduces the runtime. 
Interestingly, for any  $r > 0$, the Approximate Online Picard algorithm is able to exploit up to $K = \cO(d)$ parallel processors effectively, as shown in the following Proposition. 
\begin{proposition}\label{prop: approximate picard}
Under Assumptions~\ref{ass: 1}-\ref{ass: 1.2}, for all $\epsilon>0$, $r \in (0,1]$, $d \ge \log(5/(\epsilon r))$, $j \ge 1.5\log(3/(\epsilon r))$ and $K \le  \epsilon r d /(3c_0)$,  we have
$$
\PP\left(\frac{1}{K}\sum_{i=0}^{K-1}\ind \left(f(X_i^{(j)}, W_i) \ne f(X_i, W_i)\right) \le r\right) > 1 - \epsilon.
$$      
\end{proposition}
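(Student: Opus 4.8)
The plan is to run a first-moment (Markov's inequality) argument on the number of incorrect guesses among the first $K$ coordinates. This is precisely where the gain over Corollary~\ref{corol: picard map} comes from: rather than union-bounding the event that \emph{some} guess is wrong (which forces $K=\cO(\sqrt d)$), here I only control the \emph{expected} number of wrong guesses, which needs no information on the joint law of the errors — only linearity of expectation — and tolerates $K$ up to $\cO(d)$. Writing $I_i=\ind(f(X_i^{(j)},W_i)\ne f(X_i,W_i))$ and $S=\sum_{i=0}^{K-1}I_i$, the event of interest is $\{S/K\le r\}$, whose complement is $\{S>rK\}$. Since $\{S>rK\}\subseteq\{S\ge rK\}$, Markov's inequality gives $\PP(S>rK)\le\EE[S]/(rK)$, and the whole proof reduces to bounding $\EE[S]$.

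To bound $\EE[S]=\sum_{i=0}^{K-1}\PP(I_i=1)$, I would invoke Theorem~\ref{thm: main theorem 1}. Since its bound holds uniformly over the initial data $(x_0,w_0)$, integrating over (or fixing) $(X_0,W_0)$ gives $\PP(I_i=1)\le c_0 i/d+\delta(d)+2^{-j}$ for every $i\le\min(d,K)$. Before summing, I would check the range condition: the hypothesis $K\le\epsilon r d/(3c_0)$ together with $\epsilon r\le1$ and $c_0>1$ (which holds in the regime of interest) forces $K\le d$, so the bound applies for each $i\in\{0,\dots,K-1\}$. Summing and using $\sum_{i=0}^{K-1}i\le K^2/2$ then yields
\[
\EE[S]\le\frac{c_0K^2}{2d}+K\big(\delta(d)+2^{-j}\big),
\qquad\text{hence}\qquad
\PP(S>rK)\le\frac{c_0K}{2rd}+\frac{\delta(d)+2^{-j}}{r}.
\]

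The last step is to make each of the three terms small by matching the hypotheses. The bound $K\le\epsilon r d/(3c_0)$ gives $c_0K/(2rd)\le\epsilon/6$; the bound $d\ge\log(5/(\epsilon r))$ gives $\delta(d)/r=\tfrac{5}{3r}\exp(-3d/2)\le\tfrac{5}{3r}\exp(-d)\le\epsilon/3$; and $j\ge1.5\log(3/(\epsilon r))$ gives $2^{-j}/r\le\epsilon/3$, using $\exp(-j\log2)\le\epsilon r/3$. Adding these yields $\PP(S>rK)\le5\epsilon/6<\epsilon$, i.e.\ $\PP(S/K\le r)>1-\epsilon$, as claimed. I expect the only delicate part to be the constant bookkeeping: splitting the error budget $\epsilon$ across the three terms consistently with the three hypotheses, and in particular verifying the threshold $1.5>1/\log2\approx1.4427$ that makes the condition on $j$ sufficient. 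All the probabilistic substance is carried by Theorem~\ref{thm: main theorem 1}; Markov's inequality supplies the rest.
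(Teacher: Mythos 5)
Your proof is correct and takes essentially the same route as the paper's: both apply Markov's inequality to the count of incorrect guesses, bound each $\PP(I_i=1)$ via Theorem~\ref{thm: main theorem 1}, and split the error budget across the three hypotheses on $K$, $d$ and $j$ in the same way. The only difference is cosmetic bookkeeping: you keep the factor $1/2$ from $\sum_{i<K} i \le K^2/2$ and land at $5\epsilon/6$, whereas the paper uses a slightly cruder bound and lands exactly at $\epsilon$.
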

Proposition~\ref{prop: approximate picard} shows that the fraction of incorrect guesses can be controlled with high probability up to $K = \cO(d)$ processors and with only $j = \cO(1)$  parallel iterations.
This suggests that the  Approximate Online Picard algorithm with $r > 0$ and $K = \cO(d)$ reduces the parallel run-time of the Online Picard algorithm by a factor $\cO(\sqrt{d})$, thus converging in $j = \cO(1)$ parallel iterations, as numerically verified in Section~\ref{sec: numerical illustrations}.

However, for $r> 0$ the output no longer matches exactly the output of the sequential algorithm. 
Intuitively, we expect the error introduced to be small for small values of $r$, since the increments are coherent with the ones of the original RWM at least $(1-r)$-fraction of the times, so that the Markov chain at those steps follows the correct transition kernel in \eqref{eq: 1}.
Also, the size of the other increments is bounded by the size of $Z$, which has Gaussian tails. 
The error of the Approximate Picard algorithm on the invariant distribution is numerically assessed in Section~\ref{sec: numerical illustrations} for tolerance levels $r$ ranging from 0\% to 20\%, while a rigorous analysis of the error introduced by taking $r > 0$ is outside the scope of this work.

\section{Numerical simulations}\label{sec: numerical illustrations}
\subsection{Overview}
We assess numerically the performance of the (Approximate) Online Picard algorithm for high-dimensional regression models and the SIR epidemic model. The first simulations allow to test performance systematically and assess 
the tightness of the bounds developed in Section~\ref{sec: Complexity of Picard algorithms}. The second experiment illustrates the breadth of our results for a more sophisticated Bayesian problem where the posterior density is not log-concave and is only piecewise continuous. In this setting, state-of-the-art gradient-based methods such as Langevin and Hamiltonian Monte Carlo \citep{RobertsTweedie::1996,neal2011mcmc} cannot be directly used and zeroth-order (gradient-free) methods become a natural choice. 
In this example, we also examine Discontinuous Hamiltonian Monte Carlo (D-HMC) \citep{nishimura2020discontinuous}, a non-reversible zeroth-order variant of MwG. Although our complexity bounds do not apply to D-HMC, the Online Picard algorithm applies directly because its Picard map is piecewise constant. 
In Section~\ref{sec:blackbox} we implement a parallel version of the online Picard algorithm on a concrete real-word application in precision medicine where zeroth-order parallel methods are appropriate (see Section~\ref{sec: parallel architectures} for pseudocode and implementation details). For this application, the log-target can only be evaluated through a black-box computational routine that is expensive and does not provide gradients. 
Practical aspects such as hardware and software considerations, as well as parallelization overheads are discussed in detail in Section~\ref{sec: practical algorithmic aspects} in the supplement. See also \citet{glatt2024parallel, lee2010utility} for related  discussions.

\subsection{Algorithmic specification}
In all numerical simulations, we set $\xi = c^{\mathrm{(RWM)}}/\sqrt{d}$ for RWM,  
and innovation $\tilde Z = \cN(0, c^{\mathrm{(MwG)}})$ for MwG (with standard basis). The stepsizes $c^{\mathrm{(RWM)}}$ and $c^{\mathrm{(MwG)}}$ are tuned to obtain an average Metropolis acceptance-rejection of approximately  23.4\% and 40\%, respectively, as prescribed by the literature \citep{gelman1997weak, gelman1996efficient}. We set the number of integration steps for D-HMC to $L = 3$ and the total number of iterations to $\lfloor N /(d L)\rfloor$, where $N$ is the total number of steps in RWM and MwG, ensuring that the total number of density evaluations is comparable across algorithms. To guarantee ergodicity, the stepsize of D-HMC is drawn as  $h=|z|$ with $z \sim \cN(0, c^{\mathrm{(MwG)}})$ \citep[see][Section 4.2]{nishimura2020discontinuous}. The code to reproduce the simulations is available at \texttt{https://github.com/SebaGraz/ParallelMH.git}.

\subsection{Performance metric}
We define the empirical speedup relative to sequential implementations as the average number of sequential steps simulated per parallel round, i.e.
\begin{equation}
    \label{eq: hat G}
    \hat G = L^{(J)}/J\,,
\end{equation}
where $J$ is the total number of parallel iterations of the (Approximate) Online Picard algorithm. $\hat G$ can be interpreted as the multiplicative speedup obtained by parallel computing when ignoring overheads due to parallel execution. 
In particular, we expect $\hat G$ to match the theoretical results in Theorems \ref{thm: complexity rwm} and \ref{thm: gibbs} and Proposition~\ref{prop: approximate picard}, that is $\hat G = \cO(\sqrt{d})$ for the Online Picard algorithm with $K = \cO(\sqrt{d})$ and $\hat G = \cO(d)$ for the Approximate Picard algorithm with $K = \cO(d)$. Moreover, we verify numerically that increasing $K$ beyond $\cO(d)$ yields no further gains, as suggested by Theorem~\ref{thm: main theorem 1}. 
The gains obtained by increasing $K$ from $\cO(\sqrt{d})$ to $\cO(d)$ for the Online Picard algorithm are more case-dependent, see details in Section \ref{sec: high-dimensional regressions} below.

Note that \(\hat{G}\) is only a measure of speedup \emph{relative to} the original sequential MCMC method.  To evaluate the overall performance of a parallel MCMC method, the value of $\hat{G}$ should be combined with (i) a measure of statistical efficiency of the original MCMC, such as the Effective Sample Size (ESS), and (ii) the wall-clock time required to run the algorithm. In particular, when target evaluations are inexpensive (as in the toy regressions considered in Section~\ref{sec: high-dimensional regressions}), the overhead induced by parallelization (e.g., latency from synchronization and communication) may dominate, slowing the Picard algorithm. In Section~\ref{sec: sir} we report ESS values for the algorithms considered, and in Section~\ref{sec:blackbox} we report wall-clock times for parallel implementations of  the Picard algorithm 
which also account for parallelization overheads.

\subsection{High-dimensional regressions}\label{sec: high-dimensional regressions}
We consider three high-dimensional regression models:
\begin{itemize}
    \item[\textbf{E1}] \emph{Linear regression:} for $i =1,2,\dots,n$, $
Y_i \mid A_i \overset{\text{i.i.d.}}{\sim} \cN(\langle A_i,x\rangle , \sigma^2), \sigma^2 = 1
$ and unknown parameter $x \in \RR^d$.
The sample size is set to $n = 5d$. 
\item[\textbf{E2}] \emph{Logistic regression:} for $i =1,2,\dots,n$, $Y_i \mid A_i \overset{\text{i.i.d.}}{\sim} \mbox{Ber}(\psi(\langle A_i, x \rangle )),$ with $\psi(u) =(1 + e^{- u})^{-1}$ and unknown parameter $x \in \RR^d$. The sample size is set to $n = 10d$.
\item[\textbf{E3}] \emph{Poisson regression:} for $i =1,2,\dots,n$, $Y_i \mid A_i \overset{\text{i.i.d.}}{\sim}\mbox{Poiss}(\exp(\langle A_i, x \rangle )), $ with unknown parameter $x \in \RR^d$. The sample size is set to $n = 10d$.
\end{itemize}
In all cases, we set independent standard Gaussian priors for the parameters. With this setting, the corresponding posteriors are analytical for the linear regression, while not tractable for the logistic and Poisson regression. Furthermore, for the latter, the gradient of $\log \pi$ decays exponentially fast in the tails, making standard first-order Monte Carlo methods numerically unstable. We generate covariates as $A_i\overset{\text{i.i.d.}}{\sim} \cN_d(0, I_d/d)$, some true parameters $x \sim \cN_d(0, I_d)$ and $Y_{1:n}$ from the corresponding model. We run multiple times our algorithms, each time varying the number of processors $K$ and the parameter dimension $d$.
We initialize the algorithms at the posterior mean for linear regression and at the true parameter value otherwise. Unless specified otherwise, we run algorithms until $L^{(J)}\geq N = 10^4$. Figure~\ref{fig: gauss} and Figure~\ref{fig: logistic} summarize the results for RWM.

The left panels display $\hat G$ as a function of $d$ (on a log-log scale), when $K = d$ and $K = \sqrt{d}$.  
The growth rates match the ones predicted by the theory of Section~\ref{sec: Complexity of Picard algorithms}, i.e. $\cO(\sqrt{d})$ for Online Picard and $\cO(d)$ for the approximate version. Note that increasing 
$K$ from $\sqrt{d}$ to $d$ in the Online Picard algorithm improves $\hat G$ only by a constant factor, indicating that, while there might be a practical benefit in setting $K$ beyond $\sqrt{d}$, the rate in Theorem~\ref{thm: complexity rwm} is tight for RWM.

The right panels display $\hat G$ as a function of $K$ for fixed $d$ ($d = 500$ for linear regression, $d = 200$ otherwise). Again, the results align with the theory, with a linear speedup up to $K = \sqrt{d}$ (blue vertical dashed line) followed by a slower increase until $K = d$ and no increase afterwards.  
Similar experiments are performed for MwG and are  available in the supplementary material, Appendix~\ref{sup: simulations mwg}.
For MwG, we generally notice better parallelization performance compared to RWM. In particular, choosing $K$ larger than $\sqrt{d}$ appears to yield more than a constant improvement in speedup. This may be due to the nearly isotropic targets considered in our regression problems. In such settings, the Online Picard algorithms for MwG can achieve optimal speedup as suggested by Proposition~\ref{prop: instant convergence ORWM isotropic Gaussians}.

\begin{figure}[!ht]
    \centering
    \includegraphics[width=0.8\linewidth] {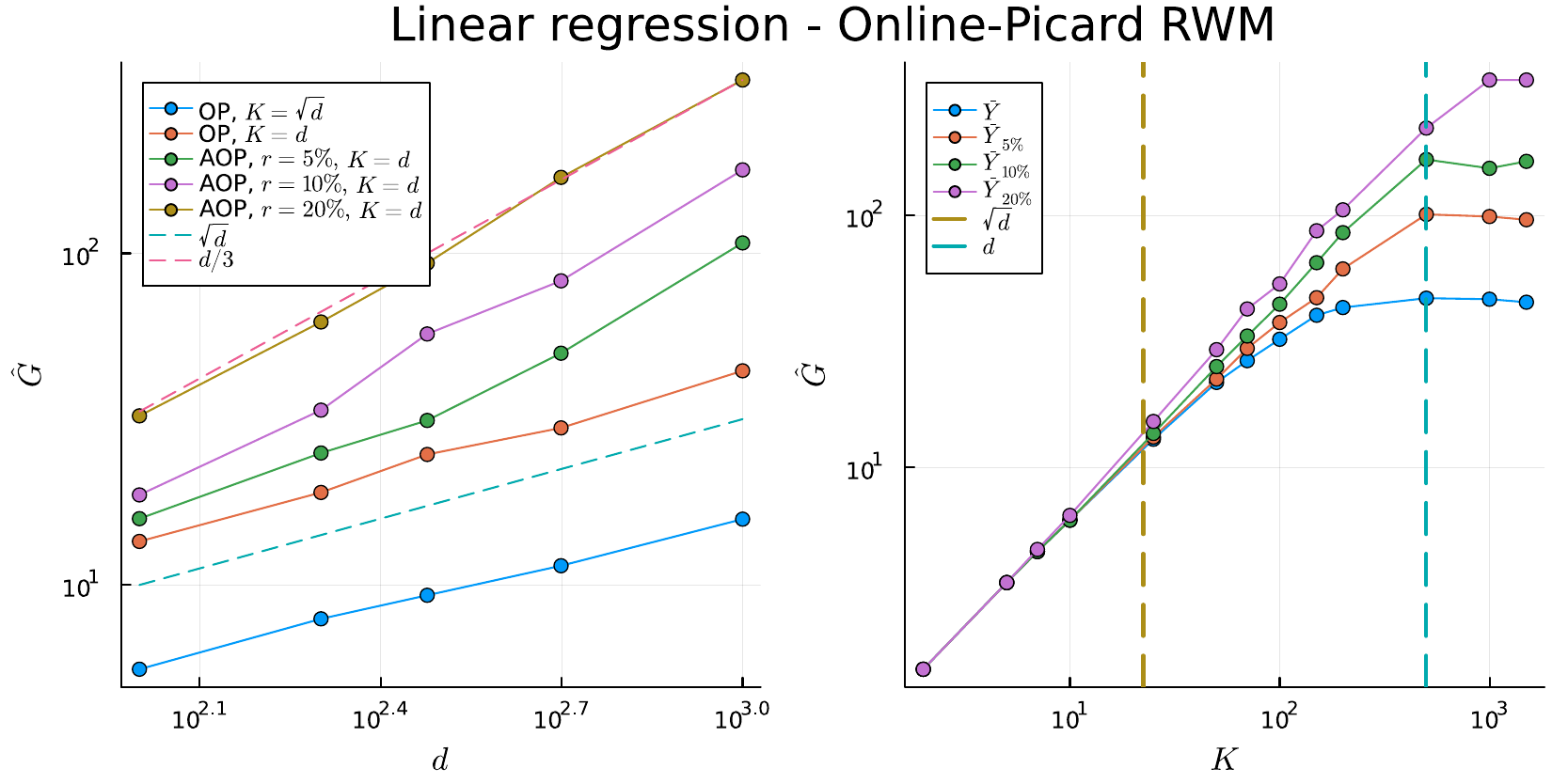}
    \caption{Performance of Online Picard algorithm ($\bar X$) and its approximate versions ($\bar X_r, \, r =5\%,\dots,20\%$) applied to RWM, with target being the linear regression model \textbf{E1}. Left panel: average speedup $\hat G$ ($y$-axis, on a log-scale) with $K \in \{\sqrt{d}, d\}$, $N = 10^4$ and $d = 10^2,\dots,10^3$ ($x$-axis, on a log-scale). Dashed lines $d \mapsto \sqrt{d}$ (blue) and $d \mapsto 3d/20$ (red) for reference. Right panel: average speedup $\hat G$ ($y$-axis, on a log-scale) for $d = 500$, $N = 10^4$ and $K = 2,\dots,1500$ ($x$-axis, on a log scale). Vertical dashed lines for $K = \sqrt{d}$ and $K = d$ for reference.}
    \label{fig: gauss}
\end{figure}

\begin{figure}[!ht]
    \centering
    \includegraphics[width=0.8\linewidth]{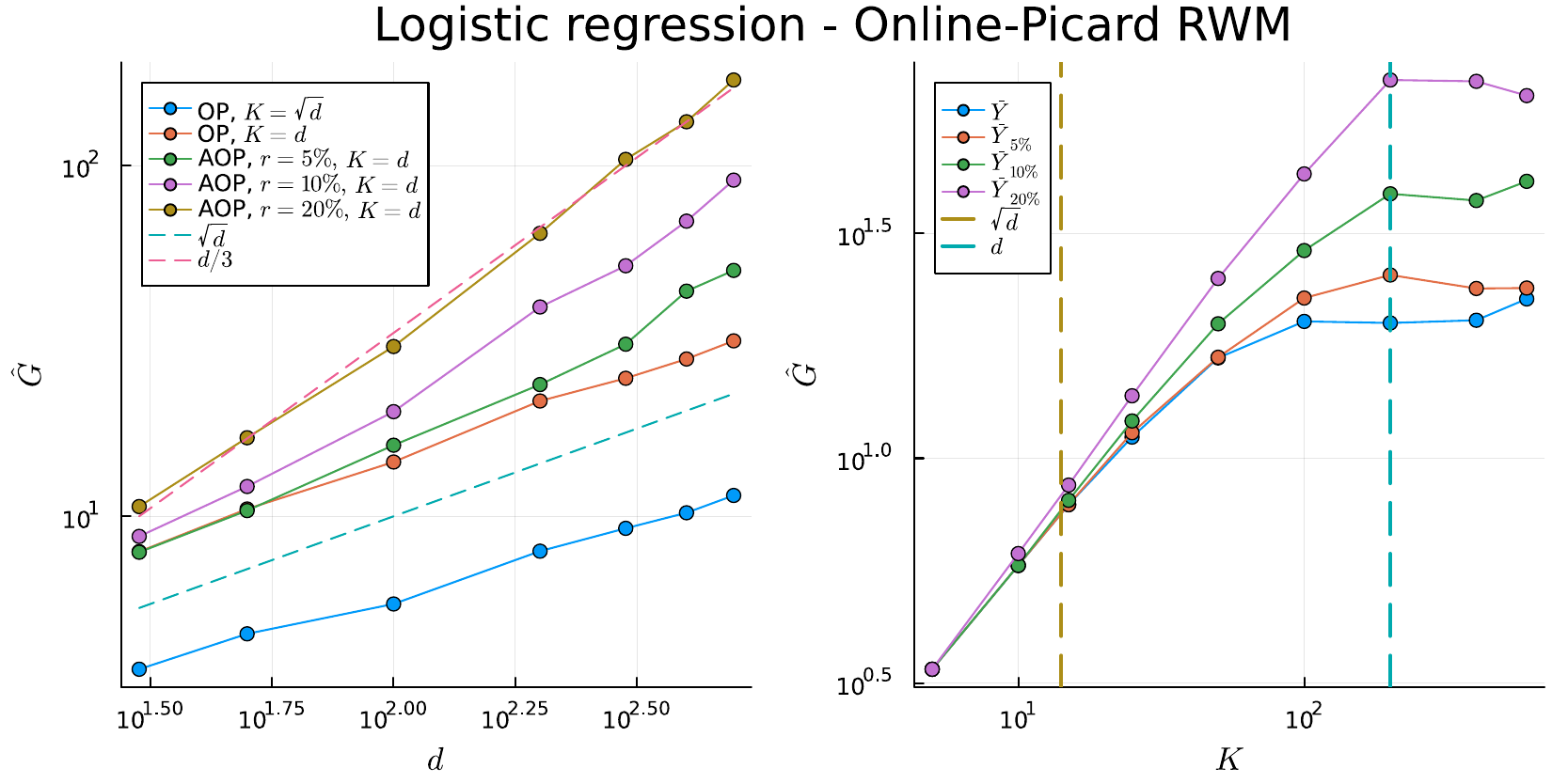} \includegraphics[width=0.8\linewidth]{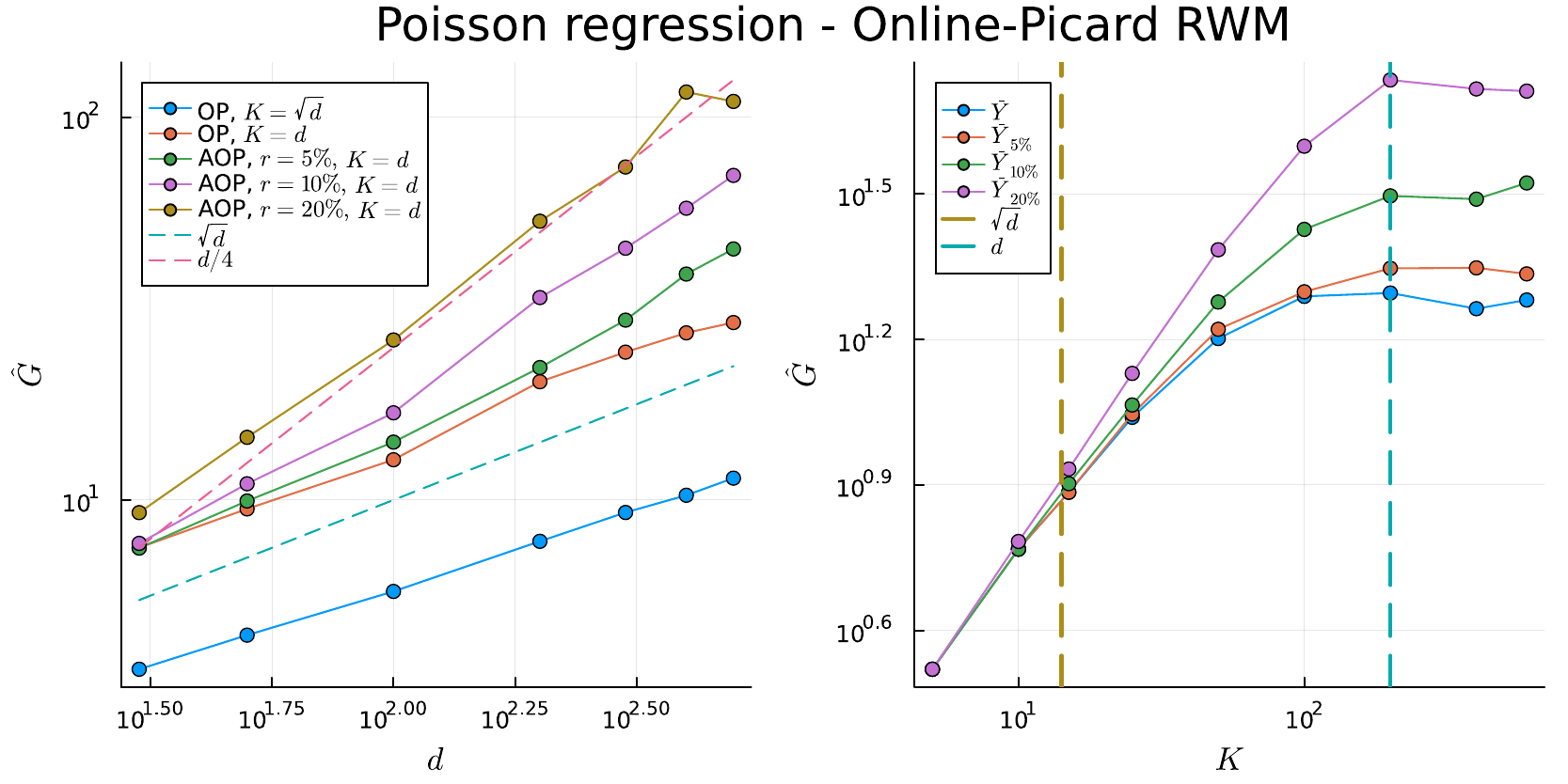}
   
    \caption{Same as Figure~\ref{fig: gauss} for the logistic regression model \textbf{E2}  (top panels) and the Poisson regression model \textbf{E3} (bottom panels), with $N = 10^4, K \in \{\sqrt{d}, d\}$   (left panels). $N = 10^4, d = 200$ (right panels).}
    \label{fig: logistic}
\end{figure}

\subsection{Bias of Approximate Online Picard}\label{sec: bias of approximate online picard}
We evaluate the bias introduced with the Approximate Online Picard algorithms by computing the standardized moments:
\begin{equation}
  \label{eq: error in estimation}  
\cM_r =\sqrt{\frac{1}{d}\sum_{i=1}^d\frac{(\hat \mu^{(r)}_i - \mu_i(\pi))^2}{\sigma_i^2(\pi)}}, \qquad \cE_r = \sqrt{\frac{1}{d}\sum_{i = 1}^d \left(\frac{\hat \sigma_i^{(r)} - \sigma_i(\pi)}{\sigma_i(\pi)}\right)^2}
\end{equation}
where $\hat \mu^{(r)}_i$ is the sample mean and $\hat \sigma_{i}^{(r)}$ is the sample standard deviation of the $i$th element estimated through the Markov chain $X_{0:N}$ simulated with the Approximate Picard algorithm with tolerance level $r$ while $\mu_i(\pi)$ and $\sigma_i(\pi)$ are the mean and standard deviation of the $i$th-marginal posterior distribution (obtained with a very long MCMC run, when not analytically tractable).

Figure~\ref{fig:bias} displays $\cM_r$ and $\cE_r$, defined in \eqref{eq: error in estimation}, as a function of $r$, for $K = d$ and  $d = 100$. When $r = 0$, the Approximate Online Picard algorithm reduces to the standard Online Picard algorithm, and $\cM_0, \cE_0$ correspond to the Monte Carlo estimation error arising from taking a finite number of steps $N$. Interestingly, for RWM, increasing $r$ tends to inflate the variance of the resulting approximate samples, in a way analogous to the bias observed in unadjusted Langevin algorithms \citep{roberts1996exponential, dalalyan2020sampling}, while this is not the case for MwG, which appears to be significantly more robust to larger values of $r$.
For these experiments, we set $N = 10^5$.

\begin{figure}
    \centering
    \includegraphics[width=0.8\linewidth]{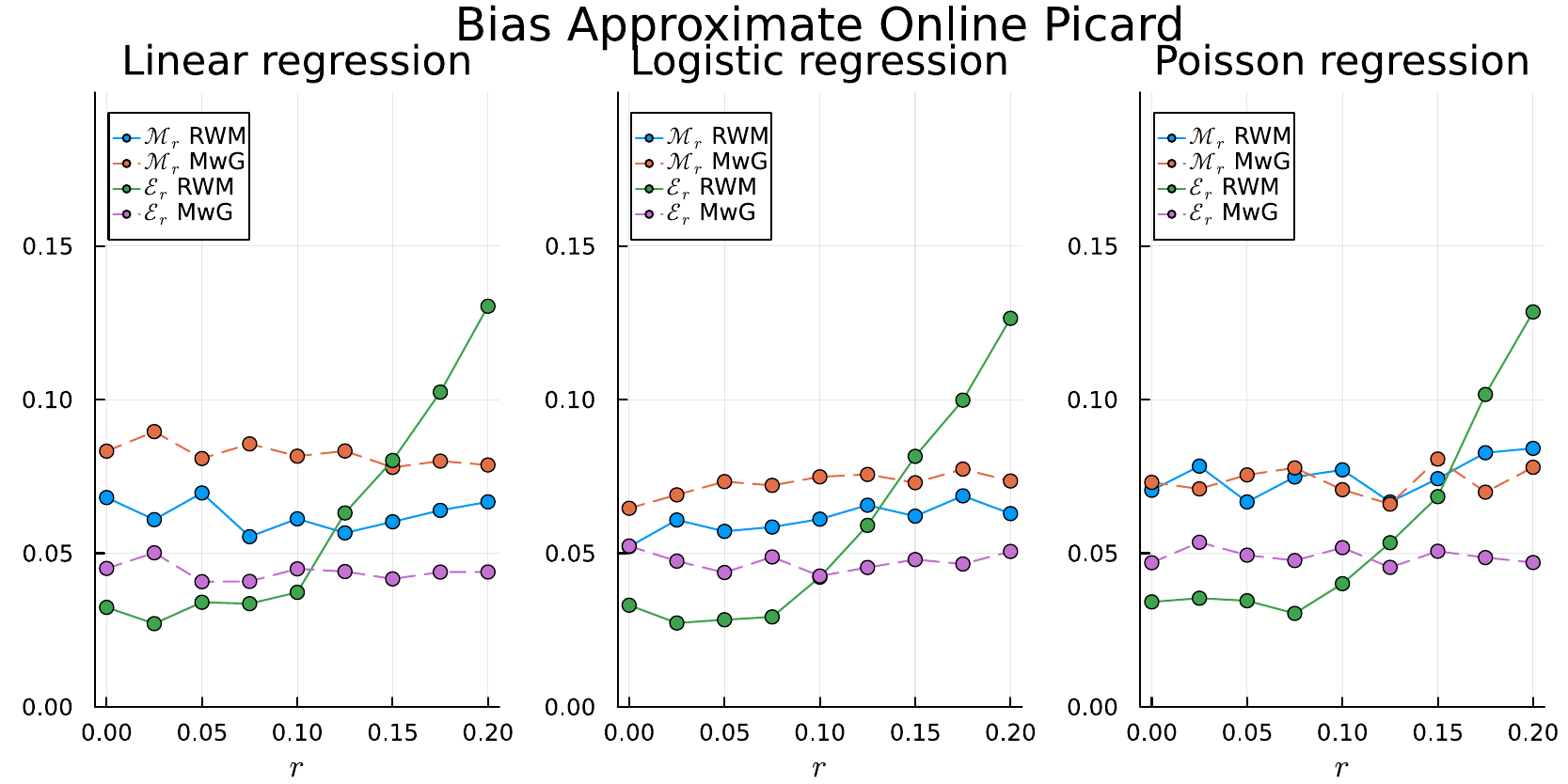}
    \caption{$\cM$ and $\cE$ as in \eqref{eq: error in estimation} for the Approximate Picard algorithms with tolerance ($x$-axis) ranging from $0\%$ (exact Picard algorithm) to $20\%$ for $d = 10^2,\,  K = d, \, N = 10^5$. $N = 4 \times 10^5$ (right panels) for the linear regression \textbf{E1} (left panel), logistic regression \textbf{E2} (middle panel), Poisson regression \textbf{E3} (right panel). Dashed lines for MwG.}
    \label{fig:bias}
\end{figure}

\subsection{Convergence in the tails}\label{sec: tails experiment}

We numerically assess the convergence of the Picard map for RWM and MwG chains started in the tails of the target distribution. Figure~\ref{fig: convergence in the tails} numerically shows the value of $L^{(1)}$, averaged over 10 simulations, for $K = 200$, $d = 200$, random initializations $X_0 \sim \text{Unif}(x \in \RR^d \colon \|x-x^\star\| = s)$, with $x^\star$ being the posterior mode, and target distribution given by posterior distribution of the logistic  model \textbf{E2} in Section~\ref{sec: high-dimensional regressions}. In all experiments, the tuning parameter $h$ was kept fixed and tuned in a preliminary run with the algorithm in stationarity.
Our results suggest that $L^{(1)}$ converges to $K$ as $s$ increases, which is coherent with Proposition~\ref{prop: transient phase}.

\begin{figure}
    \centering
    \includegraphics[width=0.5\linewidth]{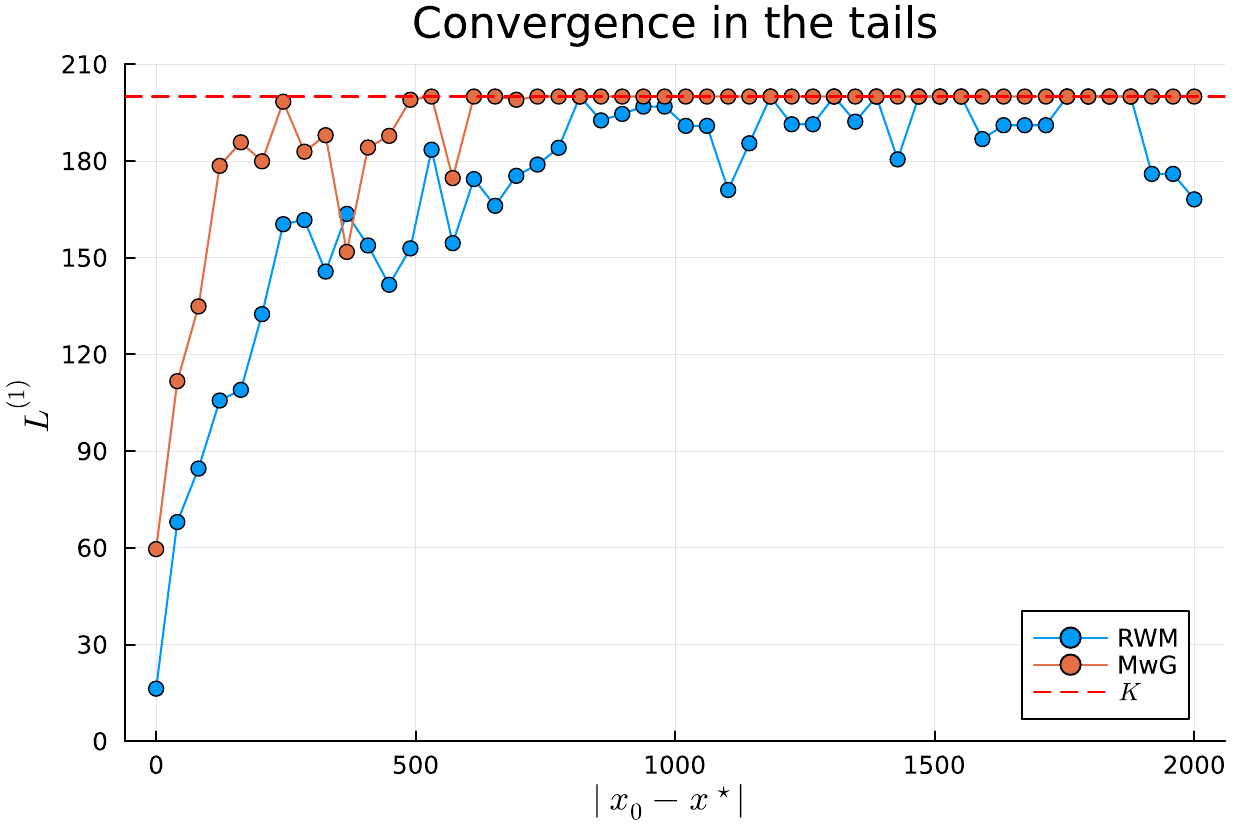}
    \caption{Empirical averages of $L^{(1)}$ ($y$-axis) for $X_0 \sim \text{Unif}(x \in \RR^d \colon \|x- x^\star\| = s)$, $s$ ranging from 0 to 2000 ($x$-axis) and target given by logistic regression \textbf{E2} with $d = K = 200$.}
    \label{fig: convergence in the tails}
\end{figure}

\subsection{SIR epidemic model}\label{sec: sir}

Let  $\{Y(t) \in \{S, I, R\}^M, t\ge 0\}$ be an infection process in a population of size $M$. Each coordinate $Y_i(t)$ takes values
\begin{align*}
y_i(t) &= \begin{cases}
    S & \text{if $i$ is \emph{susceptible} at time $t$}\\
    I & \text{if $i$ is \emph{infected} at time $t$}\\
    R & \text{if $i$ is \emph{removed} at time $t$}\\
\end{cases},& i = 1,2,\dots,M.
\end{align*}
Each coordinate process $t \to Y_i(t)$ is allowed to change state in the following direction:  $S \to I \to R$. In particular the coordinate process changes its state from $S$ to $I$ according to an inhomogeneous Poisson process with rate $t \to \beta_0 \cI_t$ where $\beta_0>0$ and $\cI_t = \sum_{i=1}^M \ind(y_i(t) = I)$ is the total number of infected individuals at time $t$. The transition from $I$ to $R$ of each individual occurs according to an exponential random variable with rate equal to $\gamma_0>0$. The parameters $\beta_0$ and $\gamma_0$ are fundamental as they jointly determine the \emph{reproduction number} $R_0 = M \beta_0/\gamma_0$. See \cite{baily1975mathematical} for a thorough mathematical description of this model and its extensions. 

Let $d\le M$ be the total number of individuals that have been infected during the epidemic. We assume that the epidemic has originated with 1 initial infected individual and has been observed until the epidemic becomes extinct, that is, after the last infected individual has been removed. 

As it is common for this class of models, we do not observe the infection times (i.e. when the coordinate process changes from $S \to I$) and we only observe the times where infected individuals change their state from $I$ to $R$. Hereafter,  we denote by $x \in \RR^{d}$ the \emph{latent} vector of infection times and by $x^\circ \in \RR^{d}$  with $x_i \le x^\circ_i,  \, 1 \le i \le d$, the \emph{observed} removed times of the infected population. We assume that $\gamma_0, \beta_0$ are unknown.   The likelihood for $(\gamma_0, \beta_0)$ becomes tractable only when the infection times of each infected individual are known and is given by
$$
\exp\left(-\beta A(x, x^\circ) + \sum_{\substack{i=1 \\ i\neq k(x)}}^d \log(\beta \cI_{x_i-})\right)\\ 
\left(\prod_{i=1}^d \gamma \exp(-\gamma (x^\circ_i - x_i))  \ind (x_i \le x^{\circ}_i)\right).
$$
Here $k  = \arg \min(x)$ is the label of the first infected individual, $t- = \lim_{s\uparrow t} s $ the left-hand limit and 
\begin{equation}
\label{eq: A(x)}
  A(x, x^\circ) = \int_{x_k}^T \cI_t \cS_t \dd t = \sum_{j=1}^d \left((M-d)  (x_j^\circ - x_j) + \sum_{i=1}^d (x_j^\circ \wedge x_i - x_j\wedge x_i)\right),  
\end{equation}
with $T = \max(x^\circ_1,x^\circ_2,\dots,x^{\circ}_d)$ and $\cS_t = \sum_{i=1}^M \ind(y_i(t) = S)$, is the total number of person-to-person units of infectious pressure exerted during the course of the epidemic, see \cite[Section 5]{neal2005case} for details of the likelihood and posterior measure in the set-up considered here. 

We set two independent Gamma priors for $\gamma$ and $\beta$  with prior parameters  $(\nu_\beta, \lambda_\beta)$ and $(\nu_\gamma, \lambda_\gamma)$ respectively.  We then integrate $\gamma, \beta$ out the posterior and get the marginal posterior density of the latent space of infection time:
\begin{multline}
    \label{eq: posterior SIR}
\pi(x) \propto \left(\prod_{\substack{i=1 \\ i\neq k}}^d \cI_{x_i-}\right)\left(\lambda_\beta + A(x, x^\circ)\right)^{-(d + \nu_\beta - 1)} \\ \left(\lambda_\gamma + \sum_{i=1}^d (x_i^\circ - x_i)\right)^{-(d + \nu_\gamma)} \left(\prod_{i=1}^d \ind (x_i \le x^\circ_i)\right).
\end{multline}
This posterior density is only piecewise continuous, with regions of discontinuity in correspondence to changes of the order of the  combined vector of infection and removal times $(x, x^\circ)$. Once $x$ is simulated, we can obtain posterior samples of $\gamma$ and $\beta$ generating them from the conditional distributions 
\begin{equation}
    \label{eq: conditional posterior}
    \gamma \mid x \sim \Gamma(d + \nu_\gamma, \, 1/(\lambda_\gamma + B(x, x^\circ) )), \qquad \beta \mid x \sim  \Gamma(d + \nu_\beta -1, \, 1/(\lambda_\beta + A(x, x^\circ)))
\end{equation}
where $B(x, x^\circ) = \sum_{i=1}^d (x^\circ_i - x_i)$ and $A(x, x^\circ)$ as in \eqref{eq: A(x)}.

We simulate synthetic data and adopt the same experimental setting of \citet[Section 5.2]{neal2005case}, that is we set
$M = 200$, $\beta = 0.001$ and $\gamma = 0.15$. With this setting, the reproduction number is equal to $R_0 = 4/3$. The forward model generates $d = 98$ infected individuals before the epidemic went extinct.

We set weak uninformative priors for $\beta$ and $\gamma$ with parameters $\lambda_\beta = \lambda_\beta = 0.001$  and $\nu_\beta = \nu_\beta = 1$ and run the Online-Picard applied to RWM, MwG and D-HMC \citep{nishimura2020discontinuous} with target density as in \eqref{eq: posterior SIR} and for $N = 10^6$ steps and with $K = \lfloor \sqrt{d} \rfloor = 9$ parallel processors. The chain is initialized out of stationarity, at $X_{0,i} \sim x^{\circ}_i - \text{Exp}(0.05)$  (verifying that the posterior evaluated at this initial value is non-zero, that is $\pi(X_0) \ne 0$). Moreover, we perform a second and computationally more intensive numerical experiment with $M = 400, \, R_0 = 8/3, \, d = 372,\, N = 5\times 10^6, \, K = \lfloor \sqrt{d} \rfloor = 19$.

Table~\ref{tab: results SIR} shows the results obtained. Here $\hat G$ is computed as in \eqref{eq: hat G},  $\cM_0$ and $\cE_0$ as in \eqref{eq: error in estimation}, and $d$ denotes the dimensionality of $x$. The Effective Sample size, which is defined in Section~\ref{sec: ESS} of the supplementary material, is estimated using the \emph{effectiveSize} function of the \emph{coda} R package \citep{coda}, applied to each coordinate of the Online Picard output (i.e.\ the full trajectory after burn-in). We report both the average and the minimum across coordinates. 
These values are complementary to those of $\hat G$ (as they do not directly relate to the parallelization performance of the Picard algorithm) and provide an empirical basis for comparing the three MCMC methods under consideration. 

In the examples considered here, D-HMC achieves the highest ESS value, followed relatively closely by MwG, while RWM is significantly less efficient for this problem.
The parallelization speedups (measured by $\hat G$) vary between 4 and 10 for all methods, are highest for MwG and increase with $d$ and $K$, in accordance with theory. We also report values of ESS$\times \hat G$, which represent the ESS per parallel iteration and thus combines MCMC efficiency with the parallelization performance of the Online Picard algorithm. Under this metric, D-HMC outperforms MwG, albeit only by a small margin. We set all Markov chains with burn-in equal to $N / 2$, while the true means and variances were estimated with a long run ($N = 10^7$ for the first  numerical experiment and $N = 5 \times 10^7$ for the second one, burn-in equal to $N / 5$ in both cases) of RWM, started at the true values $x$. Figure~\ref{fig: sir} shows the functionals $B(X, x^{\circ})$ and $A(X, x^{\circ})$ as in \eqref{eq: conditional posterior}, evaluated at every 10 steps of the RWM chain, with time index displayed with a yellow-blue gradient color. The right panel, shows the corresponding values of $(\beta,\gamma)$, simulated from the conditional posteriors in  \eqref{eq: conditional posterior}. 

\begin{table}[!t]
\caption{Online Picard applied to SIR model with  $d = 98$ ($K = 9,\, N = 10^6$) and $d = 372$ ($K = 19,\, N = 5\times 10^6$). Under ESS, we report the average ESS and minimum ESS in brackets. Bold characters for the best performance. \label{tab: results SIR}}%
\begin{tabular*}{\columnwidth}{@{\extracolsep\fill}cccccc@{\extracolsep\fill}}
 \hline
Markov chain & $\hat G $ & $\cM_{0}$ & $\cE_0$&  ESS & ESS$\times \hat G$.\\
 \hline
RWM $d = 98$ & 4.28 & 0.22 & 0.25 & {36.6 (4.69)} & 156.65 (20.07)\\
MwG $d = 98$& 4.94 & 0.09 & 0.12 & 404.57 (151.81)& 1998.58 (749.94) \\
 D-HMC $d = 98$&  3.98 &  0.08 &  0.09 &   508.15 (207.60)&  2022.44 (826.25)\\
\hline
RWM $d = 372$ & 5.15 & 0.39 & 0.26 & 18.58 (1.62)& 44.18 (13.90)\\
MwG $d = 372$& 9.70 & 0.12 & 0.10 & 169.68 (22.61)& 1645.89 (219.32)\\
D-HMC $d = 372$&  7.40 &  0.11 &  0.08 &  270.05 (42.13)&  1998.37 (311.76)\\
\hline
\end{tabular*}
\end{table}

\begin{figure}
    \centering    \includegraphics[width=0.8\linewidth]{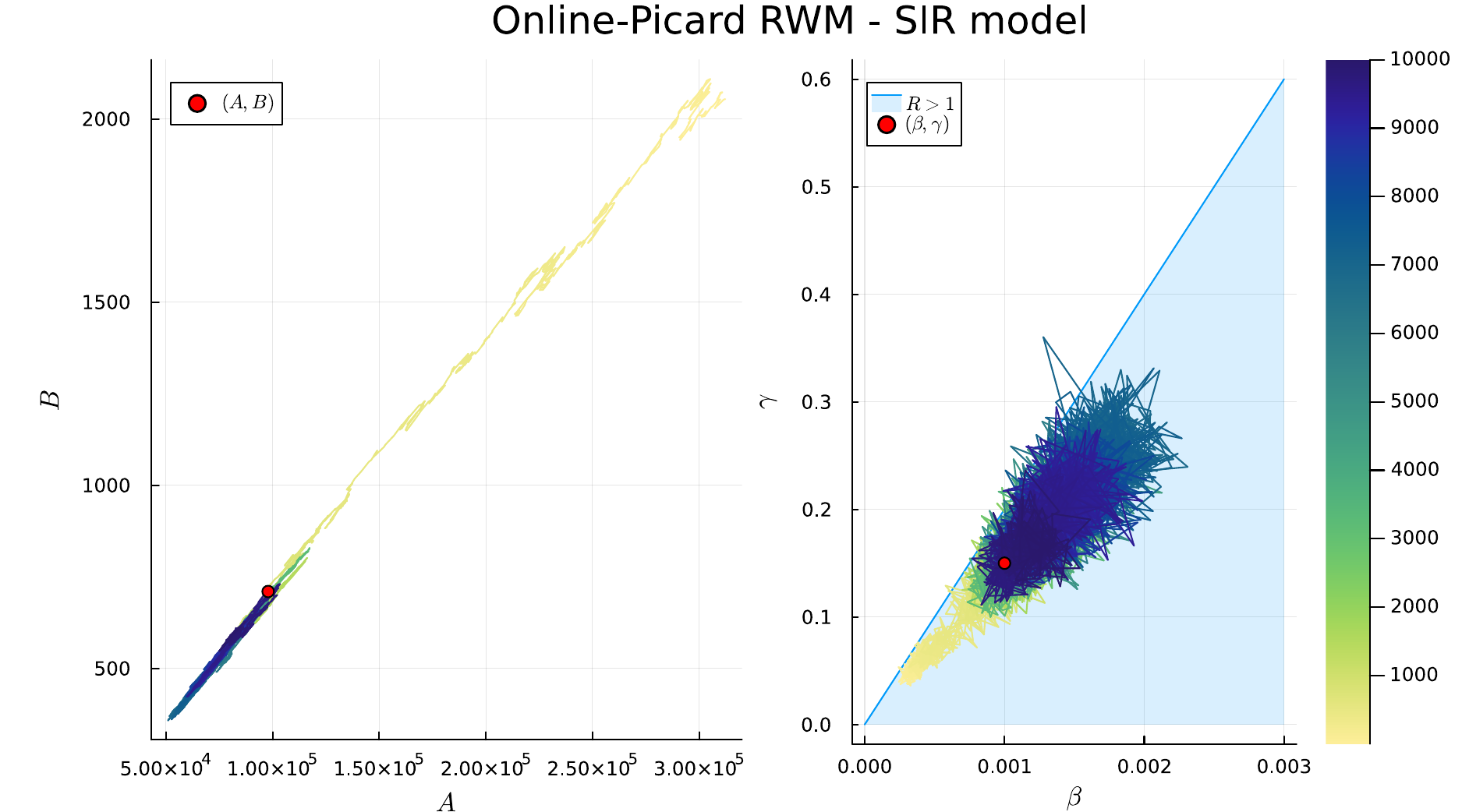}
    \caption{ $(A(X_i, x^{\circ}), B(X_i, x^{\circ}))$ (left panel) and independent samples of $(\gamma, \beta) \mid (X_{10 \,i}, x^{\circ})$ (right panel), from $i = 0$ (yellow) to $i = 10^{5}$ (blue) and $X_i$ being the $i$th step of the RWM. The regions $R_0 \ge 1$ is displayed in sky-blue. The true values from which we simulated the data are shown with red points.
    \label{fig: sir}}
\end{figure}

\subsection{A real-world application for zeroth-order parallel algorithms}\label{sec:blackbox}
We apply here a parallel implementation of the online Picard algorithm on a real statistical application in precision medicine. In this application, each patient measurement is modeled as a complex delayed ordinary differential equation, and the goal is to infer the model parameters to understand the effect of  personalized treatments on advanced-stage
cancer patients, see \citet{jenner2021silico} for more details. 
This application provides a concrete example where, due to black-box log-target evaluations where gradients are not available, \emph{zeroth-order} MCMC methods are a natural choice. This example is also well-suited to parallel methods, given that each target evaluation is computationally expensive, as it requires running numerical solvers.

A full description of the model and dataset is provided in \cite[][Section~5.2]{gagnon2023improving}, where it was also used as a test-case for parallel zeroth-order MCMC methods.  The parameter dimension is $d = 14$. On an Apple M3 CPU with $16$GB of RAM, a single evaluation of $\pi$ takes approximately $0.25$ seconds, while one iteration of a parallel implementation of the Picard algorithm  with $K=8$ parallel CPU cores 
takes approximately $ 0.4$ seconds, see Section~\ref{sec: practical algorithmic aspects} in the supplement for details. We set the total number of iterations to $N = 10^5$. 
For this application, we observe $\hat G = 4.37$ and an effective speedup (i.e.\ reduction in total wall-clock time, see Section \ref{sec:effective speed-up}) of 2.52 when using the parallel implementation of the online Picard algorithm rather than the sequential implementation of RWM. These results are consistent with \eqref{eq: observed speedup} in Section~\ref{sec:effective speed-up} and the theoretical results in Section~\ref{sec: Complexity of Picard algorithms}.

\section{Extensions}
We outline here some extensions of this work, which are left for future research.
First, the Online Picard recursion \eqref{eq: online picard 1} sets $X^{(j+1)}_i$ for $i \ge U^{(j)}$ to a constant value equal to the last coordinate of the Picard map $\Phi$. While this choice is simple and reasonable, it does not exploit the intrinsic dynamics of the underlying Markov chain and the noise variables beyond the time index $U^{(j)}$. More elaborate recursions are possible, combining Picard maps with inexpensive predictive functions that estimate the values of the Markov chain at those steps where the Picard map is not been applied yet. An example of such schemes is provided by the (stochastic) Parareal algorithm \citep{lions2001resolution, pentland2022stochastic}. It would be interesting to explore and characterize the gains obtained by these approaches in our high-dimensional zeroth-order context.
Second, by analyzing the family of approximate measures $\pi_r$ for $r > 0$ in Section~\ref{sec: approximate online-picard}, it may be possible to establish the parallel round complexity of the Approximate Online Picard algorithm. Our numerical results suggest that this algorithm can achieve a parallel round complexity of order $\cO(1)$ using $\cO(d)$ parallel processors.
Finally, Proposition~\ref{prop: MC picard sup} in the supplementary material suggests that trying to characterize explicitly the invariant distribution of  the Markov chain $(X^{(j)}_{L^{(j)}:L^{(j)}+K}, W_{L^{(j)}:L^{(j)}+K-1})_{j \in \NN}$ might be a promising way to improve the analysis of the Online Picard algorithm, e.g.\ going beyond classical log-concavity or obtaining more explicit expressions, or quantifying the error made by the approximate Picard version.

\section*{Acknowledgments}
Both authors acknowledge support from the European Research Council (ERC), through the Starting grant ‘PrSc-HDBayLe’, project number 101076564.  Both authors are grateful to Florian Maire for guiding us through the black-box code used for the application in Section~\ref{sec:blackbox}.

\bibliographystyle{biometrika}
\bibliography{paper-ref}

\pagebreak
\appendix

{
\begin{center}
\textbf{\Large Supplementary Materials for “Parallel computations for Metropolis Markov chains with Picard maps” by S. Grazzi and G. Zanella}
\end{center}
}
\section{Assumption~\ref{ass: 1} for Bayesian Logistic Regression}\label{sec:logistic_regression}
Here we verify Assumption \ref{ass: 1} for the logistic regression with Gaussian prior $\cN_d(0, \Lambda^{-1}_0)$. Assume for $i =1,2,\dots,n$, $Y_i \mid A_i \overset{\text{i.i.d.}}{\sim} \mbox{Ber}(\psi(\langle A_i, x \rangle )),$ with $\psi(u) =(1 + e^{- u})^{-1}$ and unknown parameter $x \in \RR^d$, response variables $Y_i \in \{0,1\}$ and covariates $A_i \in \RR^d$. Then, the negative log-posterior takes the form 
$$
V(x) = \frac{1}{2}x^\top\Lambda_0 x + \sum_{i=0}^n \left(y_i \log(\psi(A_i^\top x)) + (1- y_i) \log(1- \psi(A_i^\top x))\right)
$$
with Hessian 
$$
\nabla^2 V(x) = \Lambda_0 + \sum_{i=1}^n \psi(A_i^\top x)(1 -\psi(A_i^\top x))A_i A_i^\top. 
$$
Since  $\psi(A_i^\top x)(1 -\psi(A_i^\top x))\le 1/4$, we have that 
$$
\nabla^2 V(x) \preccurlyeq \Lambda_0 + \frac{1}{4}\sum_{i=1}^n  A_i A_i^\top,
$$
thus $V$ is $L$-smooth with 
$$
L := \lambda_{max}(\Lambda_0 +  \frac{1}{4}\sum_{i=1}^n  A_i A_i^\top).
$$
Define $w(t) := \psi(t)(1 - \psi( t))$. One can check that $w$ is Lipschitz with constant $C = 1/(6\sqrt{3})$. Since $\nabla^2 V(y) - \nabla^2 V(x) = \sum_{i=1}^n (w(A_i^\top y) - w(A_i^\top x))A_i A_i^\top$, we have
$$
- \|y-x\| M \preccurlyeq  \nabla^2 V(y) - \nabla^2 V(x) \preccurlyeq  \|y-x\| M
$$
with $M =  C \sum_{i=1}^n \|A_i\|A_i A_i^\top$. Thus, Assumption~\ref{ass: 1} holds with 
$
\gamma =  \mathrm{Tr}(M) / (L^{3/2} d^{1/2}),
$
and $\mathrm{Tr}(M) = C \sum_{i=1}^n \|A_i\|^3$. Note that, if one further assumes that $A_i \sim \cN(0, I_d / d)$ (as in Section~\ref{sec: high-dimensional regressions}), $\Lambda_0 = \alpha I_d$ and $n/d \to \rho< \infty$, then, by standard high-probability bounds for Gaussians, one can easily show that 
$\mathrm{Tr}(M) = n C (1 + o_{\PP}(1))$, 
$L = \alpha + \frac{1}{4}(1 + \sqrt{\rho})^2 + o_{\PP}(1)$,
where $X_d = o_{\PP}(a_d)$ means $X_d/a_d \to 0$ in probability. Thus
$$
\gamma = C \frac{n}{L^{3/2} d^{1/2}}(1 + o_{\PP}(1))\,.
$$

\section{Proofs for Random Walk Metropolis}\label{app: proofs}

\subsection{Background lemmas}
We first state some background lemmas that will be helpful to streamline the proofs of the main results.

\begin{lemma}\label{lemma: bound exponential}
    Let $X \sim \cE(\lambda)$, where $\cE(\lambda)$ denotes the exponential distribution with rate $\lambda$. 
    Then for any $a,b \in \RR$ we have $\PP(X \in [a \wedge b, a \vee b]) \le \PP(X \le a \vee b - a \wedge b)$, where $a\wedge b$ and $a\vee b$ denote, respectively, the minimum and maximum between $a$ and $b$.
\end{lemma}
\begin{proof} Without loss of generality, let $a \le b$. For the cases $a,b \le 0$ and $a \le 0, b \ge 0$, Lemma~\ref{lemma: bound exponential} is trivial, while for $a,b \ge 0$, it
 follows from the fact that the density function of an exponential random variable is non-increasing.
\end{proof}
\begin{lemma}\label{lemma: mini-lemma 2}
$Z \sim \cN(0, \sigma^2I_d)$ with $\sigma\geq 0$. 
Then, for every $v\in\RR^d$, we have
$
 \EE[|\langle v, Z\rangle|
]=\|v\|\sqrt{\frac{2}{\pi}}\sigma.
$
\end{lemma}
\begin{proof}
    By rotational invariance of $\cN(0, \sigma^2I_d)$, we have $
     \EE[|\langle v, Z\rangle| =\|v\| \EE[| Z_{1}|]
$,
where $|Z_1|$ follows a one-dimensional folded normal distribution and has mean $\sqrt{\frac{2}{\pi}}\sigma$.
\end{proof}

\begin{lemma}\label{lemma: quadratic symmetric matrix}
Let $Z\sim \cN(0, \sigma^2 I_d)$ and $M$ be a $d \times d$ random matrix independent of $Z$. Then   
$\EE[Z^\top M Z] = \sigma^2 \mathrm{Tr}(\EE[M])$.
\end{lemma}
\begin{proof}
$\EE[Z^\top M Z]= \sum_{i,j}\EE[ Z_i Z_j] \EE[M_{i,j}]= \sum_{i}\EE[ Z_i^2] \EE [M_{i,i}] =  \sigma^2 \mathrm{Tr}(\EE[M])$.
\end{proof}

\begin{lemma}\label{lemma: bounds expectation with indicator function}
     For any non-negative random variable $Z$ and constant $c > 0$, we have 
     $$
     \EE[Z \ind (Z > c)] =  \int_c^\infty  \PP(Z \ge u)] \dd u  + c \PP(Z > c).
     $$
\end{lemma}
\begin{proof}
Follows from
\begin{align*}
\EE[Z \ind (Z > c)]-c \PP(Z > c)
&= \EE[ \int_0^Z  \ind ( u > c)\dd u ] = \EE[ \int_c^\infty  \ind ( u < Z)\dd u]\\
&=\int_c^\infty  \EE[\ind ( u < Z)\dd u]= \int_c^\infty  \PP(Z \ge u) \dd u\,,
\end{align*}
where we used Fubini-Tonelli's Theorem in the third equality. 
\end{proof}

\subsection{Contraction of the Picard map: key lemmas}
The following three lemmas are at the core of the analysis of the contraction of the Picard map $\Phi$, as well as of the proofs of the main results.

\begin{lemma}\label{lemma: 1}
    Under Assumptions~\ref{ass: 1}-\ref{ass: 1.2}, for every $x, y \in \cX$,
    $$
    \PP(f(x, W) \ne f(y, W)) \le \frac{h L^{1/2}}{d^{1/2}}\left(\sqrt{\frac{2}{\pi}} + \frac{h \gamma}{2}\right)\|x-y\|, \qquad W\sim \nu.
    $$
\end{lemma}

\begin{proof}
 \label{app: proof: lemma 1}
By \eqref{eq:zero_f}, under Assumptions \ref{ass: 1}-\ref{ass: 1.2}, we have that
$$
\PP(f(x, W) \ne f(y, W)) = \PP(B(x, U, Z) \ne B(y, U, Z))
$$
with $U \sim \text{Unif}([0,1])$ and $Z \sim \cN(0, h^2/(Ld)I_d)$. 
Since $E = -\log(U) \sim \cE(1)$, by Lemma~\ref{lemma: bound exponential} we have
\begin{align*}
    \PP(B(x,  U,Z)&\ne B(y,  U,Z) \mid Z )\\
    &=\left|\PP\left(E < V(x + Z) - V(x)\right) - \PP\left( E < V(y + Z) - V(y) \right) \right|\\
    &\le 1 - \exp(-|\Delta| )
\end{align*}
where $\Delta =  V(x + Z) - V(x) - (V(y+ Z) - V(y))$. Thus
\begin{align*}
\PP(B(x,  U,Z) \ne B(y,  U,Z))
&=   \EE[\PP(B(x,  U,Z) \ne B(y,  U,Z) \mid Z )]
\\&\leq
\EE[
1 - e^{-|\Delta|}
] 
\\&\leq 
\EE[
|\Delta|
]\,,
\end{align*}
since $1 - e^{-x} \le x, x \in \RR$.

Since $V$ is twice-continuously differentiable we have
\[
\Delta =  \underbrace{(\nabla V(y) - \nabla V(x))^\top Z}_{\Delta_1} + \underbrace{\frac{1}{2}Z^\top (\nabla^2 V(x + tZ) - \nabla^2 V(y + tZ)) Z}_{\Delta_2}
\]
for some $t=t(x,y) \in [0, 1]$.
By the triangle inequality
\begin{equation*}
    \EE[|\Delta|] \le \EE[|\Delta_1|] +\EE[|\Delta_2|].
\end{equation*}
For the first term, by 
$Z \sim \cN(0, h^2/(Ld)I_d)$, Lemma~\ref{lemma: mini-lemma 2} and the $L$-smoothness of $V$, we have
$$
\EE[|\Delta_1|] = \sqrt{\frac{2}{\pi}}\frac{h}{\sqrt{L d}}\|\nabla V(y) - \nabla V(x)\| \leq \sqrt{\frac{2}{\pi}}h\sqrt{\frac{L}{d}}\|x-y\|^2.
$$
For the second term, by the strong Hessian-Lipschitz condition in Assumption \ref{ass: 1} and Lemma~\ref{lemma: quadratic symmetric matrix}, we have
\begin{align*}
    \EE[|\Delta_2|] 
    &= \frac{1}{2}\EE[|Z^\top (\nabla^2 V(x + tZ) - \nabla^2 V(y + tZ)) Z |] \\
    &\le \frac{1}{2}\EE[|Z^\top M Z]\|x-y\|\\
    &= \frac{h^2}{2}\frac{\mathrm{Tr}(M)}{Ld}\|x - y\|\\
    &\le\frac{h^2}{2}\frac{\gamma L^{3/2} d^{1/2}}{Ld}\|x - y\|\\
    &=\frac{h^2\gamma L^{1/2}}{2d^{1/2}}\|x - y\|\,.
\end{align*}
Hence, we have
\begin{align*}
    \EE[|\Delta|] &\le \EE[|\Delta_1|] + \EE[|\Delta_2|]
    \le \frac{h L^{1/2}}{d^{1/2}}\left(\sqrt{\frac{2}{\pi}} + \frac{h \gamma}{2}\right) \|x-y\|
\end{align*}
as desired.
\end{proof}

\begin{lemma}
    \label{lemma: 2}
    Under Assumption~\ref{ass: 1.2} and for all $x,y \in \cX^{K+1}$ with $ x_0 = y_0$, $w_0\in \cW$ and $1 < i \le d$,
    \begin{align}\label{eq: lemma 6}
        \EE[\|\Phi_i(x,  W) - \Phi_i( y, W)\|^2 \mid W_0 = w_0] &\le \frac{15 h^2}{L}\sum_{\ell = 1}^{i-1}\left(\PP(f(x_\ell, W_\ell) \ne f(  y_\ell,  W_\ell)) + \delta(d)\right),
    \end{align}
with $W_{1:K-1} \sim \nu^{\otimes (K-1)}$, $\nu^{\otimes p}$ being product measure of $\nu$ repeated $p$ times, and $\delta(t)$ as in Theorem~\ref{thm: main theorem 1}. The inequality \eqref{eq: lemma 6} holds also when $x$ and $y$ are replaced by random variables $X, Y$ on $\cX^{K+1}$, which can depend on $W$. 
\end{lemma}

The proof of Lemma~\ref{lemma: 2} relies on the following bounds.

\begin{lemma}\label{lemma:wish}
Let $W\sim \hbox{Wishart}(I_i,d)$ be a Wishart matrix on $\RR^{i\times i}$ with $d$ degrees of freedom and identity scale matrix, and $Y$ be a random vector on $\RR^i$ defined on the same probability space, with $\|Y\|^2\leq i$ almost surely. Then
\begin{equation*}
\EE[Y^TWY]\leq 15\EE[\|Y\|^2]
+i25 \exp(-3d/2)\,.
\end{equation*}
\end{lemma}
\begin{proof}

By \citet[Theorem 6.1]{wainwright2019high}, for all $t \ge 0, \, i \le d$, 
$$
\PP\left(\lambda_{\max}(W) > (1 + \sqrt{i/d} + t)^2 \right) \le e^{-t^2 d/2}
$$
for its maximum eigenvalue. Combining this result and Lemma~\ref{lemma: bounds expectation with indicator function}, we have 
\begin{align*}
  \EE[\lambda_{\max}(W) \ind(\lambda_{\max}(W) \ge 15)]   &=
  \int_{15}^\infty \PP(\lambda_{\max}(W) \ge u)\, \dd u +  15 \PP(\lambda_{\max}(W) > 15) \\
  &\le \int_{15}^\infty  e^{- (\sqrt{u}-2)^2 d/2} \dd u + 15 e^{-d (\sqrt{15} - 2)^2/2} \\
&\le \int_{15}^\infty  e^{- u d/10} \dd u + 15 e^{-3d /2}\\
  &= \frac{10}{d}\exp(-3d/2) + 15 \exp(-3d/2)\\
  &\le 25 \exp(-3d/2)
\end{align*}
where we used $(\sqrt{u} - 2)^2 > u/5$ for $u \ge 15$.
Finally, 
\begin{align*}
    \EE[Y^T W Y]&\le \EE[\lambda_{max}(W) \ind(\lambda_{max}(W) \le 15)\|Y\|^2] + i\EE[\lambda_{max}(W) \ind(\lambda_{max}(W) > 15)]) \\
    & \le 15 \left(\EE[\|Y\|^2] + i\delta(d)\right)
\end{align*}
with $\delta(d) = \frac{5}{3} \exp(-3d/2)$.
\end{proof}

\begin{proof}[of Lemma~\ref{lemma: 2}]
By the definition of $\Phi_i$ in \eqref{eq: picard recursion 0} and $f$ in \eqref{eq:zero_f}, we have 
\begin{align} \label{eq: norm picard map}
\Phi_i(x,  W) - \Phi_i(y,  W)
&=
\sum_{\ell=1}^{i -1}\Delta B_\ell Z_\ell 
\end{align}
where $\Delta B_\ell =  B( x_\ell, U_\ell, Z_\ell)- B( y_\ell, U_\ell, Z_\ell)$, and \eqref{eq: norm picard map} is independent of $W_0$. 
Here $U_1,U_2,\dots,U_{i-1} \overset{\mathrm{i.i.d}}{\sim}\text{Unif}([0,1])$ and $Z_1,Z_2,\dots,Z_{i-1} \overset{\text{i.i.d}}{\sim} \cN(0, h^2 /(Ld))$.
Also,
\begin{align}
\label{eq: IW2}
\|\sum_{\ell=1}^{i -1}\Delta B_\ell Z_\ell\|^2 =   \sum_{\ell,j =1}^{i-1}\Delta B_\ell \Delta B_j \langle Z_\ell, Z_j\rangle
 \stackrel{d}=
\frac{h^2}{L} \Delta B^T W \Delta B
\end{align}
with  $\Delta B = (\Delta B_1,\dots,\Delta B_{i-1})^T$ and $W \sim \hbox{Wishart}(I_{i-1}, d)$. Here $ \stackrel{d}=$ denotes equality in distribution. 
Combining \eqref{eq: norm picard map}-\eqref{eq: IW2} and using Lemma~\ref{lemma:wish} we have
\begin{align*}
&\EE[\|\Phi_i(x,  W) - \Phi_i(y,  W)\|^2\mid W_0=w_0]=
\frac{h^2}{L} \EE[ \Delta B^T W \Delta B]  \le \frac{15h^2}{L} (\EE[\|\Delta B\|^2] + i\delta(d))
\end{align*}
 with $\delta(d) =  \frac{5}{3}\exp(-3d/2)$.
The result follow by $\EE[\|\Delta B\|^2]=\sum_{\ell=1}^{i -1}\PP( B(x_\ell, U_\ell, Z_\ell)\ne B( y_\ell, U_\ell, Z_\ell))$. 
\end{proof}

Fix $i\in\{0,\dots,d\}$ and define 
\begin{align*}
A^{(j)} &=  \max_{\ell \le i}\PP(f(X^{(j)}_\ell,  W_\ell) \ne f(X_\ell, W_\ell)),
&j\in\{0,1,\dots\}\,.    
\end{align*}
\begin{lemma}
\label{lemma: recursion for max probability}
Under the assumptions of Theorem~\ref{thm: main theorem 1}, we have
    \begin{equation*} 
    A^{(j+1)} \le \sqrt{c_0\frac{i}{d}\left(A^{(j)} + \delta(d)\right)}
    \end{equation*}
    with $c_0$, $\delta(d)$ as in Theorem~\ref{thm: main theorem 1}.
\end{lemma}

\begin{proof}
By Lemma~\ref{lemma: 1} and Jensen's inequality we have
    \begin{align*}
    \PP(f(X^{(j+1)}_i,  W_i) \ne f(X_i, W_i)) & = \EE[\PP(f(X^{(j+1)}_i,  W_i) \ne f(X_i, W_i)\mid X^{(j+1)}_i, X_i)]\\
    &\le  \frac{h L^{1/2}}{d^{1/2}}\left( \sqrt{\frac{2}{\pi}} + \frac{h \gamma}{2} \right)\EE[\| X^{(j+1)}_i-X_i\|]
    \\
    &\le  \frac{h L^{1/2}}{d^{1/2}}\left( \sqrt{\frac{2}{\pi}} + \frac{h \gamma}{2} \right)\sqrt{\EE[\| X^{(j+1)}_i-X_i\|^2]}\,.
    \end{align*}
    By Lemma~\ref{lemma: 2} we have that
    \begin{align*}
    \EE[\| X^{(j+1)}_i-X_i\|^2]&=\EE[\| \Phi_i(X^{(j)},W)-\Phi_i(X,W)\|^2]\\
    &\le\frac{15h^2}{L} \sum_{\ell = 1}^{i-1}   \left(\PP(f(X^{(j)}_\ell,  W_\ell) \ne  f(X_\ell, W_\ell)) + \delta(d)\right)\\
    &\le\frac{15h^2}{L} i   \left(A^{(j)} + \delta(d)\right)\,.
    \end{align*}
Combining the above gives
    \begin{align*}
    \PP(f(X^{(j+1)}_i,  W_i)\ne f(X_i, W_i))  
    &\le  \sqrt{15} h^2\left( \sqrt{\frac{2}{\pi}} + \frac{h \gamma}{2} \right) \sqrt{\frac{i}{d}\left(
    A^{(j)} + \delta(d)\right)}\,,
    \end{align*}
which implies the desired result after noting that $\max_{\ell\leq i}$ is a non decreasing operator in $i$.
\end{proof}

\subsection{Proof of Theorem~\ref{thm: main theorem 1}}\label{app: proof: thm 1}

\begin{lemma}\label{lemma:recursion}
Let $(a_j)_{j=0,1,\dots}$ be a non-negative sequence satisfying $a_0 = 1$ and
$a_{j+1}\leq b\sqrt{a_j+\epsilon}$ for some fixed $b,\epsilon>0$ and all $j\geq 0$. Then
$a_j\leq b^2+\epsilon+2^{-j}$ for all $j\geq 0$.
\end{lemma}
\begin{proof}
Let $g(a)=b\sqrt{a+\epsilon}$.
By monotonicity of $g$, we can assume without loss of generality that $a_0=1$ and $a_{j+1}= g(a_j)$ for all $j\geq 1$. 
The function $g$ is concave and its unique fixed point on $[0,\infty)$ satisfying $a^*=g(a^*)$ is 
$$
a^*=\frac{b^2+\sqrt{b^4+4\epsilon b^2}}{2}\in(b^2,b^2+\epsilon)\,.
$$
Indeed, imposing $g(a)^2=a^2$ gives $a^2-b^2a-b^2\epsilon=0$, whose solutions are $(b^2+\sqrt{b^4+4\epsilon b^2})/2$ and $(b^2-\sqrt{b^4+4\epsilon b^2})/2$, and the latter is negative.
Also, $(b^2+\sqrt{b^4+4\epsilon b^2})/2\geq (b^2+\sqrt{b^4})/2=b^2$ and, by concavity of $x\mapsto \sqrt{x}$, $\sqrt{b^4+4\epsilon b^2}\leq b^2+\frac{1}{2\sqrt{b^4}}4\epsilon b^2=b^2+2\epsilon$ which implies
$a^*=(b^2+\sqrt{b^4+4\epsilon b^2})/2\leq (b^2+b^2+2\epsilon)/2=b^2+\epsilon$. 

If $a_0=1\leq a^*$ then $a_j\leq a^*\leq b^2+\epsilon\leq b^2+\epsilon+2^{-j}$ for all $j\geq 0$, as desired.
Consider now $a_0=1\geq a^*$.
Since $g'$ is strictly decreasing on $[0,\infty)$, for all $a\geq a^*$ we have $0< g'(a)\leq g'(a^*)\leq g'(b^2-\epsilon)=1/2$.
Thus, $g$ is a strict contraction on $[a^*,\infty)$ with Lipschitz constant smaller than $1/2$ and, by Banach's fixed point theorem, we have
$
|a_j-a^*|\leq 2^{-j}|a_0-a^*|
$ for all $j\geq 0$.
By $a_0=1\geq a^*$, we have $|a_0-a^*|\leq a_0=1$ and thus we deduce
$
a_j
\leq
a^*+ 2^{-j}
\leq
b^2+\epsilon+ 2^{-j}
$
as desired.
\end{proof}

\begin{proof}[of Theorem~\ref{thm: main theorem 1}].
    Follows directly by Lemma \ref{lemma: recursion for max probability} and Lemma \ref{lemma:recursion} with $b=\sqrt{c_0 i/d}$, $\epsilon=\delta(d)$, and $c_0, \delta(d)$ as in Theorem~\ref{thm: main theorem 1}.
\end{proof}

\subsection{Proof of Corollary~\ref{corol: picard map}}\label{app: proof: thm 2} 
By \eqref{eq: online picard 2} we have
   \begin{align*}
   \PP(L^{(j)} < K) &=
    1-\PP \left( \bigcap_{i=0}^{K -1 } f(X^{(j)}_i,  W_i) = f(X_i, W_i) \right)\\ 
    &= \PP \left(\bigcup_{i=0}^{K-1} f(X^{(j)}_i,  W_i) \ne f(X_i, W_i) \right)\,.
\end{align*}
Applying the union bound and the bound in Theorem~\ref{thm: main theorem 1}, for $K \le b \sqrt{d}, b > 0$, it follows
   \begin{align*}
   \PP(L^{(j)} < K) &\leq \sum_{i=0}^{K-1} \PP(f(X^{(j)}_i,  W_i)\ne f(X_i, W_i))
    \le  b\sqrt{d}\left(\frac{c_0 b}{\sqrt{d}} + \delta(d) + 2^{-j}\right)\,.
\end{align*}
This is less than $\epsilon$ if
$c_0 b^2 = \frac{\epsilon}{2}$  and 
$b \sqrt{d}(\delta(d) + 2^{-j}) \le \epsilon/2$.
The first condition is achieved by setting $b = \frac{\epsilon}{\sqrt{2 c_0}}$. With this choice, the latter condition is always satisfied for $h\ge 1, \, d\ge 1$
since $(h,d) \mapsto \sqrt{\frac{d}{2c_0}}(\delta(d) + 2^{-j}) - \frac{1}{2}$ is negative on $[1,\infty)^2$.

\subsection{Proof of Theorem~\ref{thm: complexity rwm}}
\label{app: proof compelxity rwm}
For two real-valued random variables $X \sim \mu,\, Y \sim \nu$, we write $X\succeq Y$ if $\mu([x,\infty)) \ge \nu([x, \infty))$ for all $x\in(-\infty,\infty)$.

\begin{proposition}
\label{prop: Gi}
Under Assumption \ref{ass: 1}-\ref{ass: 1.2},  $h \ge 1$, for all $\epsilon > 0$, $K\leq \sqrt{\epsilon^2 d/(2 c_0)}$, 
 $x_0 \in \cX$, any $j \in \NN$ and $p = \lceil \log(d) \rceil$, we have 
    $$
    (L^{(jp)} \mid X_0 = x_0) \succeq \sum_{i=1}^j P_i,
    $$
    with $P_i \overset{i.i.d.}{\sim} (1 - \epsilon)\delta_K(\cdot) +  \epsilon\delta_{0}( \cdot)$.
\end{proposition}
 The proof of Proposition~\ref{prop: Gi} requires the following two lemmas. 
For any $i \in \NN$, define $\tilde{G}_i := \sum_{\ell=0}^{p-1} G^{(p i + \ell)}$, with $(G^{(i)})_{i=0,1,\dots}$ defined in \eqref{eq:def_G_j} below, $\delta_c(\cdot)$ being the Dirac measure centered at $c$.
\begin{lemma}\label{lemma: G1}
    Suppose that, for some random variable $A$, $\tilde G_i \mid \tilde G_{0:(i-1)}, A  \succeq P_i$. Then $\tilde G_i \mid \tilde G_{0:(i-1)} \succeq P_i$.
\end{lemma}
\begin{proof}
    By the law of total probability we have 
    $$
    \PP(\tilde G_i \ge t \mid \tilde  G_{0:(i-1)}) = \EE_A[\PP(\tilde G_i \ge t \mid G_{0:(i-1)}, A)] \ge \EE_A[\PP(P_i \ge t)] = \PP(P_i \ge t).
    $$
\end{proof}

\begin{lemma}
\label{lemma: stoch ordering for sum of rvs} Suppose $\tilde G_i \mid \tilde G_{0:(i-1)} \succeq P_i$ for all $i\ge 0$. Then $\sum_{i = 0}^j \tilde G_i \succeq \sum_{i = 0}^j P_i$ for all $j \ge 0$.
\end{lemma}
\begin{proof} Let $\hat G_j := \sum_{i=0}^j \tilde G_i$, $\hat P_j := \sum_{i=0}^j P_i$. Then
 $\hat G_0 = \tilde G_0 \succeq P_0 = \hat P_0$. By induction, if $\hat G_{j} \succeq \hat P_j $, for some $j$, then we have 
\begin{align*}
    F_{\hat G_{j+1}}(t) =F_{   \tilde G_{j+1} + \hat G_{j}
    }(t)&=\int F_{ \tilde G_{j+1} \mid \hat G_{j}}(t-u)\dd F_{\hat G_{j}}(u) \\
    &\le \int F_{P_{i+1}}(t-u)\dd F_{\hat G_{j}}(u) \\
    &\le \int F_{P_{i+1}}(t-u)\dd F_{\hat P_{j}}(u) = F_{\hat P_{j+1}}(t) 
\end{align*}
where $F_X(t)$ denotes the distribution function of $X$ evaluated at $t$ and the last inequality follows since $u \mapsto F_{P_{i+1}}(t-u)$ is decreasing.
\end{proof}

\begin{proof}[of Proposition~\ref{prop: Gi}]
As also shown in the proof of Proposition~\ref{prop: MC picard sup} below, we have 
$
L^{(j+1)} = \sum_{i=0}^j G^{(i)}
$. Let $\tilde L_i := L^{(pi)}=\sum_{\ell=0}^{i-1} \tilde{G}_\ell$. Note that $
\{\tilde L_i = l\} \in \sigma(X_0, W_{0:l})$ where $\sigma(X)$ denotes the sigma-algebra generated by the random variable $X$. 

Let $A = (X_0, W_{0:\tilde L_{i}}, \tilde L_i)$. Then $\tilde G_{0:(i-1)} \in \sigma (A)$, $$\PP(\tilde{G}_i \ge \ell \mid A, \tilde{G}_{0:(i-1)}) = \PP(\tilde{G}_i \ge \ell \mid A)$$
and, for any $a = (x_0, w_{0: l}, l)$, $\epsilon > 0$ and $K \le \epsilon \sqrt{d/(2c_0)}$ we have
\begin{align*}
 \PP(\tilde{G}_i \ge \ell \mid A = a) &=  \PP(\tilde{G}_i \ge \ell \mid X_{\tilde L_i} = x_{l}, W_{\tilde L_i} = w_{l}, \tilde L_i = l) \\
 &= \PP(\tilde{G}_0 \ge \ell \mid X_{\tilde L_0} = x_{l}, W_{\tilde L_0} = w_{l}) \\
 &=\PP(L^{(p)} \ge \ell \mid X_{0} =  x_{l}, \, W_{0} = w_{l})\\
 &\geq  \PP(P_i \ge \ell )\,,
\end{align*} 
where $x_{i} = x_0 + \sum_{\ell=1}^{i-1} f(x_\ell, w_\ell), \, i \ge 1$ and the last inequality follows from Corollary~\ref{corol: picard map} and from $\tilde{G}_i \ge p$ almost surely, which holds by construction. 
Combining the above with Lemmas~\ref{lemma: G1}-\ref{lemma: stoch ordering for sum of rvs} yields the desire result.
 \end{proof}

\begin{proof}[of Theorem~\ref{thm: complexity rwm}]
    Fix $\epsilon = 1/2$. By Proposition~\ref{prop: Gi}, for all $j, N \in \NN$, $p = \lceil \log(d)\rceil$, $K \le \sqrt{d/(8c_0)}$, we have
\begin{align*}
    \PP(L^{(pj)} < N \mid X_0 = x)  &= \PP(\sum_{i=0}^{j-1} G^{(i, p)} < N\mid X_0 = x) \le \PP\left(\sum_{i=1}^j P_i < N\right)\,
\end{align*}
with $(P_i)_{i=1}^j$ as in Proposition~\ref{prop: Gi}, $\mu = \EE[P_1] =  (1-\epsilon)K =  K/2$.
By Hoeffding's inequality, we have 
\begin{align*}
    \PP\left(\sum_{i=1}^j P_i< N\right)
    &= \PP\left(\sum_{i=1}^j P_i - j\mu < -(j \mu - N)\right)\\
    &\le \exp\left(- \frac{2(j\mu - N)^2}{j K^2}\right)\\
    &= \exp\left(- \frac{2(jK/2 - N)^2}{j K^2}\right) \le\exp\left(- j\left(1-\frac{2N}{j K}\right)\right)
\end{align*}
for any $j > 2N/K$. 
In particular, for any $\delta > 0$, if 
$j \geq \max\{\frac{4N}{K},2\log(1/\delta)\}$, then $j\left(1-\frac{2N}{j K}\right)\geq 
2\log(1/\delta)\left(1-\frac{1}{2}\right)=\log(1/\delta)$ and the above exponential bound is less than $\delta$.
\end{proof}

\subsection{Proof of Corollary~\ref{corollary: complexity rwm}}
Let $(X_n)_{n=0,1,2,\dots}$ be the RWM trajectory, i.e.\ the exact solution of the recursion in \eqref{eq: 1}. 
Under Assumptions~\ref{ass: 1}-\ref{ass: 1.2}, $m I_d \preccurlyeq \nabla^2 V$, $h=1$ and 
$X_0 \sim \cN(x^\star, L^{-1}I_d)$, we can apply Theorem 49 and Remark 50 of \citet{andrieu2024explicit} to deduce $\sqrt{\chi^2 (\cL(X_{n_{mix}}), \pi)} \le \epsilon$ with $n_{mix}=\mathcal{O}(\frac{L}{m}d)$, where $n_{mix}$ depends on $\epsilon$, $d$, $L/m$ and is explicitly provided in \citet[Rmk.50]{andrieu2024explicit}. 
Consider running the Online Picard algorithm with $K = \sqrt{d/(8 c_0)}$ for 
$$J=\lceil \log(d)
    \max\{4n_{mix}/K,2\log(2/\epsilon)\}$$ 
parallel iterations and returning $Y=X^{(J)}_{\min\{n_{mix},L^{(J)}\}}$ as output.
When $n_{mix}\leq L^{(J)}$, we have $Y=X^{(J)}_{n_{mix}}=X_{n_{mix}}$.
By Theorem~\ref{thm: complexity rwm}, $\PP(Y=X_{n_{mix}})\geq \PP(L^{(J)}\geq n_{mix}) \ge 1-\epsilon/2$. Therefore
\begin{align*}
\|\cL(Y)-\pi\|_{TV}
&\leq
\PP(Y\neq X_{n_{mix}})+TV(\cL(X_{n_{mix}}), \pi)\\
&\leq \epsilon/2+\frac{1}{2}\sqrt{\chi^2 (\cL(X_{n_{mix}}), \pi)}\\
&\leq \epsilon/2+\epsilon/2=\epsilon\,.
\end{align*}

\subsection{Proof of Proposition~\ref{prop: transient phase}}\label{app: proof prop transient phase}

For a sequence $x_n \to \infty$, let $A_n=V(x_n+Z)-V(x_n)$, $B_n=V(Y_n +Z) - V(Y_n)$, $Z \sim \cN(0,1)$, where $Y_n = x_n + N_n$ and for a sequence of random variables $(N_n)_{n=1,2,\dots}$ on $\RR^d$ with $\EE[\|N_n\|^2] \le C$.
\begin{lemma}
    \label{lem: 1 prop tails}
    Under Assumption~\ref{ass: 1} with $mI_d \preccurlyeq \nabla^2 V$, we have
    \begin{align}
       \EE[|A_n - B_n|] &\le 2L(C + d), &\hbox{for all } n = 1,2,\dots\,, \label{eq: lem prop tails 1}\\
       \limsup_{n}\PP(|A_n| \le c) &= 0 &\hbox{for all } n = 1,2,\dots\,, \, c \in (0, \infty). \label{eq: lem prop tails 2}
    \end{align}    
\end{lemma}
\begin{proof}
Let $A = V(x + z) - V(x)$, $B =V(y + z) - V(y)$, for some $x, y, z \in \cX$. By $L$-smoothness of $V$, we have that, for some $t_1,t_2\in(0,1)$ depending on $x, y, z$,
\begin{align*}
|A-B|
&=
|\langle \nabla V(x+t_1 z)-\nabla V(y+t_2z),z\rangle|\\
&\leq \|\nabla V(x + t_1 z)-\nabla V(y + t_2 z)\|\|z\|\\
&\leq L(\|x-y\|+\|z\|)\|z\|\\
&\leq 2L\|x-y\|^2+\|z\|^2.
\end{align*}
Thus, 
\begin{align*}
    \EE[|A_n - B_n|]&\leq 2L\EE[\|x_n-Y_n\|^2+\|Z\|^2] \\
    &=  2L(\EE[\|N_n\|^2]+\EE[\|Z\|^2])\\
    &\le 2L(C + d) 
\end{align*}
proving \eqref{eq: lem prop tails 1}.
Also, again by $L$-smoothness of $V$, 
\begin{align*}
|A_n| \geq 
|\langle \nabla V(x_n),Z\rangle|-L\|Z\|^2/2\,
\end{align*}
thus, for all $c \in (0, \infty)$
\begin{align}
    \label{eq: lemma 1 prop tails}
   \PP(|A_n|\leq c)\leq 
\PP(\|Z\|^2\geq \frac{2}{L}(|\langle \nabla V(x_n),Z\rangle|-c)).
\end{align}
By rotational invariance $Z$, the right hand-side of \eqref{eq: lemma 1 prop tails} depends on $\nabla V(x_n)$ only through its norm. By $m$-convexity of $V$, $\|\nabla V(x_n)\| \to \infty$  as $|x_n| \to \infty$ and, consequently, the right hand-side of \eqref{eq: lemma 1 prop tails} goes to 0, proving the desired result in \eqref{eq: lem prop tails 2}.
\end{proof}

\begin{lemma}
\label{lem: 2 prop tails}
Under Assumption~\ref{ass: 1} with $mI_d \preccurlyeq \nabla^2 V$, we have
\begin{align*}
\limsup_{n\to\infty}
\EE[| \exp(-\max\{A_n,0\} ) - \exp(-\max\{B_n,0\})|]=0\,.
\end{align*}
\end{lemma}
\begin{proof} 
For a fixed $c \in (0,\infty)$, we have that
\begin{multline*}
    \EE[| \exp(-\max\{A_n,0\} ) - \exp(-\max\{B_n,0\})|]
\\\leq
e^{-c}+\PP(|A_n|\leq c \cap |B_n|\leq c) +\PP(|A_n - B_n| > 2c).
\end{multline*}
By Lemma~\ref{lem: 1 prop tails}, $\limsup_{n\to\infty}\PP(A_n>c \cap B_n > c) = 0$ and
$$
\PP(|A_n - B_n| > 2c)\le \frac{\EE[|A_n - B_n|]}{2c}\le   \frac{L(C + d)}{c}.
$$
Combining these results and since $c$ was arbitrary, we have
    \begin{align*}
&\limsup_{n\to\infty}\EE[| \exp(-\max\{A_n,0\} ) - \exp(-\max\{B,0\})|]
&\le \inf_c\{e^{-c} + \frac{L(C + d)}{c}\} = 0\,.
\end{align*}
\end{proof}

\begin{proof}[of Proposition~\ref{prop: transient phase}] 
Here we prove $\lim_{n \to \infty} \PP(L^{(j)} = jK) = 1$ for $j = 1$. The case $j \ge 1$ follows by recursion.
By \eqref{eq:zero_f}, we have
    \begin{align*}
        1 - \PP(L^{(1)} = K) &= 
        1 - \PP\left( \bigcap_{i=0}^{K} X^{(1)}_i = X_i\right)\\
        &= \PP\left(\bigcup_{i=0}^{K} \left( X^{(1)}_i\ne  X_i\right) \right)\\
        &=\PP\left(\bigcup_{i=0}^{K-1} \left( B(X^{(0)}_i, W_i) \ne  B(X_i, W_i) \right)\right)\\  
        &\le\sum_{i=0}^{K-1} \PP \left( B(x_n, W_i) \ne  B(X_i, W_i) \right).
    \end{align*}
Hence, $\lim_{n \to \infty} \PP(L^{(1)} = K) = 1$, if $\lim_{n \to \infty}\PP \left( B(x_n, W_i) \ne  B(X_i, W_i) \right) = 0$ for all $i = 0,1,\dots,K-1$.

For a fixed $i$, note that $X_i = x_n + N_n$
where $N_n = \sum_{\ell= 0}^{i-1}B(X_{\ell}, U_\ell, Z_\ell) Z_\ell$ and
\begin{align*}
    \EE[\|N_n\|^2] &= \EE[\|\sum_{\ell = 0}^{i-1}B(X_{\ell}, U_\ell, Z_\ell) Z_\ell\|^2]\\
    &\le \sum_{\ell = 0}^{i-1}\EE[\|Z_\ell\|^2] + \sum_{\substack{\ell,m = 0\\ \ell\ne m}}^{i-1}\EE[\|Z_\ell\|]\EE[\|Z_m\|]  \le C,
\end{align*}
with $C = i^2 d$. Then, for $E = -\log(U_i) \sim \mathrm{Exp}(1)$,
\begin{align*}
    &\PP ( B(x_n, U_i,  Z_i) \ne  B(X_{i},  U_i,  Z_i) )\\ 
    &=  \EE[\PP\left( B(x_n, U_i,  Z_i) \ne  B(x_n + N_n,  U_i,  Z_i) \mid U_{0:i-1},\, Z_{0:i} \right)]\\
    &=\EE [\left|\PP\left(E < V(x_n + Z) - V(x_n)\right) - \PP\left( E < V(x_n + N_n + Z) - V(x_n + N_n) \right) \right|]\\
    &=   \EE[| e^{-\max(V(x_n + Z) - V(x_n),0)} - e^{-\max(V(x_n + N_n + Z) - V(x_n + N_n),0)}|] \\
    &=   \EE[| e^{-\max(A_n,0)}- e^{-\max(B_n,0)}|] 
\end{align*}
which goes to 0 as  $n \to \infty$ by Lemma~\ref{lem: 2 prop tails}.
\end{proof}

\subsection{Proof of Proposition~\ref{prop: approximate picard}}\label{app: corollary 1}
 By Markov inequality and the bound in Theorem~\ref{thm: main theorem 1}, we have for $K \le \lfloor b d \rfloor$, $b = \epsilon r/(3c_0)$,
\begin{align*}
    \PP\Big(\frac{1}{K}\sum_{i=0}^{K-1}\ind \big(f(X_i^{(j)}, W_i) \ne f(X_i, W_i)\big) \le r\Big) &\le \frac{1}{Kr}\sum_{i=0}^{K-1}\EE\left[\ind \left(f(X_i^{(j)}, W_i) \ne f(X_i
    , W_i)\right)\right]\\
    &\le \frac{\sum_{i=0}^{\lfloor bd\rfloor }\PP(f(X^{(j)}_i,  W_i) \ne f(X_i, W_i))}{\lfloor bd \rfloor r} \\
    &\le \frac{c_0 b + \delta(d) + 2^{-j}}{r}\le \epsilon
\end{align*}
for $d \ge -2\log(\epsilon r/5)/3$, $j \ge -\log(\epsilon r/3)/\log(2)$.

\section{Proofs for Metropolis within Gibbs}\label{app: proofs of MwG}
\subsection{Proof of Theorem~\ref{thm: gibbs}}
Theorem~\ref{thm: gibbs}  essentially shows that most of the results derived in Section~\ref{sec: main results} holds (with different constants) also for Metropolis within Gibbs (MwG) Markov chains, i.e. when replacing Assumption~\ref{ass: 1.2} with Assumption~\ref{ass: MwG}. 

Lemmas \ref{lemma: 1}-\ref{lemma: 2} are the building blocks of all the results of Section~\ref{sec: main results}. Thus, we start by providing the analogue of these Lemmas for MwG.

\begin{lemma}\label{lemma 1 mwg}
    Under Assumptions \ref{ass: 1} and \ref{ass: MwG},  the statement of Lemma~\ref{lemma: 1} holds.
\end{lemma}
To prove Lemma~\ref{lemma 1 mwg}, we first need the following proposition.
\begin{proposition}
    \label{prop: MwG kernel}
    Let $W^{\mathrm{MwG}}$ and $W^{\mathrm{RWM}}$ be the innovations of MwG and RWM satisfying  respectively Assumption~\ref{ass: MwG} and Assumption~\ref{ass: 1}. Then, for every $x \in \cX$
    \begin{align*}
           \cL(f_i(x, W^{\mathrm{MwG}})) &=  \cL(f(x, W^{\mathrm{RWM}})) &\hbox{for } i = 0,1,\dots,d-1,
    \end{align*}
    with $f$ as in \eqref{eq:zero_f} and $f_i$ as in \eqref{eq: MwG}.
\end{proposition}
\begin{proof}
By \eqref{eq:zero_f} and \eqref{eq: MwG},
 Proposition~\ref{prop: MwG kernel} follows if $\cL(o_i Z^{\mathrm{MwG}}) = \cL(Z^{\mathrm{RWM}})$ for all $i$, where $Z^{\mathrm{MwG}} = h/\sqrt{L d} (2P - 1)S$, $P \sim \mathrm{Ber}(1/2)$, $S \sim \chi(d)$, $Z^{\mathrm{RWM}}  \sim \cN(0, I_d h^2/(Ld))$, $(o_0,o_1,\dots,o_{d-1}) \sim \mathrm{Haar}(d)$. By rotational invariance of $(o_0,o_1,\dots,o_{d-1})$ and $Z^{\mathrm{RWM}}$, it is enough to show that
 $$\cL(\|o_i Z^{\mathrm{MwG}}\|) = \cL(\|Z^{\mathrm{RWM}}\|) = \cL(h/\sqrt{L d} \chi(d)).
 $$
\end{proof}

\begin{proof}[of Lemma~\ref{lemma 1 mwg}]
    By Proposition~\ref{prop: MwG kernel}, Lemma~\ref{lemma: 1} follows also under Assumptions~\ref{ass: 1}-\ref{ass: MwG}, with exactly the same constants and proof provided in Section~\ref{app: proof: lemma 1}. 
\end{proof}
Next, we provide the analogue of Lemma~\ref{lemma: 2} for MwG.
\begin{lemma}\label{lemma 2 mwg}
    Under Assumption \ref{ass: MwG} and for all $x,y \in \cX^{K+1}$, with $x_0 = y_0$, $w_0 \in \cW$ and $1 < i \le d$
    \begin{align}
        \label{eq: lemma 6 for MWG}
        \EE[\|\Phi_i(x,  W) - \Phi_i( y, W)\|^2 \mid W_0 = w] &\le \frac{2 h^2}{L}\sum_{\ell = 1}^{i-1}\left(\PP(f_i(x_\ell, W_\ell) \ne f_i( y_\ell,  W_\ell)) + \delta(d)\right)
    \end{align}
    with $\delta(d) = 11 \exp(-d/10)$ and where the expectation is relative to $(W_1,W_2,\dots,W_{K-1})$.
\end{lemma}
The proof of Lemma~\ref{lemma 2 mwg} relies on the following bounds.
\begin{lemma}\label{lem:chi_bern}
Let $X$ and $Y$ be two random variables defined on the same probability space, with $X = \tilde X/d, \,  \tilde X \sim \chi^2_d$ and $Y$ binary, i.e.\ $\PP(Y\in\{0,1\})=1$.
Then
$$
\EE[X Y]\leq 2\left(\PP(Y=1)+ 11\exp(-d/10)\right) \, .
$$
\end{lemma}
\begin{proof}
By \citet[Lemma~1]{laurent2000adaptive}, for every $t \ge 2$,
\begin{equation*}
    \PP(X > t) \le e^{-td/20}.
\end{equation*}
Then, Combining this result and Lemma~\ref{lemma: bounds expectation with indicator function}, we have that
\begin{align*}
  \EE[\ind(X \ge 2)X] &=
  \int_2^\infty \PP(X \ge u)\, \dd u + 2 \PP(X > 2) \\
  &\le \int_2^\infty  e^{- ud/20} \dd u + 2 e^{-d/10}\\ 
  &= \frac{20}{d}\exp(-d/10) + 2 \exp(-d/10) \\
  &\le 22 \exp(-d/10).
\end{align*}
Finally
\begin{align*}
    \EE[Y X] &\le \EE[ Y X \ind(X  \le 2 )] + \EE[(X \ind (X  > 2)]\\
    &\le 2\left(\EE[Y] + \delta(d)\right)
\end{align*}
with $\delta(d) = 11 \exp(-d/10)$.
\end{proof}

\begin{proof}[of Lemma \ref{lemma 2 mwg}]

Under Assumptions~\ref{ass: 1} and \ref{ass: MwG}, 
\begin{align*}
    \Phi_i(x,  W) - \Phi_i(y,  W)
&=
\sum_{\ell=1}^{i -1}\Delta B_\ell Z_\ell o_\ell 
\end{align*}
where $\Delta B_\ell =   B( x_\ell, U_\ell, o_\ell Z_\ell) - B( y_\ell, U_\ell, o_\ell Z_\ell)$, $Z_\ell = h/\sqrt{L d} (2P_\ell - 1)S$, $U_1,U_2,\dots,U_{i-1} \overset{\mathrm{i.i.d}}{\sim}\text{Unif}([0,1])$,  $P_1,P_2,\dots,P_{i-1} \overset{\mathrm{i.i.d}}{\sim} \mathrm{Ber}(1/2)$ and $S_1,S_2\dots,S_{i-1} \overset{\mathrm{i.i.d}}{\sim} \chi(d)$.

Since $\langle o_i,o_j\rangle = 0$ almost surely for all $i \ne j$ and by Proposition~\ref{prop: MwG kernel}, we have
$$
\|\sum_{\ell=1}^{i -1}\Delta B_\ell o_\ell Z_\ell\|^2 =   \sum_{\ell,j =1}^{i-1}\Delta B_\ell \Delta B_j \langle o_\ell Z_\ell, o_j Z_j\rangle =  \sum_{\ell =1}^{i-1}\Delta B_\ell^2 \|o_\ell Z_\ell\|^2 = \frac{h^2}{L}\sum_{\ell =1}^{i-1}\Delta B_\ell^2 \tilde Z_\ell 
$$
with $\tilde Z_1,\, \tilde Z_2, \dots, \tilde Z_{i-1}  \overset{\mathrm{i.i.d}}{\sim} \chi^2(d)$. Thus, by Lemma \ref{lem:chi_bern}, we have

\begin{align*}
    \EE[\|\sum_{\ell=0}^{i -1}\Delta B_\ell o_\ell Z_\ell\|^2] &= 
    \frac{h^2}{L} \sum_{\ell=1}^{i -1}\EE[\Delta B_\ell^2\tilde Z] \nonumber\\
    &\le \frac{2 h^2}{L}\sum_{\ell=1}^{i -1}\left(\PP( B(x_\ell, U_\ell, o_\ell Z_\ell)\ne B( y_\ell, U_\ell, o_\ell Z_\ell)) + \delta(d)\right) 
\end{align*}
with $\delta(d) = 11 \exp(-d/10)$.
\end{proof}

\begin{proof}[of Theorem~\ref{thm: gibbs}]
Under Assumptions~\ref{ass: 1} and \ref{ass: MwG}, the statements of Lemma~\ref{lemma: recursion for max probability} and Theorem~\ref{thm: main theorem 1} hold with new constants 
\begin{align}\label{eq: consants MwG}
    c_0 = 2h^4 \left(\sqrt{\frac{2}{\pi}} + \frac{h\gamma}{2} \right)^2, \qquad \delta(d) = 11 \exp(-d/10).
\end{align}
The proofs follow the same steps of the proof of  Lemma~\ref{lemma: recursion for max probability} and Theorem~\ref{thm: main theorem 1} in Section~\ref{app: proof: thm 1}.

Similarly, Under Assumptions~\ref{ass: 1} and \ref{ass: MwG}, the statements of Corollary~\ref{corol: picard map} and Theorem~\ref{thm: complexity rwm} hold 
with constants $c_0, \delta(d)$ as in \eqref{eq: consants MwG}, $K = \epsilon \sqrt{\frac{d}{2c_0}}$ and  $h \ge 5$. The proofs follow the same steps of the proof of  Corollary~\ref{corol: picard map} in Section~\ref{app: proof compelxity rwm} and the proof of Theorem~\ref{thm: complexity rwm} in Section~\ref{app: proof: thm 2}.

Finally,  Under Assumptions~\ref{ass: 1} and \ref{ass: MwG} the statement of Proposition~\ref{prop: approximate picard}  holds with $c_0$ as in \eqref{eq: consants MwG} and $d \ge -10\log(\epsilon r/33)$.
\end{proof}

\subsection{Proof of Proposition~\ref{prop: instant convergence ORWM isotropic Gaussians}\label{app: proof prop orwm}}
Let $\pi(x) \propto \exp(-\|x\|^2/(2\sigma^2))$ and let  $X^{(0)}_i = X_0$ for $i =0,1,\dots,K$. We prove here that $L^{(1)} = K$, for all $X_0 = x_0 \in \cX$, $K \le d$. Proposition~\ref{prop: instant convergence ORWM isotropic Gaussians}  then  follows by recursion.

For MwG, $L^{(1)} {=} K$ if, for all $w = (w_0,w_1,\dots,w_{k-1})\in \cW^{K}$,
\begin{align}\label{eq: mwg gauss 1}
   f_i(x_0, w_i) &{=} f_i(x_i, w_i) &\hbox{for all }0\le i < K   
\end{align}
where $(x_0,x_1,x_2\dots,x_K)$ is defined by the recursion $x_{i+1} = x_i + f_i(x_i, w_i)$, $f_i$  as in \eqref{eq: MwG}.

By \eqref{eq: MwG}, for all $u = (u_0,u_1,\dots,u_{K-1}) \in [0,1]^{K}$, $(z_0,z_1,\dots,z_{K-1}) \in \RR^{K}$, the condition in \eqref{eq: mwg gauss 1} is equivalent to
\begin{align}\label{eq: B prop}
 B(x_0,  u_i,  \hat z_i) &{=} B(x_i, u_i,  \hat z_i)&\hbox{for all }0\le i < K 
 \end{align}
 with $  \hat z_i = o_i  z_i$ and a given orthonormal basis $o_0,o_1,\dots,o_{d-1}$, $d \ge K$. Then, the left-hand side of \eqref{eq: B prop} is
\begin{align}
    B(x_i,  u_i,  \hat z_i) &=  B(x_0 + \sum_{\ell = 0}^{i-1}B(x_\ell,  u_\ell, \hat z_\ell) \hat z_\ell,  u_i,  \hat z_i) \nonumber \\
    &= \ind \left(\frac{\pi(x_0 + \sum_{\ell = 0}^{i-1}B(x_\ell, u_\ell,  \hat z_\ell) \hat z_\ell +  \hat z_i)}{\pi(x_0 + \sum_{\ell = 0}^{i-1}B(x_\ell,  u_\ell,  \hat z_\ell)  \hat z_\ell)} > u_i\right) \nonumber \\
    &=  \ind\left(\frac{\| \hat z_i\|^2}{2\sigma^2} +  \frac{\langle x_0 + \sum_{\ell = 0}^{i-1}B(x_\ell, u_\ell,  \hat z_\ell) \hat z_\ell,  \hat z_i\rangle}{\sigma^2} > u_i\right),\label{eq: prop eq 1}
\end{align}
while the right-hand side of \eqref{eq: B prop} is
\begin{align}
    B(x_0,  u_i, z_i)
    &=  \ind\left(\frac{\pi(x_0 +   \hat z_i)}{\pi(x_0)}> u_i\right) \nonumber\\
    &=  \ind\left(\frac{\| \hat z_i\|^2}{2\sigma^2} + \frac{\langle x_0,  \hat z_i \rangle}{\sigma^2}> u_i\right). \label{eq: prop eq 2}
\end{align}
Because $o_0,o_1,\dots,o_{d-1}$ is an orthonormal basis on $\RR^d$ we have that 
\begin{align*}
    \langle  \hat z_\ell,  \hat z_i\rangle &= z_\ell z_i\langle o_\ell ,  o_i\rangle {=} 0 &  \hbox{for all } i \ne \ell,
\end{align*}
hence, \eqref{eq: prop eq 1} and \eqref{eq: prop eq 2} are equal.
\section{Efficient Implementation of the (Approximate) Online Picard Algorithm}\label{app: pseudo-code}

Our efficient implementation is based on reformulating the Picard algorithm as a Markov chain on $K$-dimensional vectors. 

Let $(X^{(j)}, L^{(j)})_{j=0,1,\dots}$ be the sequence of random variables produced by the Online Picard recursion of \eqref{eq: online picard 1}-\eqref{eq: online picard 2}. Define $\bar{X}^{(j)}=X^{(j)}_{L^{(j)}:U^{(j)}}$ and $\bar{W}^{(j)}= W_{L^{(j)}:U^{(j)}-1}$.
\begin{proposition}\label{prop: MC picard sup}
   The sequence $(\bar{X}^{(j)},\bar{W}^{(j)})_{j=0,1,\dots}$
    is a time-homogeneous Markov chain on $(\cX^{K+1}\times \cW^K)$.
\end{proposition}

\begin{proof}
The sequence $(\bar{X}^{(j)},\bar{W}^{(j)})_{j=0,1,\dots}$ can be described by the recursion 
  \begin{align}
    \label{eq: mc online picard 1}
    \bar X^{(j+1)}_{i} &= \begin{cases}
        \Phi_{G^{(j)} + i}(\bar X^{(j)}, \bar W^{(j)}) 
    & i \le K-G^{(j)}\,,\\
     \Phi_{K}(\bar X^{(j)}\,, \bar W^{(j)}) 
    & i > K-G^{(j)}\,,\\
    \end{cases}  & i = 0,1,\dots,K\,,\\
    \label{eq: mc online picard 2}
    \bar W^{(j+1)}_i &= \begin{cases}
    \bar W^{(j)}_{G^{(j)}+i} & i \le K-G^{(j)}-1\,,\\
    \tilde W_{i,j} &  i > K-G^{(j)}-1\,, 
    \end{cases}  & i = 0,1,\dots,K-1\,,
  \end{align}
  where $\tilde W_{i,j}, \, i, j \ge 0,$ are i.i.d. random variables  with distribution $\nu$ on $\cW$,
  \begin{equation}\label{eq:def_G_j}
        G^{(j)}= G(\bar X^{(j)},\bar W^{(j)}) = \sup\{i \le K \colon \Phi_\ell(\bar X^{(j)}, \bar W^{(j)}) =  \bar X^{(j)}_\ell \text{ for }  0 \le \ell\le i\}\,,
  \end{equation}
and \eqref{eq: mc online picard 1}-\eqref{eq: mc online picard 2} define the one-step Markov transition kernel of $(\bar{X}^{(j)},\bar{W}^{(j)})_{j=0,1,\dots}$.  
\end{proof}
Proposition~\ref{prop: MC picard sup} sheds lights to the underlying (Markov) structure of Online Picard recursion and is used in  Algorithm~\ref{alg: OPA sleek} for a memory-wise and computationally efficient implementation  of the algorithm that iterates over a vector of length $K$. The Approximate Online Picard algorithm is identical to Algorithm~\ref{alg: OPA sleek} except for line 5, which is replaced by
\[
G = \sup\{1 \le i \le K \colon \bar \cA_\ell \le r \text{ for all } \ell \le i\}, 
\]
where
\[
\bar \cA_\ell = \frac{|\{0 \le s < \ell \colon f(\bar X_{s},  \bar W_{s}) \ne f(\bar X^c_{s},  \bar W_{s})\}|}{\ell}\,.
\]

\begin{algorithm}[!h]
\caption{Online Picard algorithm}  \label{alg: OPA sleek}
\KwIn{$N, K \in \NN$, $X_{0} \in \cX$.}
 Initialize $\bar X^{c}_i = \bar X_i = X_0$, for $i = 0,1,\dots,K$\;
 Set $L =0$\;
 Set $\bar W_0,\bar W_1,\dots,\bar W_{K-1} \overset{\text{i.i.d}}{\sim}\nu$ \;
\While{$L < N$}{
$\bar X = \Phi(\bar X^c, \bar W)$\;
   $G = \sup\{i \le K \colon \bar X_\ell =  \bar X^{c}_\ell \text{ for }  0 \le \ell\le i\}$\;
   $\bar X^c_0 = \bar X_G$ \;
   \For{$i = 0,1,\dots, K-1$}{
   \eIf{$i \le K - G - 1$}{
    $\bar X^{c}_{i+1} =\bar X_{G+i+1}$\;
    $\bar W_i = \bar W_{G + i}$\;
   }{
   $\bar X^{c}_{i+1} = \bar X_K$\;
   $\bar W_i \sim \nu $ \;
   }
   }
   $L = L + G$\;
   }
\KwOut{$\bar X_{G - (L-N)}$.}
\end{algorithm}

\section{Numerical simulations for MwG}\label{sup: simulations mwg}
 We run  the same numerical experiments as in Section~\ref{sec: high-dimensional regressions} for the (Approximate) Online Picard algorithms applied to Metropolis within Gibbs was applied. Figure~\ref{fig: mwg} summaries the results. $d, N, K$ and $r$ are as in Section~\ref{sec: high-dimensional regressions}.

\begin{figure}[!ht]
    \centering
    \includegraphics[width=0.8\linewidth] {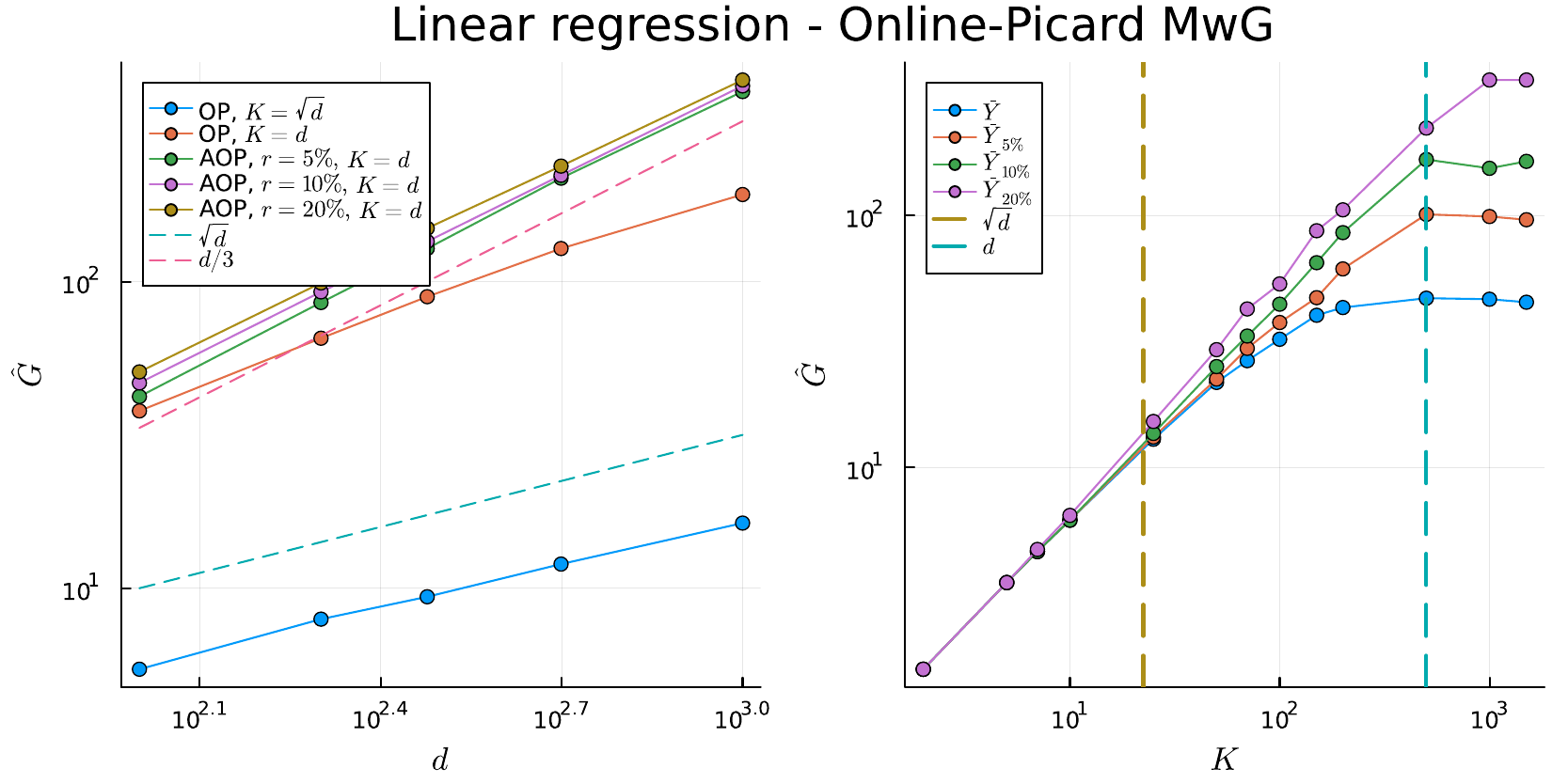}
    \includegraphics[width=0.8\linewidth] {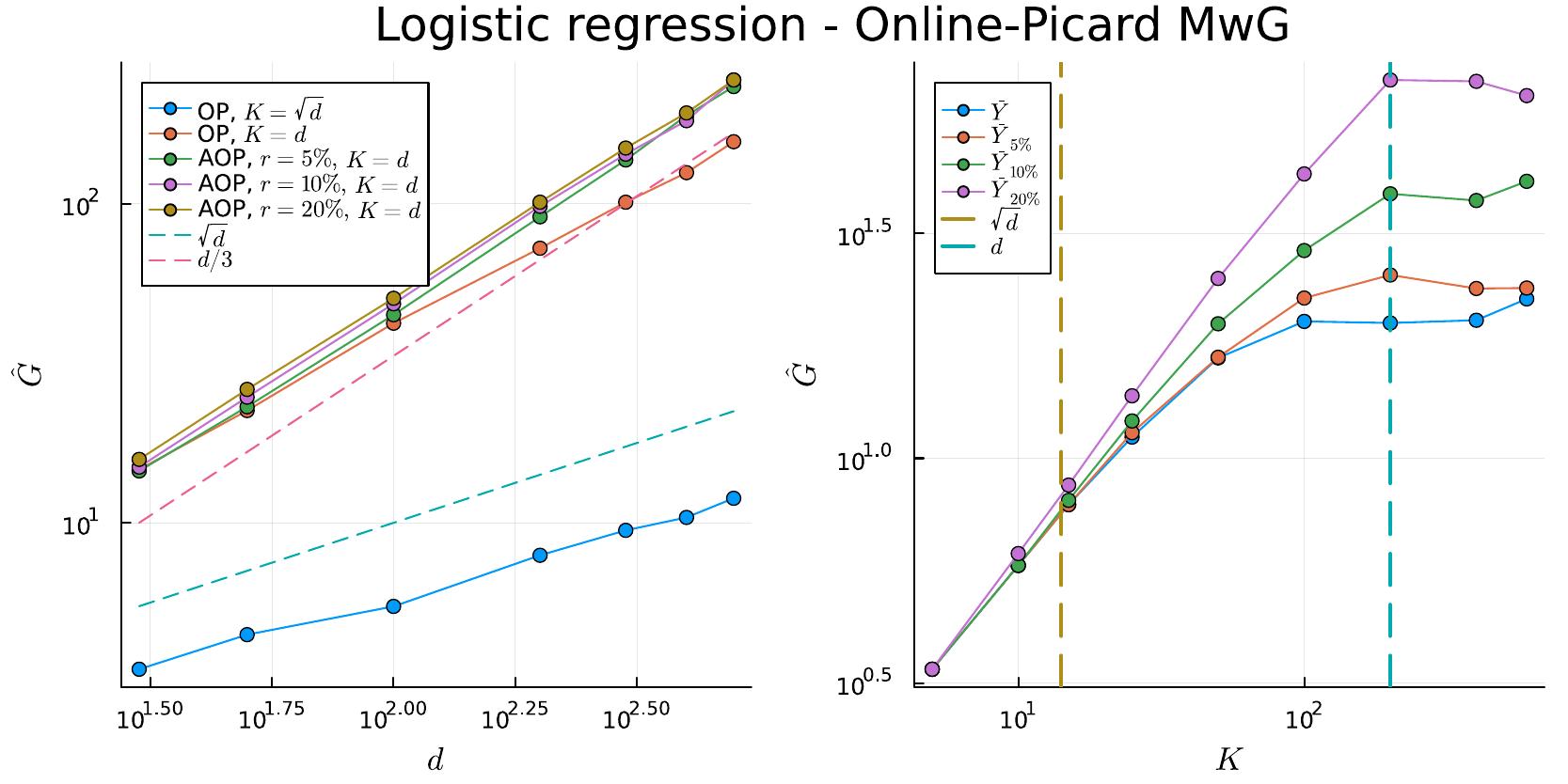}
    \includegraphics[width=0.8\linewidth] {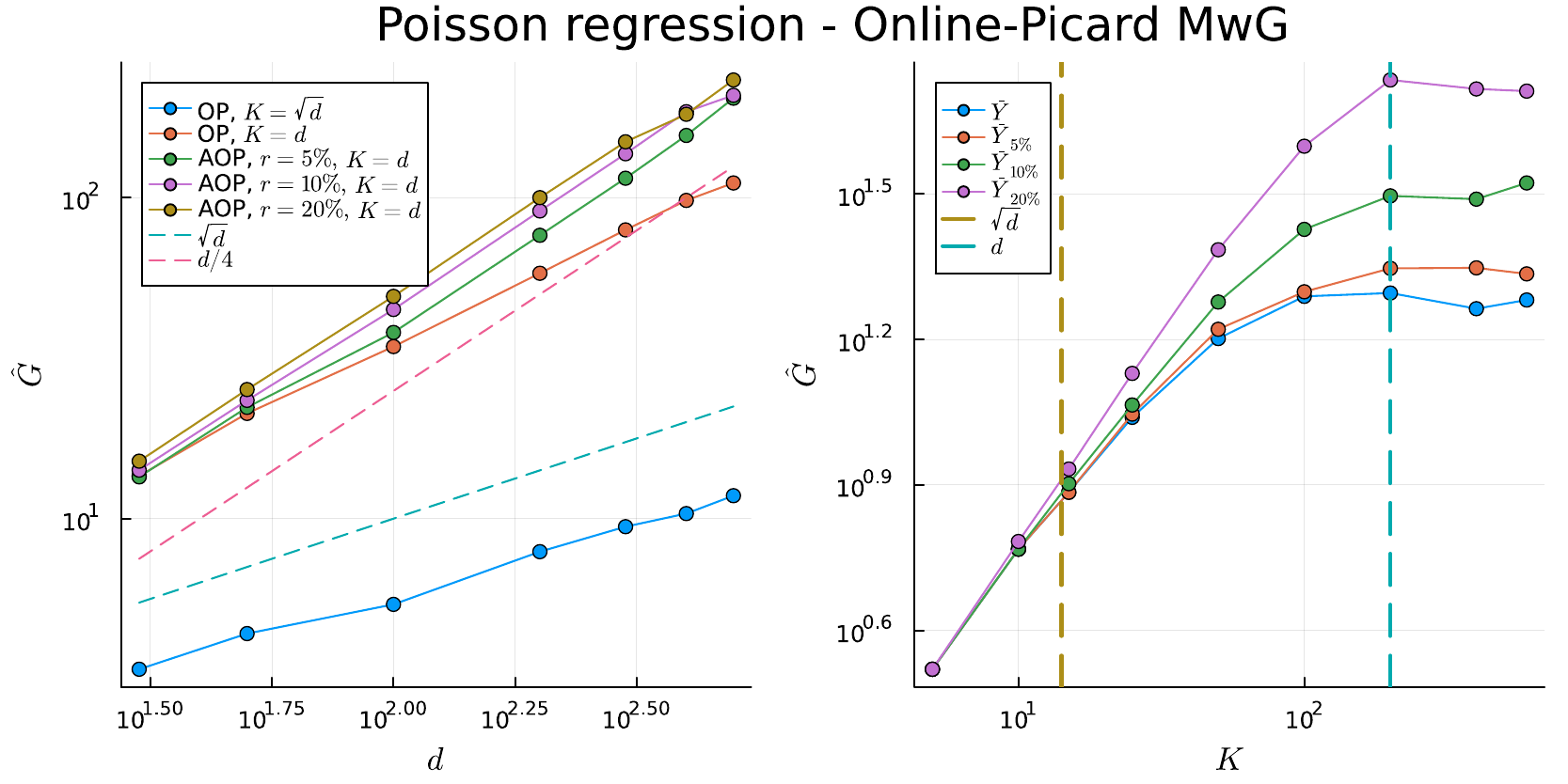}
    \caption{Performance of Online Picard algorithm ($\bar X$) and its approximate versions ($\bar X_r, \, r =5\%,\dots,20\%$) applied to MwG, with target being the linear regression model \textbf{E1} (top panels), logistic regression model \textbf{E2} (middle panels) and Poisson regression model \textbf{E3} (bottom panels).} 
    \label{fig: mwg}
\end{figure}

\section{Definition of Effective Sample Size}\label{sec: ESS}
The Effective Sample Size (ESS) of a test function $f \colon \cX \to \RR$ is defined as
\begin{equation}
    \label{eq: ess}
    \text{ESS}(f) = \frac{1}{1 + 2 \sum_{i=1}^\infty \rho_i(f)}
\end{equation}
where $\rho_i(f)$ is the lag-$i$ autocorrelation function of $\{f(X_i)\}_{i=1,2,\dots}$. 
The Online Picard algorithm simulates a trajectory $(X_0,\dots,X_N)$ of the original Markov chain with $N/\hat{G}$ parallel iterations, resulting in an increase of ESS relative to the original sequential implementation of 
$\hat G$.

\section{Practical algorithmic aspects}\label{sec: practical algorithmic aspects}
\subsection{Parallel Architecture of Picard algorithms}\label{sec: parallel architectures}
Here we discuss how the main processor interacts with the $K$ parallel processors at each iteration of the Picard map in \eqref{eq: picard recursion 0}. Our discussion focuses on RWM, but the same considerations apply to MwG.

Recall that, by Assumption~\ref{ass: 1.2}, a RWM increment is given by 
\begin{equation}\label{eq: rwm increment}
    f(x, z, u) = \ind\left(\frac{\pi(x+z)}{\pi(x)}> u\right).
\end{equation}
Therefore, a naive implementation of the Picard map in \eqref{eq: picard recursion 0} requires each processor to evaluate $\pi$ twice to compute the increment. This contrasts with an efficient sequential implementation of RWM, which requires a single evaluation of $\pi$ per iteration (Algorithm~\ref{alg:sequential rwm}).

\begin{algorithm}[!h]
\caption{One iteration of the sequential implementation of RWM\label{alg:sequential rwm}}
\KwIn{Position $x\in \RR^d$, evaluation of $\pi(x)$, Gaussian noise $Z$.} 
Compute $\pi(x + Z)$\;
With probability $\min(1, \pi(x + Z)/\pi(x))$ set $x' = x + Z$, otherwise set $x' = x$\;
\KwOut{$x'$, $\pi(x')$.}
\end{algorithm}
However, a careful look at \eqref{eq: picard recursion 0} and \eqref{eq: rwm increment} reveals a more efficient recursion for the Picard map, in which each parallel processor requires only a single evaluation of $\pi$ (Algorithm~\ref{alg: picard map sleek}). \begin{algorithm}[!h]
\caption{Efficient implementation of the Picard map $\Phi$ in \eqref{eq: picard recursion 0} for RWM\label{alg: picard map sleek}}
\KwIn{$K \in \NN$, $X^{(i-1)}_{0:K}$, $W_{0:K-1}$, $\pi(X^{(i-1)}_0)$, $B^{(i-1)}_j := \ind(\pi(X^{(i-1)}_j + Z_j)/\pi(X^{(i-1)}_j) > U_j), \, j = 0,1,\dots K-1$.}
 \textbf{Compute in parallel:} $\pi(X^{(i-1)}_j + Z_j)$ for $j = 0,1,\dots,K-1$\;
 Set $B^{(i)}_0 = \ind(\pi(X^{(i-1)}_0 + Z_0)/\pi(X^{(i-1)}_0) > U_0)$\;
\For{$j = 1,2,\dots, K-1$}{
    \eIf{$B^{^{(i-1)}}_{j-1} = 0$}{
    Set $\pi(X_{j}^{(i-1)}) = \pi(X_{j-1}^{(i-1)})$\;
        }{
        Set $\pi(X_{j}^{(i-1)}) = \pi(X^{(i-1)}_{j} + Z_{j})$\;
        }
        Set $B^{(i)}_j = \ind(\pi(X^{(i-1)}_j + Z_j)/\pi(X^{(i-1)}_j) > U_0)$\;
        Set $X^{(i)}_j = X^{(i)}_{j-1} + B^{(i)}_{j-1}$\;
    }
    Set $X^{(i)}_K = X^{(i)}_{K-1} + B^{(i)}_{K-1}$\;
\KwOut{$\pi(X_0^{(i)})$, $X^{(i)}_{0:K}$ $B^{(i)}_{0:K-1}$.}
\end{algorithm}

Note that, the parallel architecture used in Algorithm~\ref{alg: picard map sleek} is comparable to that used in parallel implementations of Multiple-try schemes \citep{glatt2024parallel}: in both cases, the input is broadcast to $K$ processors, which evaluate $\pi$ at different locations in parallel; see Figure~\ref{fig:parallel_architecture} for an illustration.
\begin{figure}[h!]
    \centering
    \includegraphics[width=0.65\linewidth]{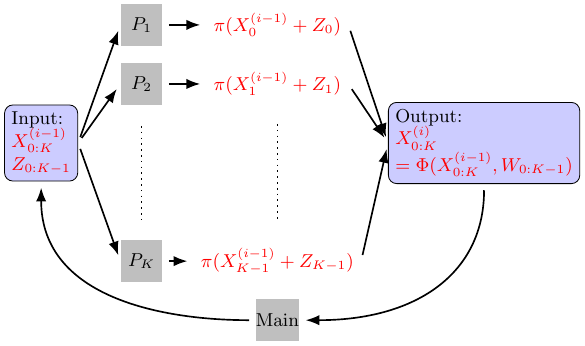}
    \caption{Diagram of one parallel iteration of the Picard map in Algorithm~\ref{alg: picard map sleek}: the input is sent to $K$ parallel processors $P_1,\dots,P_K$, each evaluating $\pi$ at a different location. A similar parallel architecture is used for parallel implementations of Multiple-try.}
    \label{fig:parallel_architecture}
\end{figure}

\subsection{Parallelization overhead and effective speedup}\label{sec:effective speed-up}
To assess the effective speedup of Picard algorithms relative to their sequential counterparts, one must account for the wall-clock time of each algorithm. For a given algorithm, this mainly depends on (i) mixing of the Markov chain, as summarized by ESS in \eqref{eq: ess}; (ii) the computational cost of each evaluation of $\pi$, denoted here by $c$; and, only for parallel algorithms, (iii) the parallelization overhead $\epsilon$ incurred by the parallel architecture used at each iteration. In this section we discuss how these factors affect the observed effective speedup.

We quantify efficiency via ESS per unit wall-clock time, i.e.  $\mathrm{ESS}_{\mathrm{seq}} / T_{\mathrm{seq}}$ and  $\mathrm{ESS}_{\mathrm{pic}} / T_{\mathrm{pic}}$ respectively for the sequential implementation and the online Picard algorithm.
By the discussion in Section~\ref{sec: parallel architectures}, we let $T_{\mathrm{seq}} \approx N \times c$ and $T_{\mathrm{pic}} \approx  M_N \times (c + \epsilon)$, where $M_N$ is the number of parallel Picard iterations needed to simulate $N$ increments of the Markov chain. The \emph{effective speedup} of the Picard algorithm relative to the sequential algorithm is given by:
\begin{equation}\label{eq: observed speedup}
    \frac{\left( \mathrm{ESS}_{\mathrm{pic}}/T_{\mathrm{pic}}\right)}{\left(\mathrm{ESS}_{\mathrm{seq}} /T_{\mathrm{seq}}\right)} = \frac{T_{\mathrm{seq}}}{T_{\mathrm{pic}}} \approx \frac{N}{M_N}\left(1 + \frac{\epsilon}{c}\right)^{-1}  \approx \hat G\left(1 + \frac{\epsilon}{c}\right)^{-1}.
\end{equation}
Here we used $\mathrm{ESS}_{\mathrm{pic}}=\mathrm{ESS}_{\mathrm{seq}}$, since both methods output the same trajectory. For instance, in our application in precision medicine (Section~\ref{sec:blackbox}), $c\approx \epsilon$, i.e., parallelization overhead equals the cost of evaluating $\pi$, and the effective speedup is approximately half the speedup $\hat G$. Note that, by Theorem~\ref{thm: complexity rwm}, we have that $\hat G = C \sqrt{d}$, when $K \ge  \sqrt{d}$, for some $C>0$. Therefore, by \eqref{eq: observed speedup}, achieving an observed speedup of at least $S$ (with $S\le C\sqrt{d}$) requires the parallelization latency $\epsilon$ to be smaller than a fraction $(C\sqrt{d}/S-1)$ of $c$. This condition can be checked for a given application and depends on the dimensionality of the problem, the target-evaluation cost $c$ and the parallelization overhead $\epsilon$, whose value is strongly influenced by the specific parallel architecture and hardware employed. For example, \citet{glatt2024parallel} notes that, for Multiple-try, a GPU implementation with $K = 10^6$ has roughly the same parallelization overhead of an equivalent implementation on a CPU with only $K= 10^2$.

\end{document}


\jname{Biometrika}
\jyear{2025}
\jvol{112}
\jnum{1}
\cyear{2025}
\accessdate{Advance Access publication on 14 February 2025}

\received{0 January 0000}
\revised{0 February 0000}

\markboth{S. Grazzi \& G. Zanella}{Parallel computations for  Metropolis Markov chains with Picard maps}

\title{Parallel computations for  Metropolis Markov chains with Picard maps}

\author{S. Grazzi}
\affil{Department of Decision Sciences and BIDSA, Bocconi University\\ Via Roentgen 1, 20136, Milan, Italy
\email{sebastiano.grazzi@unibocconi.it}}

\author{G. Zanella}
\affil{Department of Decision Sciences and BIDSA, Bocconi University\\ Via Roentgen 1, 20136, Milan, Italy
\email{giacomo.zanella@unibocconi.it}}

\maketitle

\begin{abstract}
We develop parallel algorithms for simulating zeroth-order (aka gradient-free) Metropolis Markov chains based on the Picard map.  For Random Walk Metropolis Markov chains targeting log-concave distributions $\pi$ on $\mathbb{R}^d$, our algorithm  generates samples close to $\pi$ in $\cO(\sqrt{d})$ parallel iterations with $\cO(\sqrt{d})$ processors, therefore speeding up the convergence of the corresponding sequential implementation by a factor $\sqrt{d}$. Furthermore, a modification of our algorithm generates samples from an approximate measure $ \pi_r$ in $\cO(1)$ parallel iterations and $\cO(d)$ processors. We empirically assess the performance of the proposed algorithms in high-dimensional regression problems, an epidemic model where the gradient is  unavailable and a real-word application in precision medicine. Our algorithms are straightforward to implement and may constitute a useful tool for practitioners seeking to  sample from a prescribed distribution $\pi$ using only point-wise evaluations of $\log\pi$ and parallel computing.
\end{abstract}

\begin{keywords}
parallel computing, Metropolis-Hastings, sampling, log-concavity, gradient-free, zeroth-order
\end{keywords}


\bibliographystyle{biometrika}
\bibliography{paper-ref}

\pagebreak
\appendix
{
\begin{center}
\textbf{\large Supplementary Materials for “Parallel computations for Metropolis Markov chains with Picard maps” by S. Grazzi and G. Zanella}
\end{center}
}
\section{Assumption~\ref{ass: 1} for Bayesian Logistic Regression}\label{sec:logistic_regression}
Here we verify Assumption \ref{ass: 1} for the logistic regression with Gaussian prior $\cN_d(0, \Lambda^{-1}_0)$. Assume for $i =1,2,\dots,n$, $Y_i \mid A_i \overset{\text{i.i.d.}}{\sim} \mbox{Ber}(\psi(\langle A_i, x \rangle )),$ with $\psi(u) =(1 + e^{- u})^{-1}$ and unknown parameter $x \in \RR^d$, response variables $Y_i \in \{0,1\}$ and covariates $A_i \in \RR^d$. Then, the negative log-posterior takes the form 
$$
V(x) = \frac{1}{2}x^\top\Lambda_0 x + \sum_{i=0}^n \left(y_i \log(\psi(A_i^\top x)) + (1- y_i) \log(1- \psi(A_i^\top x))\right)
$$
with Hessian 
$$
\nabla^2 V(x) = \Lambda_0 + \sum_{i=1}^n \psi(A_i^\top x)(1 -\psi(A_i^\top x))A_i A_i^\top. 
$$
Since  $\psi(A_i^\top x)(1 -\psi(A_i^\top x))\le 1/4$, we have that 
$$
\nabla^2 V(x) \preccurlyeq \Lambda_0 + \frac{1}{4}\sum_{i=1}^n  A_i A_i^\top,
$$
thus $V$ is $L$-smooth with 
$$
L := \lambda_{max}(\Lambda_0 +  \frac{1}{4}\sum_{i=1}^n  A_i A_i^\top).
$$
Define $w(t) := \psi(t)(1 - \psi( t))$. One can check that $w$ is Lipschitz with constant $C = 1/(6\sqrt{3})$. Since $\nabla^2 V(y) - \nabla^2 V(x) = \sum_{i=1}^n (w(A_i^\top y) - w(A_i^\top x))A_i A_i^\top$, we have
$$
- \|y-x\| M \preccurlyeq  \nabla^2 V(y) - \nabla^2 V(x) \preccurlyeq  \|y-x\| M
$$
with $M =  C \sum_{i=1}^n \|A_i\|A_i A_i^\top$. Thus, Assumption~\ref{ass: 1} holds with 
$
\gamma =  \mathrm{Tr}(M) / (L^{3/2} d^{1/2}),
$
and $\mathrm{Tr}(M) = C \sum_{i=1}^n \|A_i\|^3$. Note that, if one further assumes that $A_i \sim \cN(0, I_d / d)$ (as in Section~\ref{sec: high-dimensional regressions}), $\Lambda_0 = \alpha I_d$ and $n/d \to \rho< \infty$, then, by standard high-probability bounds for Gaussians, one can easily show that 
$\mathrm{Tr}(M) = n C (1 + o_{\PP}(1))$, 
$L = \alpha + \frac{1}{4}(1 + \sqrt{\rho})^2 + o_{\PP}(1)$,
where $X_d = o_{\PP}(a_d)$ means $X_d/a_d \to 0$ in probability. Thus
$$
\gamma = C \frac{n}{L^{3/2} d^{1/2}}(1 + o_{\PP}(1))\,.
$$

\section{Proofs for Random Walk Metropolis}\label{app: proofs}

\subsection{Background lemmas}
We first state some background lemmas that will be helpful to streamline the proofs of the main results.

\begin{lemma}\label{lemma: bound exponential}
    Let $X \sim \cE(\lambda)$, where $\cE(\lambda)$ denotes the exponential distribution with rate $\lambda$. 
    Then for any $a,b \in \RR$ we have $\PP(X \in [a \wedge b, a \vee b]) \le \PP(X \le a \vee b - a \wedge b)$, where $a\wedge b$ and $a\vee b$ denote, respectively, the minimum and maximum between $a$ and $b$.
\end{lemma}
\begin{proof} Without loss of generality, let $a \le b$. For the cases $a,b \le 0$ and $a \le 0, b \ge 0$, Lemma~\ref{lemma: bound exponential} is trivial, while for $a,b \ge 0$, it
 follows from the fact that the density function of an exponential random variable is non-increasing.
\end{proof}
\begin{lemma}\label{lemma: mini-lemma 2}
$Z \sim \cN(0, \sigma^2I_d)$ with $\sigma\geq 0$. 
Then, for every $v\in\RR^d$, we have
$
 \EE[|\langle v, Z\rangle|
]=\|v\|\sqrt{\frac{2}{\pi}}\sigma.
$
\end{lemma}
\begin{proof}
    By rotational invariance of $\cN(0, \sigma^2I_d)$, we have $
     \EE[|\langle v, Z\rangle| =\|v\| \EE[| Z_{1}|]
$,
where $|Z_1|$ follows a one-dimensional folded normal distribution and has mean $\sqrt{\frac{2}{\pi}}\sigma$.
\end{proof}

\begin{lemma}\label{lemma: quadratic symmetric matrix}
Let $Z\sim \cN(0, \sigma^2 I_d)$ and $M$ be a $d \times d$ random matrix independent of $Z$. Then   
$\EE[Z^\top M Z] = \sigma^2 \mathrm{Tr}(\EE[M])$.
\end{lemma}
\begin{proof}
$\EE[Z^\top M Z]= \sum_{i,j}\EE[ Z_i Z_j] \EE[M_{i,j}]= \sum_{i}\EE[ Z_i^2] \EE [M_{i,i}] =  \sigma^2 \mathrm{Tr}(\EE[M])$.
\end{proof}

\begin{lemma}\label{lemma: bounds expectation with indicator function}
     For any non-negative random variable $Z$ and constant $c > 0$, we have 
     $$
     \EE[Z \ind (Z > c)] =  \int_c^\infty  \PP(Z \ge u)] \dd u  + c \PP(Z > c).
     $$
\end{lemma}
\begin{proof}
Follows from
\begin{align*}
\EE[Z \ind (Z > c)]-c \PP(Z > c)
&= \EE[ \int_0^Z  \ind ( u > c)\dd u ] = \EE[ \int_c^\infty  \ind ( u < Z)\dd u]\\
&=\int_c^\infty  \EE[\ind ( u < Z)\dd u]= \int_c^\infty  \PP(Z \ge u) \dd u\,,
\end{align*}
where we used Fubini-Tonelli's Theorem in the third equality. 
\end{proof}

\subsection{Contraction of the Picard map: key lemmas}
The following three lemmas are at the core of the analysis of the contraction of the Picard map $\Phi$, as well as of the proofs of the main results.

\begin{lemma}\label{lemma: 1}
    Under Assumptions~\ref{ass: 1}-\ref{ass: 1.2}, for every $x, y \in \cX$,
    $$
    \PP(f(x, W) \ne f(y, W)) \le \frac{h L^{1/2}}{d^{1/2}}\left(\sqrt{\frac{2}{\pi}} + \frac{h \gamma}{2}\right)\|x-y\|, \qquad W\sim \nu.
    $$
\end{lemma}

\begin{proof}
 \label{app: proof: lemma 1}
By \eqref{eq:zero_f}, under Assumptions \ref{ass: 1}-\ref{ass: 1.2}, we have that
$$
\PP(f(x, W) \ne f(y, W)) = \PP(B(x, U, Z) \ne B(y, U, Z))
$$
with $U \sim \text{Unif}([0,1])$ and $Z \sim \cN(0, h^2/(Ld)I_d)$. 
Since $E = -\log(U) \sim \cE(1)$, by Lemma~\ref{lemma: bound exponential} we have
\begin{align*}
    \PP(B(x,  U,Z)&\ne B(y,  U,Z) \mid Z )\\
    &=\left|\PP\left(E < V(x + Z) - V(x)\right) - \PP\left( E < V(y + Z) - V(y) \right) \right|\\
    &\le 1 - \exp(-|\Delta| )
\end{align*}
where $\Delta =  V(x + Z) - V(x) - (V(y+ Z) - V(y))$. Thus
\begin{align*}
\PP(B(x,  U,Z) \ne B(y,  U,Z))
&=   \EE[\PP(B(x,  U,Z) \ne B(y,  U,Z) \mid Z )]
\\&\leq
\EE[
1 - e^{-|\Delta|}
] 
\\&\leq 
\EE[
|\Delta|
]\,,
\end{align*}
since $1 - e^{-x} \le x, x \in \RR$.

Since $V$ is twice-continuously differentiable we have
\[
\Delta =  \underbrace{(\nabla V(y) - \nabla V(x))^\top Z}_{\Delta_1} + \underbrace{\frac{1}{2}Z^\top (\nabla^2 V(x + tZ) - \nabla^2 V(y + tZ)) Z}_{\Delta_2}
\]
for some $t=t(x,y) \in [0, 1]$.
By the triangle inequality
\begin{equation*}
    \EE[|\Delta|] \le \EE[|\Delta_1|] +\EE[|\Delta_2|].
\end{equation*}
For the first term, by 
$Z \sim \cN(0, h^2/(Ld)I_d)$, Lemma~\ref{lemma: mini-lemma 2} and the $L$-smoothness of $V$, we have
$$
\EE[|\Delta_1|] = \sqrt{\frac{2}{\pi}}\frac{h}{\sqrt{L d}}\|\nabla V(y) - \nabla V(x)\| \leq \sqrt{\frac{2}{\pi}}h\sqrt{\frac{L}{d}}\|x-y\|^2.
$$
For the second term, by the strong Hessian-Lipschitz condition in Assumption \ref{ass: 1} and Lemma~\ref{lemma: quadratic symmetric matrix}, we have
\begin{align*}
    \EE[|\Delta_2|] 
    &= \frac{1}{2}\EE[|Z^\top (\nabla^2 V(x + tZ) - \nabla^2 V(y + tZ)) Z |] \\
    &\le \frac{1}{2}\EE[|Z^\top M Z]\|x-y\|\\
    &= \frac{h^2}{2}\frac{\mathrm{Tr}(M)}{Ld}\|x - y\|\\
    &\le\frac{h^2}{2}\frac{\gamma L^{3/2} d^{1/2}}{Ld}\|x - y\|\\
    &=\frac{h^2\gamma L^{1/2}}{2d^{1/2}}\|x - y\|\,.
\end{align*}
Hence, we have
\begin{align*}
    \EE[|\Delta|] &\le \EE[|\Delta_1|] + \EE[|\Delta_2|]
    \le \frac{h L^{1/2}}{d^{1/2}}\left(\sqrt{\frac{2}{\pi}} + \frac{h \gamma}{2}\right) \|x-y\|
\end{align*}
as desired.
\end{proof}

\begin{lemma}
    \label{lemma: 2}
    Under Assumption~\ref{ass: 1.2} and for all $x,y \in \cX^{K+1}$ with $ x_0 = y_0$, $w_0\in \cW$ and $1 < i \le d$,
    \begin{align}\label{eq: lemma 6}
        \EE[\|\Phi_i(x,  W) - \Phi_i( y, W)\|^2 \mid W_0 = w_0] &\le \frac{15 h^2}{L}\sum_{\ell = 1}^{i-1}\left(\PP(f(x_\ell, W_\ell) \ne f(  y_\ell,  W_\ell)) + \delta(d)\right),
    \end{align}
with $W_{1:K-1} \sim \nu^{\otimes (K-1)}$, $\nu^{\otimes p}$ being product measure of $\nu$ repeated $p$ times, and $\delta(t)$ as in Theorem~\ref{thm: main theorem 1}. The inequality \eqref{eq: lemma 6} holds also when $x$ and $y$ are replaced by random variables $X, Y$ on $\cX^{K+1}$, which can depend on $W$. 
\end{lemma}

The proof of Lemma~\ref{lemma: 2} relies on the following bounds.

\begin{lemma}\label{lemma:wish}
Let $W\sim \hbox{Wishart}(I_i,d)$ be a Wishart matrix on $\RR^{i\times i}$ with $d$ degrees of freedom and identity scale matrix, and $Y$ be a random vector on $\RR^i$ defined on the same probability space, with $\|Y\|^2\leq i$ almost surely. Then
\begin{equation*}
\EE[Y^TWY]\leq 15\EE[\|Y\|^2]
+i25 \exp(-3d/2)\,.
\end{equation*}
\end{lemma}
\begin{proof}

By \citet[Theorem 6.1]{wainwright2019high}, for all $t \ge 0, \, i \le d$, 
$$
\PP\left(\lambda_{\max}(W) > (1 + \sqrt{i/d} + t)^2 \right) \le e^{-t^2 d/2}
$$
for its maximum eigenvalue. Combining this result and Lemma~\ref{lemma: bounds expectation with indicator function}, we have 
\begin{align*}
  \EE[\lambda_{\max}(W) \ind(\lambda_{\max}(W) \ge 15)]   &=
  \int_{15}^\infty \PP(\lambda_{\max}(W) \ge u)\, \dd u +  15 \PP(\lambda_{\max}(W) > 15) \\
  &\le \int_{15}^\infty  e^{- (\sqrt{u}-2)^2 d/2} \dd u + 15 e^{-d (\sqrt{15} - 2)^2/2} \\
&\le \int_{15}^\infty  e^{- u d/10} \dd u + 15 e^{-3d /2}\\
  &= \frac{10}{d}\exp(-3d/2) + 15 \exp(-3d/2)\\
  &\le 25 \exp(-3d/2)
\end{align*}
where we used $(\sqrt{u} - 2)^2 > u/5$ for $u \ge 15$.
Finally, 
\begin{align*}
    \EE[Y^T W Y]&\le \EE[\lambda_{max}(W) \ind(\lambda_{max}(W) \le 15)\|Y\|^2] + i\EE[\lambda_{max}(W) \ind(\lambda_{max}(W) > 15)]) \\
    & \le 15 \left(\EE[\|Y\|^2] + i\delta(d)\right)
\end{align*}
with $\delta(d) = \frac{5}{3} \exp(-3d/2)$.
\end{proof}

\begin{proof}[of Lemma~\ref{lemma: 2}]
By the definition of $\Phi_i$ in \eqref{eq: picard recursion 0} and $f$ in \eqref{eq:zero_f}, we have 
\begin{align} \label{eq: norm picard map}
\Phi_i(x,  W) - \Phi_i(y,  W)
&=
\sum_{\ell=1}^{i -1}\Delta B_\ell Z_\ell 
\end{align}
where $\Delta B_\ell =  B( x_\ell, U_\ell, Z_\ell)- B( y_\ell, U_\ell, Z_\ell)$, and \eqref{eq: norm picard map} is independent of $W_0$. 
Here $U_1,U_2,\dots,U_{i-1} \overset{\mathrm{i.i.d}}{\sim}\text{Unif}([0,1])$ and $Z_1,Z_2,\dots,Z_{i-1} \overset{\text{i.i.d}}{\sim} \cN(0, h^2 /(Ld))$.
Also,
\begin{align}
\label{eq: IW2}
\|\sum_{\ell=1}^{i -1}\Delta B_\ell Z_\ell\|^2 =   \sum_{\ell,j =1}^{i-1}\Delta B_\ell \Delta B_j \langle Z_\ell, Z_j\rangle
 \stackrel{d}=
\frac{h^2}{L} \Delta B^T W \Delta B
\end{align}
with  $\Delta B = (\Delta B_1,\dots,\Delta B_{i-1})^T$ and $W \sim \hbox{Wishart}(I_{i-1}, d)$. Here $ \stackrel{d}=$ denotes equality in distribution. 
Combining \eqref{eq: norm picard map}-\eqref{eq: IW2} and using Lemma~\ref{lemma:wish} we have
\begin{align*}
&\EE[\|\Phi_i(x,  W) - \Phi_i(y,  W)\|^2\mid W_0=w_0]=
\frac{h^2}{L} \EE[ \Delta B^T W \Delta B]  \le \frac{15h^2}{L} (\EE[\|\Delta B\|^2] + i\delta(d))
\end{align*}
 with $\delta(d) =  \frac{5}{3}\exp(-3d/2)$.
The result follow by $\EE[\|\Delta B\|^2]=\sum_{\ell=1}^{i -1}\PP( B(x_\ell, U_\ell, Z_\ell)\ne B( y_\ell, U_\ell, Z_\ell))$. 
\end{proof}

Fix $i\in\{0,\dots,d\}$ and define 
\begin{align*}
A^{(j)} &=  \max_{\ell \le i}\PP(f(X^{(j)}_\ell,  W_\ell) \ne f(X_\ell, W_\ell)),
&j\in\{0,1,\dots\}\,.    
\end{align*}
\begin{lemma}
\label{lemma: recursion for max probability}
Under the assumptions of Theorem~\ref{thm: main theorem 1}, we have
    \begin{equation*} 
    A^{(j+1)} \le \sqrt{c_0\frac{i}{d}\left(A^{(j)} + \delta(d)\right)}
    \end{equation*}
    with $c_0$, $\delta(d)$ as in Theorem~\ref{thm: main theorem 1}.
\end{lemma}

\begin{proof}
By Lemma~\ref{lemma: 1} and Jensen's inequality we have
    \begin{align*}
    \PP(f(X^{(j+1)}_i,  W_i) \ne f(X_i, W_i)) & = \EE[\PP(f(X^{(j+1)}_i,  W_i) \ne f(X_i, W_i)\mid X^{(j+1)}_i, X_i)]\\
    &\le  \frac{h L^{1/2}}{d^{1/2}}\left( \sqrt{\frac{2}{\pi}} + \frac{h \gamma}{2} \right)\EE[\| X^{(j+1)}_i-X_i\|]
    \\
    &\le  \frac{h L^{1/2}}{d^{1/2}}\left( \sqrt{\frac{2}{\pi}} + \frac{h \gamma}{2} \right)\sqrt{\EE[\| X^{(j+1)}_i-X_i\|^2]}\,.
    \end{align*}
    By Lemma~\ref{lemma: 2} we have that
    \begin{align*}
    \EE[\| X^{(j+1)}_i-X_i\|^2]&=\EE[\| \Phi_i(X^{(j)},W)-\Phi_i(X,W)\|^2]\\
    &\le\frac{15h^2}{L} \sum_{\ell = 1}^{i-1}   \left(\PP(f(X^{(j)}_\ell,  W_\ell) \ne  f(X_\ell, W_\ell)) + \delta(d)\right)\\
    &\le\frac{15h^2}{L} i   \left(A^{(j)} + \delta(d)\right)\,.
    \end{align*}
Combining the above gives
    \begin{align*}
    \PP(f(X^{(j+1)}_i,  W_i)\ne f(X_i, W_i))  
    &\le  \sqrt{15} h^2\left( \sqrt{\frac{2}{\pi}} + \frac{h \gamma}{2} \right) \sqrt{\frac{i}{d}\left(
    A^{(j)} + \delta(d)\right)}\,,
    \end{align*}
which implies the desired result after noting that $\max_{\ell\leq i}$ is a non decreasing operator in $i$.
\end{proof}

\subsection{Proof of Theorem~\ref{thm: main theorem 1}}\label{app: proof: thm 1}

\begin{lemma}\label{lemma:recursion}
Let $(a_j)_{j=0,1,\dots}$ be a non-negative sequence satisfying $a_0 = 1$ and
$a_{j+1}\leq b\sqrt{a_j+\epsilon}$ for some fixed $b,\epsilon>0$ and all $j\geq 0$. Then
$a_j\leq b^2+\epsilon+2^{-j}$ for all $j\geq 0$.
\end{lemma}
\begin{proof}
Let $g(a)=b\sqrt{a+\epsilon}$.
By monotonicity of $g$, we can assume without loss of generality that $a_0=1$ and $a_{j+1}= g(a_j)$ for all $j\geq 1$. 
The function $g$ is concave and its unique fixed point on $[0,\infty)$ satisfying $a^*=g(a^*)$ is 
$$
a^*=\frac{b^2+\sqrt{b^4+4\epsilon b^2}}{2}\in(b^2,b^2+\epsilon)\,.
$$
Indeed, imposing $g(a)^2=a^2$ gives $a^2-b^2a-b^2\epsilon=0$, whose solutions are $(b^2+\sqrt{b^4+4\epsilon b^2})/2$ and $(b^2-\sqrt{b^4+4\epsilon b^2})/2$, and the latter is negative.
Also, $(b^2+\sqrt{b^4+4\epsilon b^2})/2\geq (b^2+\sqrt{b^4})/2=b^2$ and, by concavity of $x\mapsto \sqrt{x}$, $\sqrt{b^4+4\epsilon b^2}\leq b^2+\frac{1}{2\sqrt{b^4}}4\epsilon b^2=b^2+2\epsilon$ which implies
$a^*=(b^2+\sqrt{b^4+4\epsilon b^2})/2\leq (b^2+b^2+2\epsilon)/2=b^2+\epsilon$. 

If $a_0=1\leq a^*$ then $a_j\leq a^*\leq b^2+\epsilon\leq b^2+\epsilon+2^{-j}$ for all $j\geq 0$, as desired.
Consider now $a_0=1\geq a^*$.
Since $g'$ is strictly decreasing on $[0,\infty)$, for all $a\geq a^*$ we have $0< g'(a)\leq g'(a^*)\leq g'(b^2-\epsilon)=1/2$.
Thus, $g$ is a strict contraction on $[a^*,\infty)$ with Lipschitz constant smaller than $1/2$ and, by Banach's fixed point theorem, we have
$
|a_j-a^*|\leq 2^{-j}|a_0-a^*|
$ for all $j\geq 0$.
By $a_0=1\geq a^*$, we have $|a_0-a^*|\leq a_0=1$ and thus we deduce
$
a_j
\leq
a^*+ 2^{-j}
\leq
b^2+\epsilon+ 2^{-j}
$
as desired.
\end{proof}

\begin{proof}[of Theorem~\ref{thm: main theorem 1}].
    Follows directly by Lemma \ref{lemma: recursion for max probability} and Lemma \ref{lemma:recursion} with $b=\sqrt{c_0 i/d}$, $\epsilon=\delta(d)$, and $c_0, \delta(d)$ as in Theorem~\ref{thm: main theorem 1}.
\end{proof}

\subsection{Proof of Corollary~\ref{corol: picard map}}\label{app: proof: thm 2} 
By \eqref{eq: online picard 2} we have
   \begin{align*}
   \PP(L^{(j)} < K) &=
    1-\PP \left( \bigcap_{i=0}^{K -1 } f(X^{(j)}_i,  W_i) = f(X_i, W_i) \right)\\ 
    &= \PP \left(\bigcup_{i=0}^{K-1} f(X^{(j)}_i,  W_i) \ne f(X_i, W_i) \right)\,.
\end{align*}
Applying the union bound and the bound in Theorem~\ref{thm: main theorem 1}, for $K \le b \sqrt{d}, b > 0$, it follows
   \begin{align*}
   \PP(L^{(j)} < K) &\leq \sum_{i=0}^{K-1} \PP(f(X^{(j)}_i,  W_i)\ne f(X_i, W_i))
    \le  b\sqrt{d}\left(\frac{c_0 b}{\sqrt{d}} + \delta(d) + 2^{-j}\right)\,.
\end{align*}
This is less than $\epsilon$ if
$c_0 b^2 = \frac{\epsilon}{2}$  and 
$b \sqrt{d}(\delta(d) + 2^{-j}) \le \epsilon/2$.
The first condition is achieved by setting $b = \frac{\epsilon}{\sqrt{2 c_0}}$. With this choice, the latter condition is always satisfied for $h\ge 1, \, d\ge 1$
since $(h,d) \mapsto \sqrt{\frac{d}{2c_0}}(\delta(d) + 2^{-j}) - \frac{1}{2}$ is negative on $[1,\infty)^2$.

\subsection{Proof of Theorem~\ref{thm: complexity rwm}}
\label{app: proof compelxity rwm}
For two real-valued random variables $X \sim \mu,\, Y \sim \nu$, we write $X\succeq Y$ if $\mu([x,\infty)) \ge \nu([x, \infty))$ for all $x\in(-\infty,\infty)$.

\begin{proposition}
\label{prop: Gi}
Under Assumption \ref{ass: 1}-\ref{ass: 1.2},  $h \ge 1$, for all $\epsilon > 0$, $K\leq \sqrt{\epsilon^2 d/(2 c_0)}$, 
 $x_0 \in \cX$, any $j \in \NN$ and $p = \lceil \log(d) \rceil$, we have 
    $$
    (L^{(jp)} \mid X_0 = x_0) \succeq \sum_{i=1}^j P_i,
    $$
    with $P_i \overset{i.i.d.}{\sim} (1 - \epsilon)\delta_K(\cdot) +  \epsilon\delta_{0}( \cdot)$.
\end{proposition}
 The proof of Proposition~\ref{prop: Gi} requires the following two lemmas. 
For any $i \in \NN$, define $\tilde{G}_i := \sum_{\ell=0}^{p-1} G^{(p i + \ell)}$, with $(G^{(i)})_{i=0,1,\dots}$ defined in \eqref{eq:def_G_j} below, $\delta_c(\cdot)$ being the Dirac measure centered at $c$.
\begin{lemma}\label{lemma: G1}
    Suppose that, for some random variable $A$, $\tilde G_i \mid \tilde G_{0:(i-1)}, A  \succeq P_i$. Then $\tilde G_i \mid \tilde G_{0:(i-1)} \succeq P_i$.
\end{lemma}
\begin{proof}
    By the law of total probability we have 
    $$
    \PP(\tilde G_i \ge t \mid \tilde  G_{0:(i-1)}) = \EE_A[\PP(\tilde G_i \ge t \mid G_{0:(i-1)}, A)] \ge \EE_A[\PP(P_i \ge t)] = \PP(P_i \ge t).
    $$
\end{proof}

\begin{lemma}
\label{lemma: stoch ordering for sum of rvs} Suppose $\tilde G_i \mid \tilde G_{0:(i-1)} \succeq P_i$ for all $i\ge 0$. Then $\sum_{i = 0}^j \tilde G_i \succeq \sum_{i = 0}^j P_i$ for all $j \ge 0$.
\end{lemma}
\begin{proof} Let $\hat G_j := \sum_{i=0}^j \tilde G_i$, $\hat P_j := \sum_{i=0}^j P_i$. Then
 $\hat G_0 = \tilde G_0 \succeq P_0 = \hat P_0$. By induction, if $\hat G_{j} \succeq \hat P_j $, for some $j$, then we have 
\begin{align*}
    F_{\hat G_{j+1}}(t) =F_{   \tilde G_{j+1} + \hat G_{j}
    }(t)&=\int F_{ \tilde G_{j+1} \mid \hat G_{j}}(t-u)\dd F_{\hat G_{j}}(u) \\
    &\le \int F_{P_{i+1}}(t-u)\dd F_{\hat G_{j}}(u) \\
    &\le \int F_{P_{i+1}}(t-u)\dd F_{\hat P_{j}}(u) = F_{\hat P_{j+1}}(t) 
\end{align*}
where $F_X(t)$ denotes the distribution function of $X$ evaluated at $t$ and the last inequality follows since $u \mapsto F_{P_{i+1}}(t-u)$ is decreasing.
\end{proof}

\begin{proof}[of Proposition~\ref{prop: Gi}]
As also shown in the proof of Proposition~\ref{prop: MC picard sup} below, we have 
$
L^{(j+1)} = \sum_{i=0}^j G^{(i)}
$. Let $\tilde L_i := L^{(pi)}=\sum_{\ell=0}^{i-1} \tilde{G}_\ell$. Note that $
\{\tilde L_i = l\} \in \sigma(X_0, W_{0:l})$ where $\sigma(X)$ denotes the sigma-algebra generated by the random variable $X$. 

Let $A = (X_0, W_{0:\tilde L_{i}}, \tilde L_i)$. Then $\tilde G_{0:(i-1)} \in \sigma (A)$, $$\PP(\tilde{G}_i \ge \ell \mid A, \tilde{G}_{0:(i-1)}) = \PP(\tilde{G}_i \ge \ell \mid A)$$
and, for any $a = (x_0, w_{0: l}, l)$, $\epsilon > 0$ and $K \le \epsilon \sqrt{d/(2c_0)}$ we have
\begin{align*}
 \PP(\tilde{G}_i \ge \ell \mid A = a) &=  \PP(\tilde{G}_i \ge \ell \mid X_{\tilde L_i} = x_{l}, W_{\tilde L_i} = w_{l}, \tilde L_i = l) \\
 &= \PP(\tilde{G}_0 \ge \ell \mid X_{\tilde L_0} = x_{l}, W_{\tilde L_0} = w_{l}) \\
 &=\PP(L^{(p)} \ge \ell \mid X_{0} =  x_{l}, \, W_{0} = w_{l})\\
 &\geq  \PP(P_i \ge \ell )\,,
\end{align*} 
where $x_{i} = x_0 + \sum_{\ell=1}^{i-1} f(x_\ell, w_\ell), \, i \ge 1$ and the last inequality follows from Corollary~\ref{corol: picard map} and from $\tilde{G}_i \ge p$ almost surely, which holds by construction. 
Combining the above with Lemmas~\ref{lemma: G1}-\ref{lemma: stoch ordering for sum of rvs} yields the desire result.
 \end{proof}

\begin{proof}[of Theorem~\ref{thm: complexity rwm}]
    Fix $\epsilon = 1/2$. By Proposition~\ref{prop: Gi}, for all $j, N \in \NN$, $p = \lceil \log(d)\rceil$, $K \le \sqrt{d/(8c_0)}$, we have
\begin{align*}
    \PP(L^{(pj)} < N \mid X_0 = x)  &= \PP(\sum_{i=0}^{j-1} G^{(i, p)} < N\mid X_0 = x) \le \PP\left(\sum_{i=1}^j P_i < N\right)\,
\end{align*}
with $(P_i)_{i=1}^j$ as in Proposition~\ref{prop: Gi}, $\mu = \EE[P_1] =  (1-\epsilon)K =  K/2$.
By Hoeffding's inequality, we have 
\begin{align*}
    \PP\left(\sum_{i=1}^j P_i< N\right)
    &= \PP\left(\sum_{i=1}^j P_i - j\mu < -(j \mu - N)\right)\\
    &\le \exp\left(- \frac{2(j\mu - N)^2}{j K^2}\right)\\
    &= \exp\left(- \frac{2(jK/2 - N)^2}{j K^2}\right) \le\exp\left(- j\left(1-\frac{2N}{j K}\right)\right)
\end{align*}
for any $j > 2N/K$. 
In particular, for any $\delta > 0$, if 
$j \geq \max\{\frac{4N}{K},2\log(1/\delta)\}$, then $j\left(1-\frac{2N}{j K}\right)\geq 
2\log(1/\delta)\left(1-\frac{1}{2}\right)=\log(1/\delta)$ and the above exponential bound is less than $\delta$.
\end{proof}

\subsection{Proof of Corollary~\ref{corollary: complexity rwm}}
Let $(X_n)_{n=0,1,2,\dots}$ be the RWM trajectory, i.e.\ the exact solution of the recursion in \eqref{eq: 1}. 
Under Assumptions~\ref{ass: 1}-\ref{ass: 1.2}, $m I_d \preccurlyeq \nabla^2 V$, $h=1$ and 
$X_0 \sim \cN(x^\star, L^{-1}I_d)$, we can apply Theorem 49 and Remark 50 of \citet{andrieu2024explicit} to deduce $\sqrt{\chi^2 (\cL(X_{n_{mix}}), \pi)} \le \epsilon$ with $n_{mix}=\mathcal{O}(\frac{L}{m}d)$, where $n_{mix}$ depends on $\epsilon$, $d$, $L/m$ and is explicitly provided in \citet[Rmk.50]{andrieu2024explicit}. 
Consider running the Online Picard algorithm with $K = \sqrt{d/(8 c_0)}$ for 
$$J=\lceil \log(d)
    \max\{4n_{mix}/K,2\log(2/\epsilon)\}$$ 
parallel iterations and returning $Y=X^{(J)}_{\min\{n_{mix},L^{(J)}\}}$ as output.
When $n_{mix}\leq L^{(J)}$, we have $Y=X^{(J)}_{n_{mix}}=X_{n_{mix}}$.
By Theorem~\ref{thm: complexity rwm}, $\PP(Y=X_{n_{mix}})\geq \PP(L^{(J)}\geq n_{mix}) \ge 1-\epsilon/2$. Therefore
\begin{align*}
\|\cL(Y)-\pi\|_{TV}
&\leq
\PP(Y\neq X_{n_{mix}})+TV(\cL(X_{n_{mix}}), \pi)\\
&\leq \epsilon/2+\frac{1}{2}\sqrt{\chi^2 (\cL(X_{n_{mix}}), \pi)}\\
&\leq \epsilon/2+\epsilon/2=\epsilon\,.
\end{align*}

\subsection{Proof of Proposition~\ref{prop: transient phase}}\label{app: proof prop transient phase}

For a sequence $x_n \to \infty$, let $A_n=V(x_n+Z)-V(x_n)$, $B_n=V(Y_n +Z) - V(Y_n)$, $Z \sim \cN(0,1)$, where $Y_n = x_n + N_n$ and for a sequence of random variables $(N_n)_{n=1,2,\dots}$ on $\RR^d$ with $\EE[\|N_n\|^2] \le C$.
\begin{lemma}
    \label{lem: 1 prop tails}
    Under Assumption~\ref{ass: 1} with $mI_d \preccurlyeq \nabla^2 V$, we have
    \begin{align}
       \EE[|A_n - B_n|] &\le 2L(C + d), &\hbox{for all } n = 1,2,\dots\,, \label{eq: lem prop tails 1}\\
       \limsup_{n}\PP(|A_n| \le c) &= 0 &\hbox{for all } n = 1,2,\dots\,, \, c \in (0, \infty). \label{eq: lem prop tails 2}
    \end{align}    
\end{lemma}
\begin{proof}
Let $A = V(x + z) - V(x)$, $B =V(y + z) - V(y)$, for some $x, y, z \in \cX$. By $L$-smoothness of $V$, we have that, for some $t_1,t_2\in(0,1)$ depending on $x, y, z$,
\begin{align*}
|A-B|
&=
|\langle \nabla V(x+t_1 z)-\nabla V(y+t_2z),z\rangle|\\
&\leq \|\nabla V(x + t_1 z)-\nabla V(y + t_2 z)\|\|z\|\\
&\leq L(\|x-y\|+\|z\|)\|z\|\\
&\leq 2L\|x-y\|^2+\|z\|^2.
\end{align*}
Thus, 
\begin{align*}
    \EE[|A_n - B_n|]&\leq 2L\EE[\|x_n-Y_n\|^2+\|Z\|^2] \\
    &=  2L(\EE[\|N_n\|^2]+\EE[\|Z\|^2])\\
    &\le 2L(C + d) 
\end{align*}
proving \eqref{eq: lem prop tails 1}.
Also, again by $L$-smoothness of $V$, 
\begin{align*}
|A_n| \geq 
|\langle \nabla V(x_n),Z\rangle|-L\|Z\|^2/2\,
\end{align*}
thus, for all $c \in (0, \infty)$
\begin{align}
    \label{eq: lemma 1 prop tails}
   \PP(|A_n|\leq c)\leq 
\PP(\|Z\|^2\geq \frac{2}{L}(|\langle \nabla V(x_n),Z\rangle|-c)).
\end{align}
By rotational invariance $Z$, the right hand-side of \eqref{eq: lemma 1 prop tails} depends on $\nabla V(x_n)$ only through its norm. By $m$-convexity of $V$, $\|\nabla V(x_n)\| \to \infty$  as $|x_n| \to \infty$ and, consequently, the right hand-side of \eqref{eq: lemma 1 prop tails} goes to 0, proving the desired result in \eqref{eq: lem prop tails 2}.
\end{proof}

\begin{lemma}
\label{lem: 2 prop tails}
Under Assumption~\ref{ass: 1} with $mI_d \preccurlyeq \nabla^2 V$, we have
\begin{align*}
\limsup_{n\to\infty}
\EE[| \exp(-\max\{A_n,0\} ) - \exp(-\max\{B_n,0\})|]=0\,.
\end{align*}
\end{lemma}
\begin{proof} 
For a fixed $c \in (0,\infty)$, we have that
\begin{multline*}
    \EE[| \exp(-\max\{A_n,0\} ) - \exp(-\max\{B_n,0\})|]
\\\leq
e^{-c}+\PP(|A_n|\leq c \cap |B_n|\leq c) +\PP(|A_n - B_n| > 2c).
\end{multline*}
By Lemma~\ref{lem: 1 prop tails}, $\limsup_{n\to\infty}\PP(A_n>c \cap B_n > c) = 0$ and
$$
\PP(|A_n - B_n| > 2c)\le \frac{\EE[|A_n - B_n|]}{2c}\le   \frac{L(C + d)}{c}.
$$
Combining these results and since $c$ was arbitrary, we have
    \begin{align*}
&\limsup_{n\to\infty}\EE[| \exp(-\max\{A_n,0\} ) - \exp(-\max\{B,0\})|]
&\le \inf_c\{e^{-c} + \frac{L(C + d)}{c}\} = 0\,.
\end{align*}
\end{proof}

\begin{proof}[of Proposition~\ref{prop: transient phase}] 
Here we prove $\lim_{n \to \infty} \PP(L^{(j)} = jK) = 1$ for $j = 1$. The case $j \ge 1$ follows by recursion.
By \eqref{eq:zero_f}, we have
    \begin{align*}
        1 - \PP(L^{(1)} = K) &= 
        1 - \PP\left( \bigcap_{i=0}^{K} X^{(1)}_i = X_i\right)\\
        &= \PP\left(\bigcup_{i=0}^{K} \left( X^{(1)}_i\ne  X_i\right) \right)\\
        &=\PP\left(\bigcup_{i=0}^{K-1} \left( B(X^{(0)}_i, W_i) \ne  B(X_i, W_i) \right)\right)\\  
        &\le\sum_{i=0}^{K-1} \PP \left( B(x_n, W_i) \ne  B(X_i, W_i) \right).
    \end{align*}
Hence, $\lim_{n \to \infty} \PP(L^{(1)} = K) = 1$, if $\lim_{n \to \infty}\PP \left( B(x_n, W_i) \ne  B(X_i, W_i) \right) = 0$ for all $i = 0,1,\dots,K-1$.

For a fixed $i$, note that $X_i = x_n + N_n$
where $N_n = \sum_{\ell= 0}^{i-1}B(X_{\ell}, U_\ell, Z_\ell) Z_\ell$ and
\begin{align*}
    \EE[\|N_n\|^2] &= \EE[\|\sum_{\ell = 0}^{i-1}B(X_{\ell}, U_\ell, Z_\ell) Z_\ell\|^2]\\
    &\le \sum_{\ell = 0}^{i-1}\EE[\|Z_\ell\|^2] + \sum_{\substack{\ell,m = 0\\ \ell\ne m}}^{i-1}\EE[\|Z_\ell\|]\EE[\|Z_m\|]  \le C,
\end{align*}
with $C = i^2 d$. Then, for $E = -\log(U_i) \sim \mathrm{Exp}(1)$,
\begin{align*}
    &\PP ( B(x_n, U_i,  Z_i) \ne  B(X_{i},  U_i,  Z_i) )\\ 
    &=  \EE[\PP\left( B(x_n, U_i,  Z_i) \ne  B(x_n + N_n,  U_i,  Z_i) \mid U_{0:i-1},\, Z_{0:i} \right)]\\
    &=\EE [\left|\PP\left(E < V(x_n + Z) - V(x_n)\right) - \PP\left( E < V(x_n + N_n + Z) - V(x_n + N_n) \right) \right|]\\
    &=   \EE[| e^{-\max(V(x_n + Z) - V(x_n),0)} - e^{-\max(V(x_n + N_n + Z) - V(x_n + N_n),0)}|] \\
    &=   \EE[| e^{-\max(A_n,0)}- e^{-\max(B_n,0)}|] 
\end{align*}
which goes to 0 as  $n \to \infty$ by Lemma~\ref{lem: 2 prop tails}.
\end{proof}

\subsection{Proof of Proposition~\ref{prop: approximate picard}}\label{app: corollary 1}
 By Markov inequality and the bound in Theorem~\ref{thm: main theorem 1}, we have for $K \le \lfloor b d \rfloor$, $b = \epsilon r/(3c_0)$,
\begin{align*}
    \PP\Big(\frac{1}{K}\sum_{i=0}^{K-1}\ind \big(f(X_i^{(j)}, W_i) \ne f(X_i, W_i)\big) \le r\Big) &\le \frac{1}{Kr}\sum_{i=0}^{K-1}\EE\left[\ind \left(f(X_i^{(j)}, W_i) \ne f(X_i
    , W_i)\right)\right]\\
    &\le \frac{\sum_{i=0}^{\lfloor bd\rfloor }\PP(f(X^{(j)}_i,  W_i) \ne f(X_i, W_i))}{\lfloor bd \rfloor r} \\
    &\le \frac{c_0 b + \delta(d) + 2^{-j}}{r}\le \epsilon
\end{align*}
for $d \ge -2\log(\epsilon r/5)/3$, $j \ge -\log(\epsilon r/3)/\log(2)$.

\section{Proofs for Metropolis within Gibbs}\label{app: proofs of MwG}
\subsection{Proof of Theorem~\ref{thm: gibbs}}
Theorem~\ref{thm: gibbs}  essentially shows that most of the results derived in Section~\ref{sec: main results} holds (with different constants) also for Metropolis within Gibbs (MwG) Markov chains, i.e. when replacing Assumption~\ref{ass: 1.2} with Assumption~\ref{ass: MwG}. 

Lemmas \ref{lemma: 1}-\ref{lemma: 2} are the building blocks of all the results of Section~\ref{sec: main results}. Thus, we start by providing the analogue of these Lemmas for MwG.

\begin{lemma}\label{lemma 1 mwg}
    Under Assumptions \ref{ass: 1} and \ref{ass: MwG},  the statement of Lemma~\ref{lemma: 1} holds.
\end{lemma}
To prove Lemma~\ref{lemma 1 mwg}, we first need the following proposition.
\begin{proposition}
    \label{prop: MwG kernel}
    Let $W^{\mathrm{MwG}}$ and $W^{\mathrm{RWM}}$ be the innovations of MwG and RWM satisfying  respectively Assumption~\ref{ass: MwG} and Assumption~\ref{ass: 1}. Then, for every $x \in \cX$
    \begin{align*}
           \cL(f_i(x, W^{\mathrm{MwG}})) &=  \cL(f(x, W^{\mathrm{RWM}})) &\hbox{for } i = 0,1,\dots,d-1,
    \end{align*}
    with $f$ as in \eqref{eq:zero_f} and $f_i$ as in \eqref{eq: MwG}.
\end{proposition}
\begin{proof}
By \eqref{eq:zero_f} and \eqref{eq: MwG},
 Proposition~\ref{prop: MwG kernel} follows if $\cL(o_i Z^{\mathrm{MwG}}) = \cL(Z^{\mathrm{RWM}})$ for all $i$, where $Z^{\mathrm{MwG}} = h/\sqrt{L d} (2P - 1)S$, $P \sim \mathrm{Ber}(1/2)$, $S \sim \chi(d)$, $Z^{\mathrm{RWM}}  \sim \cN(0, I_d h^2/(Ld))$, $(o_0,o_1,\dots,o_{d-1}) \sim \mathrm{Haar}(d)$. By rotational invariance of $(o_0,o_1,\dots,o_{d-1})$ and $Z^{\mathrm{RWM}}$, it is enough to show that
 $$\cL(\|o_i Z^{\mathrm{MwG}}\|) = \cL(\|Z^{\mathrm{RWM}}\|) = \cL(h/\sqrt{L d} \chi(d)).
 $$
\end{proof}

\begin{proof}[of Lemma~\ref{lemma 1 mwg}]
    By Proposition~\ref{prop: MwG kernel}, Lemma~\ref{lemma: 1} follows also under Assumptions~\ref{ass: 1}-\ref{ass: MwG}, with exactly the same constants and proof provided in Section~\ref{app: proof: lemma 1}. 
\end{proof}
Next, we provide the analogue of Lemma~\ref{lemma: 2} for MwG.
\begin{lemma}\label{lemma 2 mwg}
    Under Assumption \ref{ass: MwG} and for all $x,y \in \cX^{K+1}$, with $x_0 = y_0$, $w_0 \in \cW$ and $1 < i \le d$
    \begin{align}
        \label{eq: lemma 6 for MWG}
        \EE[\|\Phi_i(x,  W) - \Phi_i( y, W)\|^2 \mid W_0 = w] &\le \frac{2 h^2}{L}\sum_{\ell = 1}^{i-1}\left(\PP(f_i(x_\ell, W_\ell) \ne f_i( y_\ell,  W_\ell)) + \delta(d)\right)
    \end{align}
    with $\delta(d) = 11 \exp(-d/10)$ and where the expectation is relative to $(W_1,W_2,\dots,W_{K-1})$.
\end{lemma}
The proof of Lemma~\ref{lemma 2 mwg} relies on the following bounds.
\begin{lemma}\label{lem:chi_bern}
Let $X$ and $Y$ be two random variables defined on the same probability space, with $X = \tilde X/d, \,  \tilde X \sim \chi^2_d$ and $Y$ binary, i.e.\ $\PP(Y\in\{0,1\})=1$.
Then
$$
\EE[X Y]\leq 2\left(\PP(Y=1)+ 11\exp(-d/10)\right) \, .
$$
\end{lemma}
\begin{proof}
By \citet[Lemma~1]{laurent2000adaptive}, for every $t \ge 2$,
\begin{equation*}
    \PP(X > t) \le e^{-td/20}.
\end{equation*}
Then, Combining this result and Lemma~\ref{lemma: bounds expectation with indicator function}, we have that
\begin{align*}
  \EE[\ind(X \ge 2)X] &=
  \int_2^\infty \PP(X \ge u)\, \dd u + 2 \PP(X > 2) \\
  &\le \int_2^\infty  e^{- ud/20} \dd u + 2 e^{-d/10}\\ 
  &= \frac{20}{d}\exp(-d/10) + 2 \exp(-d/10) \\
  &\le 22 \exp(-d/10).
\end{align*}
Finally
\begin{align*}
    \EE[Y X] &\le \EE[ Y X \ind(X  \le 2 )] + \EE[(X \ind (X  > 2)]\\
    &\le 2\left(\EE[Y] + \delta(d)\right)
\end{align*}
with $\delta(d) = 11 \exp(-d/10)$.
\end{proof}

\begin{proof}[of Lemma \ref{lemma 2 mwg}]

Under Assumptions~\ref{ass: 1} and \ref{ass: MwG}, 
\begin{align*}
    \Phi_i(x,  W) - \Phi_i(y,  W)
&=
\sum_{\ell=1}^{i -1}\Delta B_\ell Z_\ell o_\ell 
\end{align*}
where $\Delta B_\ell =   B( x_\ell, U_\ell, o_\ell Z_\ell) - B( y_\ell, U_\ell, o_\ell Z_\ell)$, $Z_\ell = h/\sqrt{L d} (2P_\ell - 1)S$, $U_1,U_2,\dots,U_{i-1} \overset{\mathrm{i.i.d}}{\sim}\text{Unif}([0,1])$,  $P_1,P_2,\dots,P_{i-1} \overset{\mathrm{i.i.d}}{\sim} \mathrm{Ber}(1/2)$ and $S_1,S_2\dots,S_{i-1} \overset{\mathrm{i.i.d}}{\sim} \chi(d)$.

Since $\langle o_i,o_j\rangle = 0$ almost surely for all $i \ne j$ and by Proposition~\ref{prop: MwG kernel}, we have
$$
\|\sum_{\ell=1}^{i -1}\Delta B_\ell o_\ell Z_\ell\|^2 =   \sum_{\ell,j =1}^{i-1}\Delta B_\ell \Delta B_j \langle o_\ell Z_\ell, o_j Z_j\rangle =  \sum_{\ell =1}^{i-1}\Delta B_\ell^2 \|o_\ell Z_\ell\|^2 = \frac{h^2}{L}\sum_{\ell =1}^{i-1}\Delta B_\ell^2 \tilde Z_\ell 
$$
with $\tilde Z_1,\, \tilde Z_2, \dots, \tilde Z_{i-1}  \overset{\mathrm{i.i.d}}{\sim} \chi^2(d)$. Thus, by Lemma \ref{lem:chi_bern}, we have

\begin{align*}
    \EE[\|\sum_{\ell=0}^{i -1}\Delta B_\ell o_\ell Z_\ell\|^2] &= 
    \frac{h^2}{L} \sum_{\ell=1}^{i -1}\EE[\Delta B_\ell^2\tilde Z] \nonumber\\
    &\le \frac{2 h^2}{L}\sum_{\ell=1}^{i -1}\left(\PP( B(x_\ell, U_\ell, o_\ell Z_\ell)\ne B( y_\ell, U_\ell, o_\ell Z_\ell)) + \delta(d)\right) 
\end{align*}
with $\delta(d) = 11 \exp(-d/10)$.
\end{proof}

\begin{proof}[of Theorem~\ref{thm: gibbs}]
Under Assumptions~\ref{ass: 1} and \ref{ass: MwG}, the statements of Lemma~\ref{lemma: recursion for max probability} and Theorem~\ref{thm: main theorem 1} hold with new constants 
\begin{align}\label{eq: consants MwG}
    c_0 = 2h^4 \left(\sqrt{\frac{2}{\pi}} + \frac{h\gamma}{2} \right)^2, \qquad \delta(d) = 11 \exp(-d/10).
\end{align}
The proofs follow the same steps of the proof of  Lemma~\ref{lemma: recursion for max probability} and Theorem~\ref{thm: main theorem 1} in Section~\ref{app: proof: thm 1}.

Similarly, Under Assumptions~\ref{ass: 1} and \ref{ass: MwG}, the statements of Corollary~\ref{corol: picard map} and Theorem~\ref{thm: complexity rwm} hold 
with constants $c_0, \delta(d)$ as in \eqref{eq: consants MwG}, $K = \epsilon \sqrt{\frac{d}{2c_0}}$ and  $h \ge 5$. The proofs follow the same steps of the proof of  Corollary~\ref{corol: picard map} in Section~\ref{app: proof compelxity rwm} and the proof of Theorem~\ref{thm: complexity rwm} in Section~\ref{app: proof: thm 2}.

Finally,  Under Assumptions~\ref{ass: 1} and \ref{ass: MwG} the statement of Proposition~\ref{prop: approximate picard}  holds with $c_0$ as in \eqref{eq: consants MwG} and $d \ge -10\log(\epsilon r/33)$.
\end{proof}

\subsection{Proof of Proposition~\ref{prop: instant convergence ORWM isotropic Gaussians}\label{app: proof prop orwm}}
Let $\pi(x) \propto \exp(-\|x\|^2/(2\sigma^2))$ and let  $X^{(0)}_i = X_0$ for $i =0,1,\dots,K$. We prove here that $L^{(1)} = K$, for all $X_0 = x_0 \in \cX$, $K \le d$. Proposition~\ref{prop: instant convergence ORWM isotropic Gaussians}  then  follows by recursion.

For MwG, $L^{(1)} {=} K$ if, for all $w = (w_0,w_1,\dots,w_{k-1})\in \cW^{K}$,
\begin{align}\label{eq: mwg gauss 1}
   f_i(x_0, w_i) &{=} f_i(x_i, w_i) &\hbox{for all }0\le i < K   
\end{align}
where $(x_0,x_1,x_2\dots,x_K)$ is defined by the recursion $x_{i+1} = x_i + f_i(x_i, w_i)$, $f_i$  as in \eqref{eq: MwG}.

By \eqref{eq: MwG}, for all $u = (u_0,u_1,\dots,u_{K-1}) \in [0,1]^{K}$, $(z_0,z_1,\dots,z_{K-1}) \in \RR^{K}$, the condition in \eqref{eq: mwg gauss 1} is equivalent to
\begin{align}\label{eq: B prop}
 B(x_0,  u_i,  \hat z_i) &{=} B(x_i, u_i,  \hat z_i)&\hbox{for all }0\le i < K 
 \end{align}
 with $  \hat z_i = o_i  z_i$ and a given orthonormal basis $o_0,o_1,\dots,o_{d-1}$, $d \ge K$. Then, the left-hand side of \eqref{eq: B prop} is
\begin{align}
    B(x_i,  u_i,  \hat z_i) &=  B(x_0 + \sum_{\ell = 0}^{i-1}B(x_\ell,  u_\ell, \hat z_\ell) \hat z_\ell,  u_i,  \hat z_i) \nonumber \\
    &= \ind \left(\frac{\pi(x_0 + \sum_{\ell = 0}^{i-1}B(x_\ell, u_\ell,  \hat z_\ell) \hat z_\ell +  \hat z_i)}{\pi(x_0 + \sum_{\ell = 0}^{i-1}B(x_\ell,  u_\ell,  \hat z_\ell)  \hat z_\ell)} > u_i\right) \nonumber \\
    &=  \ind\left(\frac{\| \hat z_i\|^2}{2\sigma^2} +  \frac{\langle x_0 + \sum_{\ell = 0}^{i-1}B(x_\ell, u_\ell,  \hat z_\ell) \hat z_\ell,  \hat z_i\rangle}{\sigma^2} > u_i\right),\label{eq: prop eq 1}
\end{align}
while the right-hand side of \eqref{eq: B prop} is
\begin{align}
    B(x_0,  u_i, z_i)
    &=  \ind\left(\frac{\pi(x_0 +   \hat z_i)}{\pi(x_0)}> u_i\right) \nonumber\\
    &=  \ind\left(\frac{\| \hat z_i\|^2}{2\sigma^2} + \frac{\langle x_0,  \hat z_i \rangle}{\sigma^2}> u_i\right). \label{eq: prop eq 2}
\end{align}
Because $o_0,o_1,\dots,o_{d-1}$ is an orthonormal basis on $\RR^d$ we have that 
\begin{align*}
    \langle  \hat z_\ell,  \hat z_i\rangle &= z_\ell z_i\langle o_\ell ,  o_i\rangle {=} 0 &  \hbox{for all } i \ne \ell,
\end{align*}
hence, \eqref{eq: prop eq 1} and \eqref{eq: prop eq 2} are equal.
\section{Efficient Implementation of the (Approximate) Online Picard Algorithm}\label{app: pseudo-code}

Our efficient implementation is based on reformulating the Picard algorithm as a Markov chain on $K$-dimensional vectors. 

Let $(X^{(j)}, L^{(j)})_{j=0,1,\dots}$ be the sequence of random variables produced by the Online Picard recursion of \eqref{eq: online picard 1}-\eqref{eq: online picard 2}. Define $\bar{X}^{(j)}=X^{(j)}_{L^{(j)}:U^{(j)}}$ and $\bar{W}^{(j)}= W_{L^{(j)}:U^{(j)}-1}$.
\begin{proposition}\label{prop: MC picard sup}
   The sequence $(\bar{X}^{(j)},\bar{W}^{(j)})_{j=0,1,\dots}$
    is a time-homogeneous Markov chain on $(\cX^{K+1}\times \cW^K)$.
\end{proposition}

\begin{proof}
The sequence $(\bar{X}^{(j)},\bar{W}^{(j)})_{j=0,1,\dots}$ can be described by the recursion 
  \begin{align}
    \label{eq: mc online picard 1}
    \bar X^{(j+1)}_{i} &= \begin{cases}
        \Phi_{G^{(j)} + i}(\bar X^{(j)}, \bar W^{(j)}) 
    & i \le K-G^{(j)}\,,\\
     \Phi_{K}(\bar X^{(j)}\,, \bar W^{(j)}) 
    & i > K-G^{(j)}\,,\\
    \end{cases}  & i = 0,1,\dots,K\,,\\
    \label{eq: mc online picard 2}
    \bar W^{(j+1)}_i &= \begin{cases}
    \bar W^{(j)}_{G^{(j)}+i} & i \le K-G^{(j)}-1\,,\\
    \tilde W_{i,j} &  i > K-G^{(j)}-1\,, 
    \end{cases}  & i = 0,1,\dots,K-1\,,
  \end{align}
  where $\tilde W_{i,j}, \, i, j \ge 0,$ are i.i.d. random variables  with distribution $\nu$ on $\cW$,
  \begin{equation}\label{eq:def_G_j}
        G^{(j)}= G(\bar X^{(j)},\bar W^{(j)}) = \sup\{i \le K \colon \Phi_\ell(\bar X^{(j)}, \bar W^{(j)}) =  \bar X^{(j)}_\ell \text{ for }  0 \le \ell\le i\}\,,
  \end{equation}
and \eqref{eq: mc online picard 1}-\eqref{eq: mc online picard 2} define the one-step Markov transition kernel of $(\bar{X}^{(j)},\bar{W}^{(j)})_{j=0,1,\dots}$.  
\end{proof}
Proposition~\ref{prop: MC picard sup} sheds lights to the underlying (Markov) structure of Online Picard recursion and is used in  Algorithm~\ref{alg: OPA sleek} for a memory-wise and computationally efficient implementation  of the algorithm that iterates over a vector of length $K$. The Approximate Online Picard algorithm is identical to Algorithm~\ref{alg: OPA sleek} except for line 5, which is replaced by
\[
G = \sup\{1 \le i \le K \colon \bar \cA_\ell \le r \text{ for all } \ell \le i\}, 
\]
where
\[
\bar \cA_\ell = \frac{|\{0 \le s < \ell \colon f(\bar X_{s},  \bar W_{s}) \ne f(\bar X^c_{s},  \bar W_{s})\}|}{\ell}\,.
\]

\begin{algorithm}[!h]
\caption{Online Picard algorithm}  \label{alg: OPA sleek}
\KwIn{$N, K \in \NN$, $X_{0} \in \cX$.}
 Initialize $\bar X^{c}_i = \bar X_i = X_0$, for $i = 0,1,\dots,K$\;
 Set $L =0$\;
 Set $\bar W_0,\bar W_1,\dots,\bar W_{K-1} \overset{\text{i.i.d}}{\sim}\nu$ \;
\While{$L < N$}{
$\bar X = \Phi(\bar X^c, \bar W)$\;
   $G = \sup\{i \le K \colon \bar X_\ell =  \bar X^{c}_\ell \text{ for }  0 \le \ell\le i\}$\;
   $\bar X^c_0 = \bar X_G$ \;
   \For{$i = 0,1,\dots, K-1$}{
   \eIf{$i \le K - G - 1$}{
    $\bar X^{c}_{i+1} =\bar X_{G+i+1}$\;
    $\bar W_i = \bar W_{G + i}$\;
   }{
   $\bar X^{c}_{i+1} = \bar X_K$\;
   $\bar W_i \sim \nu $ \;
   }
   }
   $L = L + G$\;
   }
\KwOut{$\bar X_{G - (L-N)}$.}
\end{algorithm}

\section{Numerical simulations for MwG}\label{sup: simulations mwg}
 We run  the same numerical experiments as in Section~\ref{sec: high-dimensional regressions} for the (Approximate) Online Picard algorithms applied to Metropolis within Gibbs was applied. Figure~\ref{fig: mwg} summaries the results. $d, N, K$ and $r$ are as in Section~\ref{sec: high-dimensional regressions}.

\begin{figure}[!ht]
    \centering
    \includegraphics[width=0.8\linewidth] {lin_mwg.pdf}
    \includegraphics[width=0.8\linewidth] {log_mwg}
    \includegraphics[width=0.8\linewidth] {poisson_mwg.pdf}
    \caption{Performance of Online Picard algorithm ($\bar X$) and its approximate versions ($\bar X_r, \, r =5\%,\dots,20\%$) applied to MwG, with target being the linear regression model \textbf{E1} (top panels), logistic regression model \textbf{E2} (middle panels) and Poisson regression model \textbf{E3} (bottom panels).} 
    \label{fig: mwg}
\end{figure}

\section{Definition of Effective Sample Size}\label{sec: ESS}
The Effective Sample Size (ESS) of a test function $f \colon \cX \to \RR$ is defined as
\begin{equation}
    \label{eq: ess}
    \text{ESS}(f) = \frac{1}{1 + 2 \sum_{i=1}^\infty \rho_i(f)}
\end{equation}
where $\rho_i(f)$ is the lag-$i$ autocorrelation function of $\{f(X_i)\}_{i=1,2,\dots}$. 
The Online Picard algorithm simulates a trajectory $(X_0,\dots,X_N)$ of the original Markov chain with $N/\hat{G}$ parallel iterations, resulting in an increase of ESS relative to the original sequential implementation of 
$\hat G$.

\section{Practical algorithmic aspects}\label{sec: practical algorithmic aspects}
\subsection{Parallel Architecture of Picard algorithms}\label{sec: parallel architectures}
Here we discuss how the main processor interacts with the $K$ parallel processors at each iteration of the Picard map in \eqref{eq: picard recursion 0}. Our discussion focuses on RWM, but the same considerations apply to MwG.

Recall that, by Assumption~\ref{ass: 1.2}, a RWM increment is given by 
\begin{equation}\label{eq: rwm increment}
    f(x, z, u) = \ind\left(\frac{\pi(x+z)}{\pi(x)}> u\right).
\end{equation}
Therefore, a naive implementation of the Picard map in \eqref{eq: picard recursion 0} requires each processor to evaluate $\pi$ twice to compute the increment. This contrasts with an efficient sequential implementation of RWM, which requires a single evaluation of $\pi$ per iteration (Algorithm~\ref{alg:sequential rwm}).

\begin{algorithm}[!h]
\caption{One iteration of the sequential implementation of RWM\label{alg:sequential rwm}}
\KwIn{Position $x\in \RR^d$, evaluation of $\pi(x)$, Gaussian noise $Z$.} 
Compute $\pi(x + Z)$\;
With probability $\min(1, \pi(x + Z)/\pi(x))$ set $x' = x + Z$, otherwise set $x' = x$\;
\KwOut{$x'$, $\pi(x')$.}
\end{algorithm}
However, a careful look at \eqref{eq: picard recursion 0} and \eqref{eq: rwm increment} reveals a more efficient recursion for the Picard map, in which each parallel processor requires only a single evaluation of $\pi$ (Algorithm~\ref{alg: picard map sleek}). \begin{algorithm}[!h]
\caption{Efficient implementation of the Picard map $\Phi$ in \eqref{eq: picard recursion 0} for RWM\label{alg: picard map sleek}}
\KwIn{$K \in \NN$, $X^{(i-1)}_{0:K}$, $W_{0:K-1}$, $\pi(X^{(i-1)}_0)$, $B^{(i-1)}_j := \ind(\pi(X^{(i-1)}_j + Z_j)/\pi(X^{(i-1)}_j) > U_j), \, j = 0,1,\dots K-1$.}
 \textbf{Compute in parallel:} $\pi(X^{(i-1)}_j + Z_j)$ for $j = 0,1,\dots,K-1$\;
 Set $B^{(i)}_0 = \ind(\pi(X^{(i-1)}_0 + Z_0)/\pi(X^{(i-1)}_0) > U_0)$\;
\For{$j = 1,2,\dots, K-1$}{
    \eIf{$B^{^{(i-1)}}_{j-1} = 0$}{
    Set $\pi(X_{j}^{(i-1)}) = \pi(X_{j-1}^{(i-1)})$\;
        }{
        Set $\pi(X_{j}^{(i-1)}) = \pi(X^{(i-1)}_{j} + Z_{j})$\;
        }
        Set $B^{(i)}_j = \ind(\pi(X^{(i-1)}_j + Z_j)/\pi(X^{(i-1)}_j) > U_0)$\;
        Set $X^{(i)}_j = X^{(i)}_{j-1} + B^{(i)}_{j-1}$\;
    }
    Set $X^{(i)}_K = X^{(i)}_{K-1} + B^{(i)}_{K-1}$\;
\KwOut{$\pi(X_0^{(i)})$, $X^{(i)}_{0:K}$ $B^{(i)}_{0:K-1}$.}
\end{algorithm}

Note that, the parallel architecture used in Algorithm~\ref{alg: picard map sleek} is comparable to that used in parallel implementations of Multiple-try schemes \citep{glatt2024parallel}: in both cases, the input is broadcast to $K$ processors, which evaluate $\pi$ at different locations in parallel; see Figure~\ref{fig:parallel_architecture} for an illustration.
\begin{figure}[h!]
    \centering
    \includegraphics[width=0.65\linewidth]{Fig/paral_architecture.pdf}
    \caption{Diagram of one parallel iteration of the Picard map in Algorithm~\ref{alg: picard map sleek}: the input is sent to $K$ parallel processors $P_1,\dots,P_K$, each evaluating $\pi$ at a different location. A similar parallel architecture is used for parallel implementations of Multiple-try.}
    \label{fig:parallel_architecture}
\end{figure}

\subsection{Parallelization overhead and effective speedup}\label{sec:effective speed-up}
To assess the effective speedup of Picard algorithms relative to their sequential counterparts, one must account for the wall-clock time of each algorithm. For a given algorithm, this mainly depends on (i) mixing of the Markov chain, as summarized by ESS in \eqref{eq: ess}; (ii) the computational cost of each evaluation of $\pi$, denoted here by $c$; and, only for parallel algorithms, (iii) the parallelization overhead $\epsilon$ incurred by the parallel architecture used at each iteration. In this section we discuss how these factors affect the observed effective speedup.

We quantify efficiency via ESS per unit wall-clock time, i.e.  $\mathrm{ESS}_{\mathrm{seq}} / T_{\mathrm{seq}}$ and  $\mathrm{ESS}_{\mathrm{pic}} / T_{\mathrm{pic}}$ respectively for the sequential implementation and the online Picard algorithm.
By the discussion in Section~\ref{sec: parallel architectures}, we let $T_{\mathrm{seq}} \approx N \times c$ and $T_{\mathrm{pic}} \approx  M_N \times (c + \epsilon)$, where $M_N$ is the number of parallel Picard iterations needed to simulate $N$ increments of the Markov chain. The \emph{effective speedup} of the Picard algorithm relative to the sequential algorithm is given by:
\begin{equation}\label{eq: observed speedup}
    \frac{\left( \mathrm{ESS}_{\mathrm{pic}}/T_{\mathrm{pic}}\right)}{\left(\mathrm{ESS}_{\mathrm{seq}} /T_{\mathrm{seq}}\right)} = \frac{T_{\mathrm{seq}}}{T_{\mathrm{pic}}} \approx \frac{N}{M_N}\left(1 + \frac{\epsilon}{c}\right)^{-1}  \approx \hat G\left(1 + \frac{\epsilon}{c}\right)^{-1}.
\end{equation}
Here we used $\mathrm{ESS}_{\mathrm{pic}}=\mathrm{ESS}_{\mathrm{seq}}$, since both methods output the same trajectory. For instance, in our application in precision medicine (Section~\ref{sec:blackbox}), $c\approx \epsilon$, i.e., parallelization overhead equals the cost of evaluating $\pi$, and the effective speedup is approximately half the speedup $\hat G$. Note that, by Theorem~\ref{thm: complexity rwm}, we have that $\hat G = C \sqrt{d}$, when $K \ge  \sqrt{d}$, for some $C>0$. Therefore, by \eqref{eq: observed speedup}, achieving an observed speedup of at least $S$ (with $S\le C\sqrt{d}$) requires the parallelization latency $\epsilon$ to be smaller than a fraction $(C\sqrt{d}/S-1)$ of $c$. This condition can be checked for a given application and depends on the dimensionality of the problem, the target-evaluation cost $c$ and the parallelization overhead $\epsilon$, whose value is strongly influenced by the specific parallel architecture and hardware employed. For example, \citet{glatt2024parallel} notes that, for Multiple-try, a GPU implementation with $K = 10^6$ has roughly the same parallelization overhead of an equivalent implementation on a CPU with only $K= 10^2$.